\newcommand{\defeq}{\coloneqq}
\newcommand{\Reals}{\mathbb{R}}
\newcommand{\Naturals}{\mathbb{N}}
\newcommand{\etavec}{\boldsymbol{\eta}}
\newcommand{\xvec}{\boldsymbol{x}}
\newcommand{\dvec}{\boldsymbol{d}}
\newcommand{\Amat}{\boldsymbol{A}}
\newcommand{\Mmat}{\boldsymbol{M}}
\newcommand{\lambdavec}{\boldsymbol{\lambda}}
\newcommand{\argmax}{\operatornamewithlimits{argmax}}
\newcommand{\N}{\mathcal{N}}
\newcommand{\Si}{\mathcal{S}}
\newcommand{\svec}{\boldsymbol{s}}
\newcommand{\xivec}{\boldsymbol{\xi}}
\newcommand{\cvec}{\boldsymbol{c}}
\newcommand{\pivec}{\boldsymbol{\pi}}
\newcommand{\rvec}{\boldsymbol{r}}
\newcommand{\X}{\mathcal{X}}
\newcommand{\A}{\mathcal{A}}
\newcommand{\C}{\mathcal{C}}
\newcommand{\LP}{\text{LP}}
\newcommand{\full}{\textsc{f}}
\newcommand{\bandit}{\textsc{b}}
\newcommand{\E}{\mathcal{E}}
\let\mc\mathcal
\newcommand{\med}{\textsc{r}}
\newcommand{\ag}{\textsc{s}}
\newcommand{\ia}{\textsc{x}}
\newcommand{\ib}{\textsc{y}}
\definecolor{mygreen}{rgb}{0.0, 0.5, 0.0}
\newcommand{\commentsymbol}{\it\color{gray}$\triangleright$~}
\newcommand{\LComment}[1]{{\commentsymbol{#1}}}
\newcommand{\Statei}{\Statex \hskip\ALG@thistlm}
\theoremstyle{plain}
\newtheorem{theorem}{Theorem}[section]
\newtheorem{lemma}{Lemma}[section]
\newtheorem{corollary}[theorem]{Corollary}
\theoremstyle{definition}
\newtheorem{definition}{Definition}[section]
\newtheorem{assumption}{Assumption}
\theoremstyle{remark}
\icmltitlerunning{Online Mechanism Design for Information Acquisition}
\begin{document}

\twocolumn[
\icmltitle{Online Mechanism Design for Information Acquisition}


\icmlsetsymbol{equal}{*}

\begin{icmlauthorlist}
\icmlauthor{Federico Cacciamani}{poli}
\icmlauthor{Matteo Castiglioni}{poli}
\icmlauthor{Nicola Gatti}{poli}
\end{icmlauthorlist}

\icmlaffiliation{poli}{Politecnico di Milano, Milan, Italy}

\icmlcorrespondingauthor{Federico Cacciamani}{federico.cacciamani@polimi.it}

\icmlkeywords{Machine Learning, ICML}

\vskip 0.3in
]



\printAffiliationsAndNotice{}  

\begin{abstract}
	We study the problem of designing mechanisms for \emph{information acquisition} scenarios. This setting models strategic interactions between an uninformed \emph{receiver} and a set of informed \emph{senders}. In our model the senders receive information about the underlying state of nature and communicate their observation (either truthfully or not) to the receiver, which, based on this information, selects an action. Our goal is to design mechanisms maximizing the receiver's utility while incentivizing the senders to report truthfully their information. 
	First, we provide an algorithm that efficiently computes an optimal \emph{incentive compatible} (IC) mechanism. 
	Then, we focus on the \emph{online} problem in which the receiver sequentially interacts in an unknown game, with the objective of minimizing the \emph{cumulative regret} w.r.t. the optimal IC mechanism, and the \emph{cumulative violation} of the incentive compatibility constraints. We investigate two different online scenarios, \emph{i.e.,} the \emph{full} and \emph{bandit feedback} settings. For the full feedback problem, we propose an algorithm that guarantees $\tilde{\mc O}(\sqrt T)$ regret and violation, while for the bandit feedback setting we present an algorithm that attains $\tilde{\mc O}(T^{\alpha})$ regret and $\tilde{\mc O}(T^{1-\alpha/2})$ violation for any $\alpha\in[1/2, 1]$. Finally, we complement our results providing a tight lower bound.
\end{abstract}
\section{Introduction}\label{sec:intro}
Information plays a crucial role in every decision-making process.
The study of how to exploit information to shape the behavior of other self-interested agents has received a terrific attention in recent years under the Bayesian persuasion framework, with application to online advertising~\citep{bro2012send}, voting~\citep{alonso2016persuading,castiglioni2019persuading,semipublic}, traffic routing~\citep{bhaskar2016hardness,castiglioni2020signaling}, and  security~\citep{rabinovich2015information,xu2016signaling}.
Bayesian persuasion~\citep{kamenica2011bayesian} studies how an informed agent (sender) can exploit an information advantage to influence the behavior of an uninformed agent (receiver).
In particular, the sender can commit to an information-disclosure policy (signaling scheme) in order to persuade the receiver to play an action that is desirable for the sender.
In this work we study the inverse problem faced by a receiver that wants to incentivize multiple senders to share their private information about a common state of nature. Surprisingly, while many real-world scenarios are affected by an increasing decentralization of information, the problem of strategically acquiring such information has received little attention from the scientific community. In particular, we pose the question:
\begin{quote}
\emph{Can a receiver with commitment power induce  senders to reveal their information truthfully?}
\end{quote}
In our model, there are multiple senders that receive noisy information (signals) about a common state of nature. Then, each sender strategically reports a signal to the receiver, which chooses an action that depends on the signals reported by the senders.
This action is drawn from a randomized mechanism to which the receiver commits and depends on the signals reported by the senders.
Finally, the senders and the receiver receive an utility that depends on the state of nature and the action played by the receiver.
In this work, we study the problem of designing an (approximately) incentive compatible mechanism that maximizes the sender's utility.
Moreover, we consider an \emph{online} scenario in which the receiver has partial or no information about the parameters of the game. Our goal is to design algorithms that  repeatedly interact with the senders guaranteeing sublinear regret and violations of the incentive compatibility constraints.

\subsection{Original Contribution}
We study the design of Incentive Compatible (IC) mechanisms for the information acquisition problem, focusing on the case in which the senders are \emph{symmetric}, \emph{i.e.}, they receive signals sampled by a common signaling scheme.
First, we show that, under a mild assumption, the optimal mechanism can be computed efficiently. 
In doing so, we provide a class of succinctly representable \emph{symmetric} mechanisms.
This is a non-trivial task since  a general mechanism should specify a distribution over actions for each possible tuple of signals, which are exponentially-many.
Then, we study the online problem in which the receiver does not know the parameters of the game and need to learn them during a repeated interaction with the senders.
The uncertainty on the game parameters comes in two flavors: \emph{full} and \emph{bandit} feedback. 
In the full feedback setting we assume that the receiver has partial knowledge of the game and, at the end of each iteration, can observe the state of nature that was sampled. In bandit feedback scenario, instead  we drop the assumption on the receiver's knowledge of the game and study the problem in which the feedback received by the receiver is restricted to the utility values.
We first present a negative result showing the impossibility of developing an algorithm that guarantees no violation of the IC constraints and sublinear regret on the receiver's utility with respect to the optimal IC mechanism. Thus, we relax our requirements to sublinear regret and sublinear IC violation.
%
%
In the full feedback setting, we provide an efficient online algorithm that attains $\tilde{\mc O}(\sqrt{T})$ regret and constraint violation.
We show that achieving the same guarantees is not possible under bandit feedback for which we provide a lower bound frontier on the trade-off between regret and constraint violation. 
This lower bound suggests that optimal algorithm can work in two phases: exploration and exploitation phase.
Indeed, we show that our algorithm achieves $\tilde {\mathcal{O}}(T^{\alpha})$ regret and $\tilde {\mathcal{O}}(T^{1- \alpha/2})$ violation of the IC constraints for any $\alpha\in [1/2,1]$, matching the lower bound frontier.

\subsection{Related Works}
The study of information acquisition has been mostly confined to economics, with particular focus on auctions (see, \emph{e.g.,} \citep{bergemann2002information, bikhchandani2017mechanism, persico2000information}). \citet{stegeman1996participation} studies the problem of acquiring information from a set of buyers in an auction with costly communication, while \citet{shi2012optimal} investigates a setting in which the buyers pay a cost to acquire information. Recently, some works addressed the computational problem of information acquisition. More in particular, a line of work \citet{chen2021optimal,oesterheld2020minimum,papireddygari2022contracts,li2022optimization,neyman2021binary} studies the problem of designing optimal scoring rules to incentivize an agent to acquire and report costly information. Crucially, these works differ from our work in many aspects. For instance, we study how to incentivize agents only by inducing favorable outcomes of the game (\emph{i.e.,} actions), while previous works rely on payments. Moreover, to the best of our knowledge, this paper is the first work addressing the computational problem of information acquisition with multiple agents. 

Our work is also related to the study of Bayesian persuasion in online settings.
Some works study the learning problems in which the receivers’ payoffs are unknown ( see, e.g., \cite{castiglioni2021Online, castiglioni2020online, castiglioni2023regret}).
Other works study Bayesian persuasion in MDPs.
\citet{gan2021Bayesian} show how to efficiently find a sender-optimal policy when the receiver is myopic.  
\citet{Wu2022Sequential} extend the work considering an unknown environment. 
The works closest to ours consider online problems in which the agents do not know the prior over the states of nature.
\citet{DBLP:conf/sigecom/ZuIX21} studies a persuasion problem in which the sender and the receiver do not know the prior and interact online.
\citet{bernasconisequential} extend the analysis to sequential games.

\section{Preliminaries}\label{sec:prelim}
In this section we present the model of interaction between the different agents involved and introduce the main solution concepts on which we rely. 

\paragraph{Game model.} We study the case in which an agent $\med$ (called \emph{receiver}) interacts with a set of agents\footnote{In this work, given $n\in\mathbb{N}_{>0}$, we denote as $[n] = \left\{1, ...., n\right\}$ the set of the first $n$ natural numbers.} ${\N = [n]}$ (called \emph{senders}). At the beginning of the game, a \emph{state of nature} $\theta\in\Theta$ is sampled from a \emph{prior}\footnote{In this work, for any finite set $\mc Z$, we denote as $\Delta(\mc Z)$ the set of probability distributions defined over the elements of $\mc Z$.} $p\in\Delta(\Theta)$, where $\Theta$ is the set of possible states of nature. Neither the receiver, nor the senders, observe the state of nature. Instead, each sender ${i\in\N}$ observes a signal $s_i$ from a specific set of signals $S$. The signal $s_i$ is drawn independently for each $i\in\N$ from a probability distribution dependent on the state of nature, which is specified by the \emph{signaling scheme} $\psi_\ag: \Theta\to\Delta(S)$ that is shared among all the senders. We write $\psi_\ag(s_i\vert\theta)$ to denote the probability with which, when the state of nature is $\theta$, signal $s_i\in S$ is sampled. After observing the signal $s_i$, each sender $i\in\mc N$ communicates some signal $s_i^\prime$ (note that it might be the case that $s_i^\prime\neq s_i$) to $\med$, which, based on the \emph{signal profile} $\svec^\prime = \left(s_1^\prime,....,s_n^\prime\right)$ received, selects an action from a finite set $\mc A$. A mechanism $\xvec$ for the receiver defines a mapping $\xvec: \mc S\to\Delta(\mc A)$ from signal profiles $\svec\in\Si\defeq \times_{i\in\mc N} S$, to probability distributions defined over the set of actions. The set of possible mechanisms can be defined as the polytope $\X$ composed of vectors\footnote{We denote vectors with bold symbols. For any finite set $D$, $\boldsymbol{v}\in\mathbb{R}^{|D|}$ denotes a $|D|$-dimensional vector indexed over $D$, where for each $d\in D$, $v[d]$ is the value of $\boldsymbol{v}$'s component corresponding to $d\in D$.} indexed over pairs $(\svec, a)\in \Si\times \A$ such that\footnote{We drop the parentheses and denote $x[(s,a)]$ as $x[s,a]$.}
\[
	\X \defeq \left\{\xvec\in\Reals_{\geq 0}^{|\Si||\A|}\mid \sum_{a\in\A} x[\svec, a] = 1 \quad \forall \svec\in\Si \right\}.
\] 
Given a signal profile $\svec\in\Si$, we denote as $\svec_{-i}$ the signal profile obtained from $\svec$ by removing the $i$-th element $s_i$. Similarly, we denote as $\left(s_i^\prime, \svec_{-i}\right)$ the signal profile obtained from $\svec$ by substituting signal $s_i$ with signal $s_i^\prime$.
After the receiver's choice of an action $a\in\A$, she receives a payoff $u_\med(a, \theta)\in [0, 1]$, while each sender $i\in\mc N$ receives an equal payoff $u_i(a, \theta) = u_\ag(a, \theta)\in [0, 1]$.
Throughout this work, we will make the following assumption:
\begin{assumption}
	The number of signals $|S|$ is a constant.\label{ass:const}
\end{assumption}
Assumption~\ref{ass:const} is required to represent succinctly the set of possible mechanisms. Indeed, in general a mechanism must specify an action distribution for each signal profile in $\Si$, which are exponentially many. Such mechanisms has an exponentially-large representation and cannot be represented efficiently.
In the following Section, we show that thanks to the symmetry among the senders, symmetric mechanisms, \emph{i.e.}, which recommend the same action distribution for symmetric signal profiles, are optimal.
However, representing symmetric mechanisms requires space exponential in $|S|$. Our assumption is required to efficiently represent symmetric mechanisms.
It is not clear whether there exists a class of optimal mechanisms that can be represented in polynomial space when $|S|$ is not constant.

\paragraph{Incentive compatibility.} 
In this work, we are interested in characterizing the subset of receiver's mechanisms such that all the senders are incentivized to report truthfully the information they have (\emph{i.e.,} the signal they observed). To this extent, we consider a finite set of \emph{deviation functions} $\Phi$ for the senders, where each $\varphi\in\Phi$ defines a mapping $\varphi: S\to S$ that specifies the signal that the sender is going to report to $\med$, following every possible signal observation. 
We are interested in characterizing the set of receiver's mechanisms -- denoted as \emph{Incentive Compatible} (IC) mechanisms -- that are stable with respect to unilateral deviations of senders in reporting the information they have. More in detail, we require the expected utility of each sender reporting truthfully their information not to be smaller than the expected utility she would get following any deviation function $\varphi\in\Phi$. 
When $\med$ uses mechanism $\xvec\in\X$ and senders report truthfully the signals they observe, the expected utility that each agent $i\in\mc N\cup\left\{\med\right\}$ gets is defined as:
\[
U_i(\xvec) \defeq \sum_{\substack{\svec\in\Si\\ a\in\A}} x[\svec, a] \sum_{\theta\in\Theta} p(\theta) u_i(a, \theta) \psi_\ag(\svec\vert\theta),
\]
where, with an overload of the notation, we denoted ${\psi_{\ag}(\svec\vert\theta) = \prod_{j\in\mc N}\psi_\ag(s_j\vert\theta)}$.
Similarly, the expected utility of sender $i\in\mc N$, when she deviates following deviation function $\varphi\in\Phi$ and all the others behave honestly is the following: 
\[
		U_i^\varphi(\xvec) \defeq \sum_{\substack{\svec\in\Si\\ a\in\A}} x[(\varphi(s_i),\svec_{-i}), a] \sum_{\theta\in\Theta} p(\theta) u_i(a, \theta) \psi_\ag(\svec\vert\theta).
\]
We are now ready to formally define the set of IC mechanisms.
\begin{definition}[$\varepsilon$-IC mechanisms]
	For $\varepsilon\geq 0$, the set of $\varepsilon$-IC mechanisms is defined as the polytope
	\[
		\X_\varepsilon \defeq \left\{\xvec\in\X\mid U_i^\varphi(\xvec) - U_i(\xvec) \leq \varepsilon\,\,\forall i\in\mc N,\,\forall \varphi\in\Phi\right\}.
	\]
	The set of IC mechanisms is the set $\X_0$.
\end{definition}

\paragraph{\emph{Ex-ante} and \emph{ex-interim} deviations.} 
The set of deviations $\Phi$ can be suitably specified in order to capture different concepts of incentive compatibility. In this work, we are interested in two particular types of incentive compatibility, namely \emph{ex-ante} incentive compatibility and \emph{ex-interim} incentive compatibility. In the case of ex-ante incentive compatibility, each sender can choose to deviate \emph{only before} observing the signal, thus she is not able to discriminate between the different signals actually sampled when choosing which signal to report. Hence, in this case, the set of possible deviations coincides with the set ${\Phi^{ante}\defeq\left\{\varphi\mid \varphi: S\to S, \varphi(s) = \varphi(s^\prime)\,\,\forall s, s^\prime\in S\right\}}$. Instead, when considering the notion of ex-interim incentive compatibility, the senders first observe the signal sampled, and then decide which signal to report to $\med$. Thus, the set of possible deviations for the ex-interim case can be formulated as ${\Phi^{inter}\defeq\left\{\varphi\mid\varphi\in S\to S\right\}}$. Crucially, while the number of ex-ante deviation functions is $|\Phi^{ante}| = |S|$, the number of possible ex-interim deviation functions is significantly larger, namely $|\Phi^{inter}| = |S|^{|S|}$. However, when the objective is to guarantee ex-interim incentive compatibility, similarly to \cite{greenwald2011no}, it is possible to show the following\footnote{All the proofs are provided in the appendix.}:
\begin{restatable}{proposition}{propgreen}\label{prop:dev}
	For all $s \in S$ let $\overline\Phi(s)$ be the set of deviation functions such that
	\[
	\overline\Phi(s)\defeq\left\{
	\varphi\mid\varphi\in S\to S, \varphi(s^\prime) = s^\prime,\quad\forall s^\prime\in S\setminus \{s\}
	\right\}. 
	\]
	Then, for all $\xvec\in\X$ and $\varepsilon\geq 0$ , if $\xvec$ is $\varepsilon$-IC with respect to the set of deviation functions $\overline\Phi\defeq \cup_{s\in S}\overline\Phi(s)$, then it is also $|S|\varepsilon$-IC with respect to deviation functions $\Phi^{inter}$.
\end{restatable}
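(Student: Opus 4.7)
The plan is to decompose an arbitrary deviation $\varphi \in \Phi^{inter}$ into $|S|$ single-signal deviations in $\overline\Phi$ and exploit the fact that the utility functional $U_i^{\varphi}(\vec x)$ is \emph{additively separable} over the observed signal of sender $i$.

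First, fix $\vec x \in \X$ and $i \in \mc N$. Observe that, isolating the sum over $s_i$, we can write
\[
U_i^\varphi(\vec x) - U_i(\vec x) = \sum_{s\in S} \sum_{\svec_{-i}} \Bigl( g(\varphi(s), \svec_{-i}, s) - g(s, \svec_{-i}, s) \Bigr),
\]
where $g(t, \svec_{-i}, s) \defeq \sum_{a\in\A} x[(t, \svec_{-i}), a] \sum_{\theta\in\Theta} p(\theta)\, u_i(a,\theta)\, \psi_\ag((s,\svec_{-i})\mid \theta)$. The key point is that the term indexed by $s$ in the outer sum depends on $\varphi$ only through its value $\varphi(s)$, so different signals contribute independently.

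Next, for each $s \in S$, I associate to $\varphi$ the deviation $\varphi_s \in \overline\Phi(s)$ defined by $\varphi_s(s) = \varphi(s)$ and $\varphi_s(s') = s'$ for all $s' \neq s$. Applying the same isolation of the sum over $s_i$ to $U_i^{\varphi_s}(\vec x) - U_i(\vec x)$ shows that only the term with $s_i = s$ survives, yielding
\[
U_i^{\varphi_s}(\vec x) - U_i(\vec x) = \sum_{\svec_{-i}} \Bigl( g(\varphi(s), \svec_{-i}, s) - g(s, \svec_{-i}, s) \Bigr).
\]
Summing over $s \in S$ and comparing with the previous display gives the identity
\[
U_i^\varphi(\vec x) - U_i(\vec x) = \sum_{s\in S} \bigl( U_i^{\varphi_s}(\vec x) - U_i(\vec x) \bigr).
\]

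Finally, since $\varphi_s \in \overline\Phi(s) \subseteq \overline\Phi$ and $\vec x$ is by hypothesis $\varepsilon$-IC with respect to $\overline\Phi$, each of the $|S|$ summands on the right-hand side is at most $\varepsilon$, so the left-hand side is at most $|S|\varepsilon$. As this holds for every $i \in \mc N$ and every $\varphi \in \Phi^{inter}$, the mechanism $\vec x$ is $|S|\varepsilon$-IC with respect to $\Phi^{inter}$, as required. The only non-routine step is recognizing the additive separability in the observed signal of sender $i$; once that is in hand, the bound follows by a clean telescoping-style sum, and no concentration or averaging argument is needed.
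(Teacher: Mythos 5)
Your proof is correct and is essentially the same argument as the paper's: the identity $U_i^\varphi(\xvec) - U_i(\xvec) = \sum_{s\in S}\bigl(U_i^{\varphi_s}(\xvec) - U_i(\xvec)\bigr)$ that you derive via additive separability over the observed signal is exactly what the paper encodes in matrix form as $\Mmat^{\varphi} = \sum_{s\in S}\Mmat^{s\to\varphi(s)} + (1-|S|)\Mmat^I$, followed by the same bound of each summand by $\varepsilon$.
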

Since the number of deviation functions in  ${\overline\Phi = \cup_{s\in S}\overline\Phi(s)}$ is bounded by $|\overline\Phi| = |S|^2 - |S| + 1$, as a byproduct of Proposition \ref{prop:dev}, we have that it is possible to efficiently represent the needed deviation functions also when the incentive compatibility concept of interest is the ex-interim one. 

Throughout the rest of this paper, we will analyze the problem of information acquisition for a general incentive compatibility concept and thus for a general set of deviation functions $\Phi$. The results that we obtain can be easily adapted to capture both ex-ante and ex-interim incentive compatibility by instantiating the set $\Phi$ to $\Phi^{ante}$ and $\overline\Phi$, respectively.

\section{Offline Information Acquisition}\label{sec:compact}

We are interested in cases in which the objective of the receiver is twofold: (i) on the one hand she is interested in incentivizing the senders to report truthfully their private information, (ii) while on the other hand she is interested in maximizing her expected utility when all the senders behave truthfully. Formally, such a twofold objective can be modeled in terms of a linear optimization problem where the feasibility set is restricted to IC mechanisms (i), and the objective is the maximization of the receiver's expected utility $U_\med$ (ii).  Formally,
\begin{definition}[Optimal $\varepsilon$-IC mechanism]
	An optimal $\varepsilon$-IC mechanism $\xvec^\star$ is defined as:
	\[
	\xvec^\star\in \argmax_{\xvec\in\X_\varepsilon} U_\med\left(\xvec\right).
	\] 
	An optimal IC mechanism is an optimal $0$-IC mechanism.
\end{definition}

In this Section, we investigate the problem of computing an optimal IC mechanism when all the game parameters (\emph{i.e.,} the utility functions, the prior and the signaling scheme) are known. The first problem that we investigate concerns the representation of the polytope $\X$. The trivial way of representing such space would require for each vector $\xvec\in\X$ a number of entries exponential in the number of agents $n$, since the number of possible signal profiles is $|\Si| = |S|^n$. We tackle this problem by showing that, when the number of signals $|S|$ is a constant, in order to compute an optimal IC mechanism, it is possible to restrict our attention to a subset of $\X$ that can be represented succinctly.

\subsection{Compact Formulation of the Set of Mechanisms}

The rationale behind our construction exploits the fact that, since the senders are symmetric (\emph{i.e.,} they share the same utility functions, signal spaces and signaling schemes), it is possible to safely aggregate different signal profiles, thus reducing the number of variables needed to describe the mechanisms' polytope.
In order to characterize this notion of equivalence between signal profiles, we introduce the concept of \emph{symmetric signal profiles}\footnote{In this work, $\mathbbm{1}[\cdot]$ denotes  the indicator function.}.
\begin{definition}[Symmetric signal profiles]
	Two signal profiles $\svec, \svec^\prime\in\Si$ are \emph{symmetric} (in symbols $\svec \bowtie \svec^\prime$) if 
	\[
		\sum_{i\in\mc N} \mathbbm{1}\left[s_i = s\right] = \sum_{i\in\mc N} \mathbbm{1}\left[s^\prime_i = s \right]\quad\forall s\in S.
	\]
\end{definition}
In other words, two signal profiles are symmetric if each signal $s \in S$ occurs the same number of times in both signal profiles.
The $\bowtie$ relationship defines a partition of elements in $\Si$. 
In particular, we can define a partition ${\C = \left\{c_1, ..., c_K\right\}}$ of $\Si$ such that for each $c_i\in\C$, $\svec,\svec^\prime\in c_i$ if and only if $\svec\bowtie\svec^\prime$. 
Given a signal profile $\svec\in\Si$, we denote with $c(\svec)\in\C$ the element in $\C$ such that $\svec\in c(\svec)$.
Moreover, the partition $\C$ can be used to define the set of \emph{symmetric mechanisms}, which contains all the mechanisms that associate the same probability distribution over actions to signal profiles belonging to the same class\footnote{Each element $c\in\C$ will be also called \emph{class}.}. Formally, 
\begin{definition}[Symmetric mechanisms]	\label{def:symm_mech}
	The set of symmetric mechanisms is defined as the set $\X^\circ\subseteq\X$ such that:
	\[
	\X^\circ \defeq\left\{\xvec\in\X\mid x[\svec, a] = x[\svec^\prime, a]\quad\forall \svec\bowtie\svec^\prime,\,\forall a\in\A\right\}.
	\]
\end{definition}  
The following Theorem provides a key result concerning the optimality of symmetric mechanisms.
\begin{restatable}{theorem}{thsymmetry}	\label{th:thsymmetry}
	For any $\varepsilon\geq 0$, there exists a symmetric mechanism $\xvec^\circ\in\X^\circ$ such that 
	\[
		\xvec^\circ\in \argmax_{\xvec\in\X_\varepsilon} U_\med(\xvec).
	\]
\end{restatable}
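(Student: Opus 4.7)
The plan is to symmetrize an arbitrary optimal $\varepsilon$-IC mechanism by averaging over permutations of the $n$ senders. Fix $\vec{x}^\star \in \argmax_{\vec{x}\in\X_\varepsilon} U_\med(\vec{x})$ and, for every permutation $\pi$ of $\N$, define the permuted mechanism $\vec{x}^\pi \in \X$ via $x^\pi[\vec{s},a] \defeq x^\star[\pi(\vec{s}),a]$, where $\pi(\vec{s}) = (s_{\pi(1)},\dots,s_{\pi(n)})$. Then set $\vec{x}^\circ \defeq \frac{1}{n!}\sum_\pi \vec{x}^\pi$. The theorem follows once I verify (i) each $\vec{x}^\pi$ gives the same receiver utility as $\vec{x}^\star$, (ii) each $\vec{x}^\pi$ still lies in $\X_\varepsilon$, and (iii) the resulting average $\vec{x}^\circ$ is symmetric, i.e., $\vec{x}^\circ \in \X^\circ$.

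For (i) and (ii), the crucial observation is that senders are fully symmetric: they share the same signaling scheme $\psi_\ag$ and the same utility $u_\ag$. Because the joint scheme factorizes as $\psi_\ag(\vec{s}|\theta)=\prod_j \psi_\ag(s_j|\theta)$, it is invariant under permutations of its components, and the change of variable $\vec{t}=\pi(\vec{s})$ in the sum defining $U_\med(\vec{x}^\pi)$ yields $U_\med(\vec{x}^\pi)=U_\med(\vec{x}^\star)$. The same manipulation shows $U_i(\vec{x}^\pi)=U_{\pi^{-1}(i)}(\vec{x}^\star)$. The most delicate step is the bookkeeping for the deviations: I will check that applying $\varphi$ at position $i$ and then permuting by $\pi$ is equivalent to permuting first and then applying $\varphi$ at position $k=\pi^{-1}(i)$, producing the identity $U_i^\varphi(\vec{x}^\pi)=U_{\pi^{-1}(i)}^\varphi(\vec{x}^\star)$. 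Since $\vec{x}^\star$ satisfies the IC constraint for \emph{every} sender and every $\varphi\in\Phi$, the inequality is preserved, so $\vec{x}^\pi\in\X_\varepsilon$. Convexity of $\X_\varepsilon$ together with linearity of $U_\med$ then lift both properties to the average $\vec{x}^\circ$.

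For (iii), take any $\vec{s}\bowtie\vec{s}'$: since $\bowtie$ requires each signal to appear the same number of times in the two profiles, there exists $\sigma$ with $\vec{s}' = \sigma(\vec{s})$. Reindexing the average over permutations via $\pi\mapsto\pi\circ\sigma^{-1}$ immediately gives $x^\circ[\vec{s}',a]=x^\circ[\vec{s},a]$ for every $a\in\A$, so $\vec{x}^\circ\in\X^\circ$. The main obstacle is the careful bookkeeping of the commutation between the sender-specific deviation $\varphi$ and the permutation $\pi$ in step (ii); once that identity is in hand, the rest of the argument is a routine combination of convexity, linearity of expectations, and the symmetry of $\psi_\ag$ under index permutations.
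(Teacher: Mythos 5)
Your proposal is correct and follows essentially the same route as the paper: permute an optimal mechanism over all $n!$ orderings of the senders, verify that each permuted mechanism preserves the receiver's utility and the $\varepsilon$-IC constraints via the identity $U_i^\varphi(\xvec_{\lambdavec})=U_{\lambda[i]}^\varphi(\xvec^\star)$ (the paper's Lemma~\ref{le:lemmasymmetryaux}), and then average, using convexity for feasibility and a reindexing of the permutation group for symmetry. The commutation identity you flag as the delicate step is exactly the computation the paper carries out in Equation~\eqref{eq:equ_rhs}.
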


Theorem \ref{th:thsymmetry} states that, an optimal IC mechanism can be found by restricting our attention to symmetric mechanisms only. To this extent, we provide a compact formulation of the polytope of symmetric mechanisms as follows: 
\[
\Xi \defeq \left\{\xivec\in\Reals^{|\C||\A|}_{\geq 0}\mid \sum_{a\in\A}\xi[c, a] = 1 \quad\forall c\in\C \right\}.
\]
We also define the set of \emph{deterministic symmetric mechanisms} which is defined as the set $\Pi$ of mechanisms that associate deterministically an action to each class $c\in\C$. Formally, $\Pi\defeq\Xi\cap\left\{0, 1\right\}^{|\C||\A|}$.
Crucially, differently than $\X$, as we show in the following Lemma, the polytope $\Xi$ can be represented efficiently.

\begin{restatable}{lemma}{lemmapoly}
	The polytope $\Xi$ admits a description of size polynomial in the dimension of the game.
	\label{le:lemmapoly}
\end{restatable}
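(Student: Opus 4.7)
The plan is to bound $|\C|$ by a polynomial in the game parameters, since the size of $\Xi$'s description is essentially $|\C|\cdot|\A|$ variables, together with $|\C|$ equality constraints and $|\C|\cdot|\A|$ non-negativity constraints. Once $|\C|$ is polynomial, the whole description is polynomial, and the lemma follows.

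To bound $|\C|$, I would observe that each class $c\in\C$ is an equivalence class under $\bowtie$, so by definition it is uniquely determined by the \emph{count vector} $(k_s)_{s\in S}\in\mathbb{N}^{|S|}$ where $k_s$ records the number of times signal $s$ appears in any representative profile $\svec\in c$. These counts must satisfy $\sum_{s\in S}k_s = n$, since each profile has exactly $n$ entries. Conversely, every such nonnegative integer tuple summing to $n$ is realized by some profile in $\Si$ and thus corresponds to exactly one class. Hence, by a standard stars-and-bars argument,
\[
|\C| = \binom{n+|S|-1}{|S|-1}.
\]

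Invoking Assumption~\ref{ass:const} that $|S|$ is a constant, this quantity is $O(n^{|S|-1})$, which is polynomial in $n$. Consequently, the polytope $\Xi$ is described by $|\C|\cdot|\A| = O(n^{|S|-1}|\A|)$ variables, $|\C|$ equality constraints of the form $\sum_{a\in\A}\xi[c,a]=1$, and $|\C|\cdot|\A|$ non-negativity constraints, all of which is polynomial in the dimension of the game. The only step that requires any real thought is identifying the correct combinatorial parametrization of $\C$ by count vectors and invoking Assumption~\ref{ass:const}; the rest is bookkeeping from the definition of $\Xi$.
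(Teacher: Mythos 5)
Your proof is correct and follows essentially the same approach as the paper: parametrize each class in $\C$ by its signal count vector, bound $|\C|$ polynomially using Assumption~\ref{ass:const}, and then count the variables and simplex constraints of $\Xi$. The only difference is cosmetic — you give the exact stars-and-bars count $\binom{n+|S|-1}{|S|-1}$ where the paper settles for the cruder bound $|\C| < n^{|S|}$, which suffices either way.
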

\subsection{Efficient Computation of an Optimal IC Mechanism}
To conclude this Section, we provide a compact formulation of the problem of computing an optimal IC mechanism, leveraging the definition of the polytope $\Xi$ of symmetric mechanisms. 

First, let us introduce the vectors $\rvec_i\in\Reals^{|\C||\A|}_{\geq 0}$, defined as follows:
\[
	r_i[c, a] \defeq \sum_{\svec\in c} \sum_{\theta\in\Theta}p(\theta) u_i(a, \theta)\psi(\svec\vert\theta)\quad\forall c\in\C,\,\forall a \in\A.
\]
Notice that, since all the senders are symmetric (in particular since $\forall i, j\in\mc N$, $u_i=u_j=u_\ag$), for each $i,j\in\mc N$ we have that $\rvec_i = \rvec_j$. As a consequence of this, in order to lighten the notation, we will use a single utility parameter $\rvec_\ag = \rvec_i$, $\forall i\in\mc N$, which is common to all the senders.
Then, trivially, the expected utilities of each sender and of the receiver, when the mechanism used by $\med$ is $\xivec$ and all senders report truthfully the signal they observe, can be formulated, respectively, as the following linear functions in $\xivec$:
\begin{align*}
U_i(\xivec) &= \xivec^\top\rvec_\ag \quad \forall i\in\mc N\quad \textnormal{and} \quad
	U_\med(\xivec) = \xivec^\top \rvec_\med.
\end{align*}

In a similar way, as the following Lemma shows, it is possible to express the expected utility $U_i^\varphi(\xivec)$ of agent $i\in\mc N$, when she deviates according to deviation function $\varphi\in\Phi$ and all the other agents report truthfully the information they have, as a bilinear function of $\xivec$ and $\rvec_\ag$. Formally\footnote{We denote matrices with bold symbols. Given two finite sets $D$ and $E$, $\Mmat\in\Reals^{|D|\times |E|}$ denotes a $|D|$ by $|E|$-dimensional matrix indexed over $D$ and $E$, where $M[d,e]$ is the value of $M$'s component corresponding to the pair $d,e\in D\times E$.}: 
\begin{restatable}{lemma}{lemmadev}
	For each $\varphi\in\Phi$ and $i\in\mc N$, let ${\Amat_i^\varphi\in\Reals^{|\C||\A|\times |\C||\A|}}$ be the matrix such that ${\forall c, c^\prime\in\C}$, ${\forall a, a^\prime\in\A}$, 
	\[
	A_i^\varphi[(c, a), (c^\prime, a^\prime)] = \begin{cases}
		0 &\text{if } a\neq a^\prime \\
		\frac{\sum\limits_{s\in c^\prime} \mathbbm{1}[(\varphi(s_i), \svec_{-i}) \in c]}{|c^\prime|} & \text{otherwise}.
	\end{cases}
	\]
	Then, it holds $U_i^\varphi(\xivec) = \xivec^\top\Amat_i^\varphi\rvec_\ag
	$ for all $i\in\mc N$.
 \end{restatable}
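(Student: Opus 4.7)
The plan is to expand $U_i^\varphi$ on a symmetric mechanism and reindex the outer sum over signal profiles by their equivalence class. Throughout, I view $U_i^\varphi(\xivec)$ as computed under the signaling scheme $\psi$ that also defines $\rvec_\ag^\psi$ (consistent with the lemma's quantification over an arbitrary $\psi$). Starting from the definition of $U_i^\varphi$, the first step uses that $\xivec$ encodes a symmetric mechanism $\xvec\in\X^\circ$ via $x[\svec,a]=\xi[c(\svec),a]$ to rewrite
\begin{align*}
U_i^\varphi(\xivec) \;=\; \sum_{\svec\in\Si,\,a\in\A} \xi\bigl[c((\varphi(s_i),\svec_{-i})),\,a\bigr]\;\sum_{\theta\in\Theta} p(\theta)\,u_\ag(a,\theta)\,\psi(\svec\vert\theta).
\end{align*}

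The pivotal observation is that $\psi(\svec\vert\theta)=\prod_{j\in\mc N}\psi(s_j\vert\theta)$ depends on $\svec$ only through its multiset of signals, hence only through $c(\svec)$. Consequently, for every class $c'\in\C$ and every $\svec\in c'$, the inner $\theta$-sum evaluates to the common value $r_\ag^\psi[c',a]/|c'|$, straight from the definition of $\rvec_\ag^\psi$. This averaging step is what collapses exponentially many per-profile contributions into a single aggregate quantity indexed by the class $c'$.

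Next, I would partition the outer sum by $c'\defeq c(\svec)$ and collect the coefficient of each bilinear monomial $\xi[c,a']\cdot r_\ag^\psi[c',a]$. Since $\xi$ and the aggregated $\theta$-sum share the action in the displayed expression, only terms with $a'=a$ survive, which matches the zero entries of $\Amat_i^\varphi$ off the ``diagonal'' in the action indices. For $a'=a$, the coefficient is $|\{\svec\in c':(\varphi(s_i),\svec_{-i})\in c\}|/|c'|$, which coincides exactly with $A_i^\varphi[(c,a),(c',a)]$. Reassembling the sum as a matrix product then yields $U_i^\varphi(\xivec)=\xivec^\top\Amat_i^\varphi\rvec_\ag^\psi$.

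I expect the main obstacle to be bookkeeping rather than anything conceptually deep: one has to track two different class labels simultaneously, namely $c(\svec)$ (which enables the aggregation into $r_\ag^\psi[c',a]/|c'|$) and $c((\varphi(s_i),\svec_{-i}))$ (which determines the $\xi$-index after the deviation), taking care not to double count. The core mathematical content is simply the permutation invariance of $\psi(\cdot\vert\theta)$ across each equivalence class, which powers both the averaging step and the normalizing factor $1/|c'|$ appearing in $\Amat_i^\varphi$.
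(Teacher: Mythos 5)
Your proposal is correct and follows essentially the same route as the paper's proof: partition the outer sum over signal profiles by equivalence class, use the permutation invariance of $\psi(\cdot\vert\theta)$ to collapse the inner $\theta$-sum into $r_\ag^\psi[c',a]/|c'|$, and then read off the coefficient of each $\xi[c,a]\,r_\ag^\psi[c',a]$ as the normalized indicator count defining $\Amat_i^\varphi$. The bookkeeping you flag (tracking $c(\svec)$ versus $c((\varphi(s_i),\svec_{-i}))$) is exactly what the paper handles in its Equations~\eqref{eq:devdef1}--\eqref{eq:devdef2}, and your argument resolves it the same way.
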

Intuitively, for any $c, c^\prime\in\C$ and any action $a\in\A$, the parameter $A_i^\varphi[(c,a), (c^\prime, a)]$ encodes the probability with which, when sender $i$ deviates according to $\varphi$, $\med$ receives a signal profile belonging to $c$, given that the true signal profile sampled was in $c^\prime$. Again, it is possible to exploit the symmetry between the senders to show the following:
\begin{restatable}{lemma}{lemmasymmdev}
	For any $i, j\in\mc N$ and any $\varphi\in\Phi$ it holds that
	$\Amat_i^\varphi = \Amat_j^\varphi.$
	
\end{restatable}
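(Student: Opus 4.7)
The plan is to exploit the symmetry inherent in the class structure $\mathcal{C}$: every $c \in \mathcal{C}$ is invariant under any permutation of the $n$ coordinates of a signal profile. Since all nonzero entries of $\Amat_i^\varphi$ sit on the diagonal blocks $a=a'$, it is enough to fix classes $c, c' \in \mathcal{C}$ and a single action $a \in \mathcal{A}$ and to establish the combinatorial identity
\[
\sum_{\svec \in c'} \mathbbm{1}\bigl[(\varphi(s_i), \svec_{-i}) \in c\bigr] \;=\; \sum_{\svec \in c'} \mathbbm{1}\bigl[(\varphi(s_j), \svec_{-j}) \in c\bigr].
\]
Dividing both sides by $|c'|$ and matching the result with the definition of $\Amat_i^\varphi$ and $\Amat_j^\varphi$ then yields the claim.

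To prove this identity I would introduce the transposition $\sigma$ of $\mathcal{N}$ that swaps $i$ and $j$, acting on profiles by permuting indices: $\sigma(\svec) = (s_{\sigma(1)}, \ldots, s_{\sigma(n)})$. Since $\bowtie$-classes are determined by the multiset of entries and $\sigma$ merely reorders them, one has $c(\sigma(\svec)) = c(\svec)$, so $\sigma$ restricts to a bijection $c' \to c'$. Substituting $\svec \leftarrow \sigma(\svec)$ in the right-hand sum and using $\sigma(\svec)_j = s_i$ rewrites it as
\[
\sum_{\svec \in c'} \mathbbm{1}\bigl[(\varphi(s_i), \sigma(\svec)_{-j}) \in c\bigr].
\]

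It then remains to show that $(\varphi(s_i), \sigma(\svec)_{-j})$ and $(\varphi(s_i), \svec_{-i})$ share a class. A direct coordinatewise inspection settles this: both profiles place $\varphi(s_i)$ at one of the positions $\{i,j\}$, $s_j$ at the other, and $s_k$ at every $k \neq i,j$, so they differ only by the transposition $(i\,j)$ and hence have identical multisets. The two indicators therefore agree for every $\svec \in c'$, which proves the identity. The main obstacle is purely notational: keeping careful track of what $\sigma$ does to the coordinates after one of them is replaced by $\varphi(\cdot)$ and another is removed. Once that bookkeeping is in place, no further structural property of $\varphi$ or of the classes needs to be invoked.
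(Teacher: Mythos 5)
Your proof is correct and follows essentially the same route as the paper: the paper's proof likewise introduces the transposition swapping coordinates $i$ and $j$ (denoted $\lambda$ there), uses that each class is closed under permutations so that this map is a bijection of $c'$ onto itself, and observes that $(\varphi(s_i),\svec_{-i})$ and the profile obtained after the swap have the same multiset of entries, hence lie in the same class. Your write-up just makes the final multiset comparison slightly more explicit.
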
   
As we did above, to ease the notation, given a deviation function $\varphi\in\Phi$, we define a single deviation matrix ${\Amat_\ag^\varphi = \Amat_i^\varphi}$ for all $i\in\mc N$.  
Let us introduce the linear optimization problem $\LP(\varepsilon, \hat\rvec_\med, \hat\rvec_\ag)$, which is defined as follows:
\[
\LP(\varepsilon, \hat\rvec_\med,\hat\rvec_\ag) \defeq
\begin{cases}
	\max\limits_{\xivec\in\Xi} & \hspace{-0.2cm} \xivec^\top \hat\rvec_\med  \\
	&\hspace{-0.2cm} \xivec^\top\left(\Amat_\ag^\varphi \hat\rvec_\ag - \hat\rvec_\ag\right) \leq \varepsilon\,\,\,\forall\varphi\in\Phi.
\end{cases}
\]
The objective function of the LP consists in the maximization of the expected utility of the receiver computed with respect to the utility vector $\hat\rvec_\med$, while the constraints restrict the feasibility set to $\varepsilon$-IC symmetric mechanisms. Crucially, given the structure of the game, the parameters needed to define $\LP(\varepsilon, \hat\rvec_\med,\hat\rvec_\ag)$ can be determined efficiently, thus allowing us to efficiently compute an optimal $\varepsilon$-IC mechanism. 
\begin{restatable}{proposition}{propeff}
	For any deviation function $\varphi\in\Phi$, the coefficients of the matrix $\Amat_\ag^\varphi$ and those of the utility vectors $\rvec_\med$ and $\rvec_\ag$ can be computed in time polynomial in the game size.
\end{restatable}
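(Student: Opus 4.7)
The plan is to show that each entry of $\rvec_\med^\psi$, $\rvec_\ag^\psi$, and $\Amat_\ag^\varphi$ can be computed in polynomial time, and that there are only polynomially many such entries. The second part is immediate from Assumption~\ref{ass:const}: elements of $\C$ are in bijection with multisets of size $n$ drawn from $S$, hence $|\C| = \binom{n+|S|-1}{|S|-1}$ is polynomial in $n$ whenever $|S|$ is constant, so the vectors $\rvec_i^\psi$ have $|\C||\A|$ entries and $\Amat_\ag^\varphi$ has $|\C|^2|\A|^2$ entries, all polynomial in the game size. The nontrivial part is that the defining expressions hide sums of the form $\sum_{\svec\in c}$, which a priori range over $|c|$ terms and may be exponentially many; the key observation is that the symmetry of the signaling scheme collapses these sums to closed-form counting expressions.

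For the utility vectors, for every class $c\in\C$ let $n_s(c)$ denote the number of occurrences of signal $s\in S$ in any profile of $c$; then $|c|$ is the multinomial coefficient $\binom{n}{n_1(c),\dots,n_{|S|}(c)}$, computable in polynomial time. Since the product $\psi(\svec\vert\theta)=\prod_{j\in\mc N}\psi(s_j\vert\theta)$ depends only on the multiset of signals appearing in $\svec$, it takes a common value $\psi(c\vert\theta)$ on all $\svec\in c$. Hence
\[
r_i^\psi[c,a] \;=\; |c|\sum_{\theta\in\Theta}p(\theta)\,u_i(a,\theta)\,\psi(c\vert\theta),
\]
which is a sum of $|\Theta|$ terms, each computable in polynomial time.

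For the deviation matrix, I will fix $i=1$ without loss of generality (by the symmetry lemma $\Amat_i^\varphi=\Amat_j^\varphi$). For $a\neq a'$ the entry is $0$. For $a=a'$, I need to count the number of $\svec\in c'$ such that replacing $s_1$ by $\varphi(s_1)$ yields a profile in $c$. Partition this count by the value $s\in S$ of the first coordinate: the number of profiles in $c'$ with $s_1=s$ equals $|c'|\,n_s(c')/n$ by symmetry, and all of them are sent to the same target class $c'(s\mapsto\varphi(s))$ obtained from $c'$ by decreasing $n_s$ by one and increasing $n_{\varphi(s)}$ by one. Therefore
\[
A_\ag^\varphi[(c,a),(c',a)] \;=\; \frac{1}{n}\sum_{s\in S} n_s(c')\,\mathbbm{1}\bigl[c'(s\mapsto\varphi(s)) = c\bigr],
\]
which is a sum of $|S|$ terms whose evaluation requires only comparisons of multiplicity vectors, hence polynomial time.

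The main obstacle is conceptually the first one: recognising that the apparently exponential sums over $\svec\in c$ (or $\svec\in c'$) can be replaced by a single evaluation at a class representative multiplied by an explicit count. Once this symmetry reduction is in place, everything else is bookkeeping: enumerating $\C$ in polynomial time can be done, for example, by iterating over all tuples $(n_1,\dots,n_{|S|})$ of nonnegative integers summing to $n$, and each coefficient is evaluated by the closed-form formulas above.
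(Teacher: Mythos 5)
Your proof is correct and follows essentially the same route as the paper's: both exploit the symmetry of $\psi(\cdot\vert\theta)$ across profiles in a class to collapse the exponential sums $\sum_{\svec\in c}$ into multinomial counts, giving $r_j^\psi[c,a]=|c|\sum_\theta p(\theta)\psi(c\vert\theta)u_j(a,\theta)$ and a closed form for the deviation-matrix entries. The only (cosmetic) difference is in the latter count: the paper cases on $\|\cvec-\cvec'\|_1\in\{0,2,>2\}$ with explicit factorial formulas, while you partition the profiles of $c'$ by the deviating sender's signal and sum $\frac{1}{n}\sum_{s}n_s(c')\mathbbm{1}[c'(s\mapsto\varphi(s))=c]$ — both yield the same polynomial-time evaluation.
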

\begin{corollary}
	For any $\varepsilon\geq 0$, an optimal $\varepsilon$-IC mechanism can be found in time polynomial in the game size by solving $\LP(\varepsilon, \rvec_\med, \rvec_\ag)$. 
\end{corollary}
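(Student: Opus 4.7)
The plan is to combine the three immediately preceding results---the symmetry reduction (Theorem~\ref{th:thsymmetry}), the compact description of $\Xi$ (Lemma~\ref{le:lemmapoly}), and the explicit linear/bilinear formulas for $U_\med$, $U_i$, and $U_i^\varphi$ together with the polynomial-time computability of their coefficients---to show that $\LP(\varepsilon, \rvec_\med^{\psi_\ag}, \rvec_\ag^{\psi_\ag})$ encodes the problem of maximizing $U_\med$ over $\varepsilon$-IC symmetric mechanisms and has polynomial size.

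First, I would invoke Theorem~\ref{th:thsymmetry} to restrict attention to $\X^\circ$: an optimal $\varepsilon$-IC mechanism exists in the set of symmetric mechanisms, so it suffices to maximize $U_\med$ over $\X^\circ \cap \X_\varepsilon$. Each symmetric mechanism corresponds uniquely to a vector $\xivec \in \Xi$ (by Definition~\ref{def:symm_mech}), and this correspondence preserves every expected utility in play, since $U_i$, $U_i^\varphi$, and $U_\med$ depend on $\xvec \in \X^\circ$ only through the per-class values that $\Xi$ already stores.

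Second, I would rewrite the objective and the IC constraints in terms of $\xivec$. Using the preceding formulas, $U_\med(\xivec) = \xivec^\top \rvec_\med^{\psi_\ag}$ and $U_i^\varphi(\xivec) - U_i(\xivec) = \xivec^\top(\Amat_\ag^\varphi \rvec_\ag^{\psi_\ag} - \rvec_\ag^{\psi_\ag})$ for every $i \in \mc N$ and $\varphi \in \Phi$; the two symmetry lemmas collapse the per-sender constraints into a single family indexed by $\varphi$ alone, and the resulting optimization problem is \emph{literally} $\LP(\varepsilon, \rvec_\med^{\psi_\ag}, \rvec_\ag^{\psi_\ag})$.

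Finally, I would argue tractability. Lemma~\ref{le:lemmapoly} gives a polynomial-size description of $\Xi$; the preceding proposition computes each entry of $\Amat_\ag^\varphi$, $\rvec_\med^{\psi_\ag}$, and $\rvec_\ag^{\psi_\ag}$ in polynomial time; and $|\Phi|$ is polynomial (by Proposition~\ref{prop:dev} for the ex-interim case and trivially for the ex-ante case, given Assumption~\ref{ass:const}). Hence the LP has polynomially many variables and constraints with polynomially encodable coefficients, so any polynomial-time LP algorithm (ellipsoid or interior point) returns an optimal $\xivec^\star$ in polynomial time, and the induced symmetric mechanism in $\X^\circ$ is an optimal $\varepsilon$-IC mechanism. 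I do not anticipate any genuine obstacle: the corollary is essentially a packaging of the three earlier results, and the only point worth checking carefully is the faithful transport of the IC constraints under the bijection $\X^\circ \leftrightarrow \Xi$, which follows directly from the definitions of $\rvec_\ag^{\psi_\ag}$ and $\Amat_\ag^\varphi$.
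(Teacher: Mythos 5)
Your proposal is correct and matches the paper's (implicit) argument exactly: the corollary is presented as a direct packaging of Theorem~\ref{th:thsymmetry}, Lemma~\ref{le:lemmapoly}, the bilinear reformulation of the utilities and IC constraints, and the proposition on polynomial-time computability of the LP's coefficients. Your added care about the number of deviation functions and the faithful transport of constraints under the $\X^\circ \leftrightarrow \Xi$ correspondence is consistent with, and slightly more explicit than, what the paper leaves unstated.
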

\section{Online Information Acquisition}\label{sec:online}
In this work, we study the information acquisition problem in a typical \emph{online learning setting}, capturing the scenario in which a receiver $\med$ sequentially interacts with a set of senders in a game which structure might be either partially or completely unknown to $\med$. 

The sequential interaction unfolds as follows.
At each round $t\in [T]$, a state of nature $\theta^{t}$ is sampled according to prior $p$. Then, each sender $i\in\mc N$ observes a signal $s_i^t\sim \psi_\ag(\cdot\vert\theta^t)$ and reports truthfully her observation to $\med$. This latter selects a mechanism $\xivec^t\in\Xi$ and, according to such mechanism, she samples an action $a^t\sim \xivec^t[c(\svec^t), \cdot]$. Finally, each sender receives a payoff $u_\ag^t = u_\ag(a^t, \theta^t)$, while the receiver receives a payoff $u_\med^t = u_\med(a^t, \theta^t)$. 

As discussed in Section \ref{sec:prelim}, the goal of the receiver is the maximization of her own expected utility, while incentivizing the senders to report truthfully the signal they observe. This second objective corresponds to requiring each mechanism $\xivec^t$, $t\in [T]$,  to be IC, while the performances over $T$ rounds for what concerns the first objective are measured by means of the cumulative regret $R^T$, which is defined as 
\[
R^T \defeq \sum_{t\in [T]} \left[\left(\xivec^\star\right)^\top \rvec_\med - \left(\xivec^t\right)^\top \rvec_\med\right],
\]
where $\xivec^\star$ is an optimal IC mechanism. The cumulative regret measures the loss in expected utility suffered by $\med$ for having adopted in the first $T$ rounds mechanisms $\xivec^t$ instead of the optimal IC mechanism $\xivec^\star$. As customary, we require the cumulative regret to grow sublinearly in $T$, \emph{i.e.,} $R^T = o(T)$.

While asking for each mechanism $\xivec^t$ to be IC seems a reasonable requirement, as we show in the following theorem,\color{white}a\color{black} it is not possible to guarantee with high probability that each $\xivec^t$ is IC while attaining sublinear regret.
\begin{restatable}{theorem}{thimp}\label{th:thimp}
	There exists a constant $\delta>0$ such that no algorithm can guarantee to output a sequence of mechanisms $\xivec^1,...,\xivec^T$ such that, with probability at least $1-\delta$ all mechanisms are IC and $R^T=o(T)$. 
\end{restatable}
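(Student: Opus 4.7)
The plan is to exhibit two instances $I$ and $I'$ that any algorithm restricting itself to mechanisms IC in both cannot distinguish, yet whose optimal IC values differ by a constant. Take one sender, two states $\theta_1,\theta_2$ with uniform prior, two signals $s_1,s_2$, two actions $a_1,a_2$, a perfectly informative signaling scheme $\psi_\ag(s_i\mid\theta_j)=\mathbbm{1}[i=j]$, and receiver utility $u_\med(a_i,\theta_j)=\mathbbm{1}[i=j]$; only the sender's utility differs -- aligned in $I$ with $u_\ag=u_\med$, and anti-aligned in $I'$ with $u_\ag(a_i,\theta_j)=\mathbbm{1}[i\neq j]$. Parametrize any mechanism by $\alpha\defeq\xi[s_1,a_1]$ and $\beta\defeq\xi[s_2,a_1]$. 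A direct check against the ex-interim deviations in $\overline\Phi$ (identity, constant-$s_1$, constant-$s_2$, swap) yields IC sets $\{\alpha\geq\beta\}$ in $I$ and $\{\beta\geq\alpha\}$ in $I'$, whose intersection is $\{\alpha=\beta\}$, on which $U_\med(\xivec)=\tfrac12$ identically. The optimal IC values are therefore $V_I=1$, attained at $(\alpha,\beta)=(1,0)$, and $V_{I'}=\tfrac12$, attained on the intersection, giving a constant gap.

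The heart of the argument is a coupling: run the algorithm simultaneously in $I$ and $I'$ on a common probability space, with the state, signal, action-sampling uniform, and the algorithm's internal randomness all shared, and assume (as is standard) that the sender reports truthfully whenever that is a best response. Whenever the played mechanism lies in the IC intersection $\{\alpha=\beta\}$, truthful reporting is a best response in both instances, so the sender's report, the sampled action, the receiver's payoff, and the revealed state coincide across the two runs. Hence the algorithm's observed history, and therefore its subsequent mechanism choices, agree up to and including the first round $\tau$ at which it plays some $\xivec^\tau$ outside $\{\alpha=\beta\}$. That mechanism is the same in both runs, and by construction has either $\alpha>\beta$ (IC in $I$, non-IC in $I'$) or $\beta>\alpha$ (non-IC in $I$, IC in $I'$).

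The impossibility now follows by a case split under the coupled measure. On $\{\tau>T\}$ only mechanisms with $\alpha=\beta$ are played, so $U_\med(\xivec^t)=\tfrac12$ for every $t$, yielding $R^T=T/2$ in the $I$-run and ruling out $R^T=o(T)$ there; meanwhile the same event gives $R^T=0$ in the $I'$-run. On $\{\tau\leq T,\,\alpha>\beta\text{ at }\tau\}$ the $I'$-run plays a non-IC mechanism at round $\tau$, so the all-IC requirement fails in $I'$; symmetrically for $\beta>\alpha$ at $\tau$ the all-IC requirement fails in $I$. Consequently the events ``all $\xivec^t$ IC in $I$ and $R^T=o(T)$'' (in the $I$-run) and ``all $\xivec^t$ IC in $I'$ and $R^T=o(T)$'' (in the $I'$-run) are pairwise disjoint under the coupling, so their probabilities sum to at most $1$. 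An algorithm delivering each simultaneously with probability $\geq 1-\delta$ forces $2(1-\delta)\leq 1$, i.e., $\delta\geq\tfrac12$, so any $\delta<\tfrac12$ (e.g.\ $\delta=1/3$) gives the claim. The main obstacle is executing the coupling carefully: one must check that for every mechanism in the IC intersection truthful reporting is genuinely a best response in both instances, so that the standard truthful-tiebreak makes the sender's strategy (and hence the feedback distribution) literally identical across the two runs, leaving the algorithm no observable to exploit.
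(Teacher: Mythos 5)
There is a genuine gap: your coupling breaks down because of what the receiver actually observes. Your two instances $I$ and $I'$ differ only in the sender's utility function $u_\ag$, and you argue that as long as the algorithm plays mechanisms in the intersection $\{\alpha=\beta\}$ its observed history is identical in the two runs. But in this paper's model the sender's utility is part of the receiver's information in \emph{both} feedback settings: under full feedback $u_\ag$ is given to the algorithm as input (so $I$ and $I'$ are distinguishable before the first round), and under bandit feedback the realized value $u_\ag^t=u_\ag(a^t,\theta^t)$ is observed every round. In your construction $u_\ag(a,\theta)$ differs between $I$ and $I'$ at \emph{every} action--state pair ($1$ versus $0$ or $0$ versus $1$), so a single round of feedback identifies the instance with certainty, no matter which mechanism is played. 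The claim that the histories ``agree up to the first round $\tau$ at which the mechanism leaves $\{\alpha=\beta\}$'' is therefore false, and the disjointness argument that yields $2(1-\delta)\le 1$ collapses. (A secondary modeling point: in this paper the senders report truthfully at every round by assumption --- the IC constraint is a design requirement, not a behavioral response --- so the best-response tiebreaking you invoke is not what governs the feedback anyway.)

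The paper avoids this by making the two instances differ in a quantity the receiver can only learn \emph{statistically} and only \emph{slowly}: the signaling scheme is perturbed by $\Theta(\varepsilon)$ with $\varepsilon=\Theta(1/T)$, so the KL divergence between the two $T$-round observation processes is $O(\varepsilon T)=O(1)$ and Pinsker's inequality transfers the ``all mechanisms IC'' event from one instance to the other; in the second instance that event forces linear regret. If you want to repair your argument, you must either (i) move the difference between the instances into an unobserved/slowly-learnable parameter (prior or signaling scheme) and replace the exact coupling with a KL/Pinsker indistinguishability bound, or (ii) make the gap in the observable quantity shrink with $T$ so that the total information accumulated over $T$ rounds stays bounded. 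As it stands, the construction proves nothing about algorithms that read their input.
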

%
Motivated by such an impossibility result, we introduce the concept of cumulative IC violation $V^T$, defined as
\[
V^T \defeq \sum_{t\in[T]}\left[ \max_{\varphi\in\Phi} \left(\xivec^t\right)^\top\left(\Amat_\ag^\varphi\rvec_\ag 
- \rvec_\ag\right)\right].
\]
Intuitively, the cumulative IC violation expresses how much the mechanisms $\xivec^t$ chosen by the receiver at each round $t$ cumulatively violate the IC constraint of $\LP(0, \rvec_\med, \rvec_\ag)$. From another perspective, $V^T$ captures how much utility a sender reporting signals untruthfully would have gained with respect to what she gained by always behaving truthfully. Thus, $V^T$ can also be interpreted as a measure of senders' regret for truthful behavior.
Relaxing the incentive compatibility requirement, our new objective becomes the development of learning algorithms for the receiver that attain cumulative regret and cumulative IC violation that grow sublinearly in $T$, namely $R^T = o(T)$ and $V^T=o(T)$.
%
Achieving $V^T = o(T)$ guarantees that, for each sender, truthful behavior converges to the optimal behavior as $T\to\infty$. We remark that, given the negative result of \cref{th:thimp}, it is impossible to guarantee that truthful behavior is an exact best-response to receiver's mechanisms $\xivec^1,...,\xivec^T$.

In the following, we investigate two different scenarios, which relate to different degrees of knowledge that the receiver might have on the game structure and to different types of \emph{feedback} that the receiver gets out of the sequential interaction. In particular, in the \emph{full feedback setting} -- which is discussed in Section \ref{sec:full} -- we assume that the receiver knows both utility functions $u_\ag$ and $u_\med$, while she does not know the signaling scheme nor the prior $p$. Furthermore, in such setting, at the end of each round $t\in [T]$, the receiver observes, together with the signal profile $\svec^t$ and the action $a^t$, the state of nature $\theta^t$ that has been sampled. The second scenario that we study is the \emph{bandit feedback setting}, presented in Section \ref{sec:bandit}. In this case, we assume no knowledge of the game structure on the receiver's side (\emph{i.e.,} the receiver does not know the prior $p$, the signaling scheme $\psi_\ag$, and the utility functions $u_\ag$  and $u_\med$) and we restrict the feedback that $\med$ receives at each round $t\in[T]$ to the utility values $u_\ag^t$, $u_\med^t$, the signal profile $\svec^t$ and the action $a^t$.

\section{Online Information Acquisition with Full Feedback}\label{sec:full}

In the full feedback setting, the receiver can leverage her knowledge of the game parameters, as well as the observation of sampled signal profiles and states of nature, to obtain estimates of the products $p(\theta)\psi_\ag(\svec\vert\theta)$ for each $\svec\in\Si$ and $\theta\in\Theta$. Then, we can easily recover estimators for the vectors $\rvec_i$, $i\in\{\med,\ag\}$, with appropriate high-confidence bounds.
As we show in this Section, such an high confidence region can then be exploited in order to define an online learning algorithm capable of guaranteeing $\tilde{\mc O}(\sqrt{T})$ cumulative regret and IC violations with high probability.
In the first part of this section, we show how the information available to the receiver can be used in order to estimate the vectors $\rvec_{\med}, \rvec_{\ag}$, while the second part of this section is devoted to the presentation and analysis of the algorithm for the full feedback setting. 
\subsection{Estimation of $\rvec_{\med}$ and $\rvec_{\ag}$}
Under the full feedback assumption, at each $t\in [T]$, the receiver observes both the signal profile $\svec^t\in \Si$ and the state of nature $\theta^t\in\Theta$. Thus, using the information collected up to round $t$, as well as the knowledge of the utility functions $u_\med$ and $u_\ag$, a possible estimator of the utility vectors is the following for each $i\in\{\med, \ag\}$, for each $c\in\C$ and for each $a\in\A$: 
\begin{equation}\label{eq:estimatorfull}
	\hat r_i^{t+1}[c, a] := \hspace*{-1mm}\frac{1}{t}\sum_{\theta\in\Theta} \hspace{-1mm}u_i(a,\theta)\hspace{-1mm}\sum_{\tau\in [t]} \hspace{-1mm}\mathbbm{1}\left[c(\svec^\tau) = c, \theta^\tau = \theta\right].
\end{equation}
Intuitively, the term $\sum_{\tau\in [t]} \mathbbm{1}\left[c(\svec^\tau) = c, \theta^\tau = \theta\right]/t$ provides an unbiased estimator of the product $p(\theta)\sum_{\svec\in c}\psi_{\ag}(\svec\vert\theta)$, thus allowing us to conclude the following result:

\begin{restatable}{lemma}{lemmaestfull}\label{le:lemmaunbfull}
	For all $t\in [T]$ and $i\in\{\med,\ag\}$, $\hat\rvec_i^t$ is an unbiased estimator of $\rvec_i$. 
\end{restatable}
%
As a second step, we are interested in deriving an high confidence region around $\rvec_\med$ and $\rvec_{\ag}$. For $\delta\in (0, 1)$, let us introduce the event $\E^\full_\delta$, defined as
\[
\E^\full_\delta\defeq\left\{||\rvec_i - \hat\rvec_i^t||_\infty\leq \varepsilon^t_\delta\,\,\forall t\in[T],\,\forall i\in\{\med,\ag\}\right\},
\]
where 
\begin{equation}\label{eq:eps}
	\varepsilon^t_\delta\defeq \sqrt{\frac{\log\left(4T|\C||\A|/\delta\right)}{2(t-1)}}.
\end{equation}
Then, using standard concentration arguments, it is possible to state the following Lemma.
\begin{restatable}{lemma}{lemmaconcfull}	\label{le:lemmaconcfull}
	For any $\delta\in (0, 1)$, the following holds
	\[
		\mathbb{P}\left(\E^\full_\delta\right) \geq 1-\delta.
	\]
\end{restatable}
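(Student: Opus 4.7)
The plan is to establish, for every fixed triple $(t,s,\theta)\in[T]\times S\times\Theta$ with $t\ge 2$, the pointwise tail bound
\[
\mathbb{P}\!\left(\bigl|\hat\psi^t(s\mid\theta) - \psi_\ag(s\mid\theta)\bigr| > \varepsilon^t_\delta/p(\theta)\right) \;\le\; \frac{\delta}{T\,|S|\,|\Theta|},
\]
and then close out with a union bound over the $T|S||\Theta|$ triples (the case $t=1$ is vacuous since $\varepsilon^1_\delta=\infty$ by the initialization in Algorithm~\ref{alg:full}). The form of $\varepsilon^t_\delta$ is reverse-engineered precisely so that these ingredients combine to give $\mathbb{P}(\mc E^\full_\delta)\ge 1-\delta$: the $2T|S||\Theta|/\delta$ inside the logarithm simultaneously absorbs the union-bound count and the two-sided tail of the underlying concentration inequality.

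For the pointwise step, fix $t\ge 2$ and $(s,\theta)$, and for each past round $\tau\in[t-1]$ and sender $i\in\mc N$ define $X_{\tau,i}\defeq\mathbbm 1[s_i^\tau=s,\,\theta^\tau=\theta]\in[0,1]$. By Lemma~\ref{le:lemmaunbfull}, $\mathbb E[X_{\tau,i}]=p(\theta)\psi_\ag(s\mid\theta)$, so that
\[
\hat\psi^t(s\mid\theta) - \psi_\ag(s\mid\theta) \;=\; \frac{1}{(t-1)\,n\,p(\theta)}\sum_{\tau\in[t-1]}\sum_{i\in\mc N}\bigl(X_{\tau,i} - p(\theta)\psi_\ag(s\mid\theta)\bigr).
\]
The quantity of interest is thus a $1/((t-1)np(\theta))$-scaled sum of $(t-1)n$ centered, $[0,1]$-bounded indicators. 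Applying a Hoeffding-type tail bound to this sum and solving for the deviation at which the failure probability matches $\delta/(T|S||\Theta|)$ produces exactly the threshold $\varepsilon^t_\delta/p(\theta)$, with the $(t-1)n$ under the square root in $\varepsilon^t_\delta$ tracking the total number of indicators summed.

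The one subtlety, which I expect to be the main step requiring care, is that the $n$ indicators $X_{\tau,1},\dots,X_{\tau,n}$ within a single round share the sampled state $\theta^\tau$ and are therefore positively correlated rather than independent, so classical Hoeffding for $(t-1)n$ i.i.d.\ variables does not literally apply. The standard fix is an Azuma--Hoeffding argument along the filtration $\{\mc F_{\tau,i}\}$ that reveals $\theta^\tau$ at the start of round $\tau$ and then discloses the signals $s_1^\tau,\dots,s_n^\tau$ one at a time: under this filtration the centered variables $X_{\tau,i}-\mathbb E[X_{\tau,i}\mid\mc F_{\tau,i-1}]$ form a bounded martingale-difference sequence of length $(t-1)n$ with increments in $[-1,1]$, and Azuma--Hoeffding delivers the $1/\sqrt{(t-1)n}$ scaling appearing in $\varepsilon^t_\delta$. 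Once this pointwise concentration is in place, the final union bound over $(t,s,\theta)$ is routine.
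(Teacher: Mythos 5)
You follow the same overall route as the paper (a pointwise Hoeffding-type tail bound for each triple $(t,s,\theta)$ followed by a union bound), and you correctly identify the subtlety that the paper's one-line proof silently skips: the $n$ indicators $X_{\tau,1},\dots,X_{\tau,n}$ within a round share $\theta^\tau$ and are positively correlated, so Hoeffding for $(t-1)n$ independent variables does not literally apply. The problem is that your proposed repair does not close this gap. The martingale-difference sequence you define is centered at $\mathbb{E}[X_{\tau,i}\mid\mc F_{\tau,i-1}]$, which, once $\theta^\tau$ has been revealed, equals $\mathbbm{1}[\theta^\tau=\theta]\,\psi_\ag(s\mid\theta)$ rather than the unconditional mean $p(\theta)\psi_\ag(s\mid\theta)$. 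Azuma--Hoeffding on that sequence therefore controls the deviation of $\sum_{\tau,i}X_{\tau,i}$ from $n\,\psi_\ag(s\mid\theta)\sum_{\tau}\mathbbm{1}[\theta^\tau=\theta]$, and you are left with the residual $n\,\psi_\ag(s\mid\theta)\sum_{\tau}\bigl(\mathbbm{1}[\theta^\tau=\theta]-p(\theta)\bigr)$, which fluctuates at order $n\sqrt{t-1}$; after normalizing by $(t-1)np(\theta)$ this contributes $\Theta\bigl(1/(p(\theta)\sqrt{t-1})\bigr)$, exceeding the target radius $\varepsilon^t_\delta/p(\theta)$ as soon as $n\gg\log(2T|S||\Theta|/\delta)$. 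The same obstruction appears if you instead use the Doob martingale of the full sum along your filtration: the increment at the step revealing $\theta^\tau$ is as large as $n\,\psi_\ag(s\mid\theta)$, so the sum of squared increments is of order $(t-1)n^2$, not $(t-1)n$, and Azuma only yields the $1/\sqrt{t-1}$ rate.

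This is not merely a defect of your argument: the instance $\psi_\ag(s\mid\theta)=1$, $p(\theta)=1/2$ makes $\hat\psi^t(s\mid\theta)$ equal to the empirical frequency of $\theta$ divided by $p(\theta)$, whose standard deviation is $1/\sqrt{t-1}$, so for large $n$ the claimed radius $\varepsilon^t_\delta/p(\theta)$ is smaller than one standard deviation and the event fails with constant probability. Hence neither your martingale argument nor the paper's own proof (which applies Hoeffding to the $(t-1)n$ indicators as if they were independent) establishes the lemma at the stated rate; the bound does follow, with $n$ removed from the denominator inside the square root, by applying Hoeffding to the $t-1$ i.i.d.\ round-sums $Y_\tau=\sum_{i}X_{\tau,i}\in[0,n]$, or at the faster rate for a modified estimator that normalizes by the empirical count of rounds in which $\theta$ occurred (so that, conditionally on that count, the signals are genuinely i.i.d.).
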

\subsection{Algorithm}
\begin{algorithm}
	\caption{Algorithm for the full feedback setting}\label{alg:full}
	\begin{algorithmic}
		\REQUIRE $T\in \mathbb{N}_{>0}$, $\delta\in (0,1)$, $u_\ag$, $u_\med$
		\STATE \textbf{Initialize:}
		\STATE $\quad\hat r_i^1[c, a] \gets 0\quad\forall i\in\{\med, \ag\},\,\forall c\in\C,\,\forall a\in\A$
		\STATE $\quad \nu^1\gets \infty,\,\,\varepsilon^1_\delta\gets \infty$
		\FOR{$t\in [T]$}
		\STATE $\xivec^t\gets$ optimal solution to $\LP(\nu^t,\hat\rvec_\med^t, \hat\rvec_\ag^t)$
		\STATE Use $\xivec^t$ and observe $\theta^t$, $\svec^t$, $a^t$, $u_\med^t$, $u_\ag^t$
		\STATE Update estimators as in \cref{eq:estimatorfull}
		\ENDFOR
	\end{algorithmic}
\end{algorithm}
The procedure that we employ to solve the online problem in the full feedback setting is presented in Algorithm \ref{alg:full}. 
The algorithm takes as input the number of rounds $T$, the desired confidence level $\delta$ determining the probability with which the regret and IC violation bound hold, the utility functions $u_\ag$ and $u_\med$, and the prior $p$. At each round $t\in [T]$, the algorithm uses the information collected by $\med$ up to $t$ and selects a mechanism $\xivec^t$ that is optimal for linear optimization problem $\LP(\nu^t, \hat\rvec_\med^t, \hat\rvec_\ag^t)$, where $\nu^t = 2 |\C|\varepsilon_\delta^t$. 
 Notice that the feasibility parameter $\nu^t$, which regulates the slackness of the linear program's incentive compatibility constraint, depends on the confidence bound $\varepsilon_\delta^t$. Intuitively, this is needed to compensate the uncertainty arising from the use of the estimated utility parameters $\hat\rvec_{\med}^t$, $\hat\rvec_{\ag}^t$ instead of the true parameters $\rvec_{\med}$, $\rvec_{\ag}$. This choice of the slackness parameter $\nu^t$ is fundamental in order to guarantee, with high probability, sublinear cumulative regret and IC violations. Indeed, when event $\E_\delta^\full$ holds, on the one hand it guarantees that the optimal IC mechanism $\xivec^\star$ is in the set of feasible mechanisms of $\LP(\nu^t, \hat\rvec_\med^t, \hat\rvec_\ag^{t})$ implying large receiver's utility, while, on the other hand, it ensures that $\xivec^t$ is $2\nu^t$-IC with respect to the real utility  vectors $\rvec_{\med}$, $\rvec_{\ag}$, implying a small violation of the IC constraints. These two properties can then be leveraged to obtain the following Theorem.

\begin{restatable}{theorem}{thfull}	\label{th:thfull}
	For any $\delta\in (0, 1)$, with probability at least $1-\delta$, Algorithm \ref{alg:full} guarantees 
	\begin{align*}
		R^T &= \mc O \left(|\C|\sqrt{8T\log(4T|\C||\A|/\delta)}\right), \\
		V^T &= \mc O \left(|\C|\sqrt{16T\log(4T|\C||\A|/\delta)}\right).
	\end{align*}
\end{restatable}
\section{Online Information Acquisition with Bandit Feedback}\label{sec:bandit}
Differently from the full feedback setting, where the receiver has access to complete observations of the signal profiles and states of nature, in the bandit feedback setting the receiver does not have access to this information and can only observe the utility $u_j(a^t, \theta^t)$ for $j\in\left\{\med, \ag\right\}$, the received signal profiles $\svec^t$, and the played actions $a^t$. This particular feedback introduces significant complications when trying to estimate the game parameters, thus requiring an adequate \emph{exploration} of the game upfront. The algorithm presented in this section dedicates the first set of rounds to the refinement of the estimates of the game parameters, and subsequently leverages the information collected in this first phase, in order to achieve the desired performances. 
In the first part of this section we introduce the estimators that are used by our procedure, while the second part of the Section is devoted to the presentation of the algorithm. Finally, we provide a lower bound showing that the regret and IC violation bounds attained by our algorithm are tight, thus suggesting the need for an accurate exploration of the game in the first rounds of the interaction. 

\subsection{Estimation of $\rvec_\med$ and $\rvec_\ag$}
In order to appropriately define the estimators of the game parameters, we require the receiver to play at each round $t\in [T]$ a deterministic mechanism\footnote{Let us point out that, given a mechanism $\xivec\in\Xi$, a deterministic mechanism $\pivec\sim\xivec$ can be efficiently sampled from $\xivec$, simply by iterating over all elements $c\in\C$ and by sampling an action according to the probability distribution specified by $\xi[c, \cdot]$.} $\pivec^t\in\Pi$.
While in the full feedback setting the receiver could leverage the observations concerning the sampled state of nature to implicitly estimate the signaling scheme $\psi_\ag$, under bandit feedback the information available to $\med$ is not sufficient to acquire knowledge on $\psi_\ag$. Instead, the observations of the utility values $u_\med^t$ and $u_\ag^t$, the signal profile $\svec^t$ and the action $a^t$, make it convenient to estimate directly the utility parameters $\rvec_\med$ and $\rvec_\ag$. In particular, using the information collected by $\med$ up to time $t\in [T]$, it is possible to define, for all $ j \in\left\{\med,\ag\right\}$, $c\in\C$, and $a\in\A$, the following estimator of the utility vectors:
\begin{equation}\label{eq:estvect}
	\hat r^{t+1}_j [c, a] \defeq \frac{1}{N^t[c, a]}\sum_{\substack{\tau\in [t]\\\pi^\tau[c, a]=1}} u_j(a^\tau, \theta^\tau)\mathbbm{1}\left[s^\tau\in c\right],
\end{equation}
where $N^t[c, a] \defeq \sum_{\tau\in [t]} \pi^\tau[c, a]$ is a counter that keeps track of how many times $\med$ selected a deterministic mechanism that prescribed to play action $a$ when receiving a signal profile belonging to class $c$. 
%
\begin{restatable}{lemma}{lemmaestband}\label{le:lemmaestbandit}
	For all $t\in [T]$, for all $i\in\left\{\med, \ag\right\}$, $\hat\rvec^t_j$ is an unbiased estimator of $\rvec_i$. 
\end{restatable}
Given $\delta\in (0,1)$, let us define the event  $\E_\delta^\bandit$ such that
\[
	\begin{split}
		\E_\delta^\bandit\defeq\left\{|\hat r_i^t[c, a]\right.& - r_i[c, a]| \leq \eta_\delta^t[c,a]\\ 
		&\left. \forall t\in[T], \forall c\in\C,\forall a\in\A  \forall i\in\left\{\med, \ag\right\} \right\}, 
	\end{split}
\]
where $\etavec_\delta^t\in\Reals^{|\C|\times|\A|}$ is the vector such that, for any ${\delta\in (0,1)}$ and for all $t\in [T]$, 
\begin{equation}\label{eq:confid}
	\eta_\delta^t[c,a] :=\sqrt{\frac{\log( 8T|\C||\A|/\delta)}{2 N^t[c, a]}}.
\end{equation}

Also in this case, by means of standard concentration arguments, it is possible to derive a lower bound on the probability with which event $\E_\delta^\bandit$ is verified, thus yielding an high confidence region for the utility vectors $\rvec_\med$ and $\rvec_\ag$: 
\begin{restatable}{lemma}{lemmaconcbandit}\label{le:lemmaconcbandit}
	For any $\delta\in (0,1)$, the following holds: 
	\[
		\mathbb{P}\left(\E_\delta^\bandit\right)\geq 1-\frac{\delta}{2}.
	\]
\end{restatable}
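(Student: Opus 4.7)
The plan is to apply Hoeffding's inequality to each coordinate of the estimator $\hat{\vec r}_j^t$ and then apply a union bound, in the same spirit as the proof of Lemma \ref{le:lemmaconcfull}, but with extra care because the number of samples $N^t[c,a]$ contributing to $\hat r_j^{t+1}[c,a]$ is itself a random stopping time.

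First, I would fix indices $c \in \C$, $a \in \A$, $j \in \{\med, \ag\}$, and reindex the summands of the estimator along the (random) subsequence of rounds at which the deterministic mechanism prescribes $a$ on class $c$. Let $\tau_1 < \tau_2 < \ldots$ be the rounds with $\pi^{\tau_k}[c,a] = 1$, and set $Z_k \defeq u_j(a^{\tau_k}, \theta^{\tau_k})\, \mathbbm{1}[\svec^{\tau_k} \in c]$. Since $\pi^{\tau_k}[c,a]=1$ forces $a^{\tau_k} = a$ whenever $\svec^{\tau_k} \in c$, each $Z_k$ lies in $[0,1]$. The key structural observation is that $(\theta^\tau, \svec^\tau)$ is drawn i.i.d.\ across $\tau$ from a distribution that does not depend on the receiver's history, while the event $\{\pi^\tau[c,a]=1\}$ is measurable with respect to the past. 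Hence $(\theta^{\tau_k}, \svec^{\tau_k})_k$ is i.i.d.\ with the same marginal distribution as $(\theta^1, \svec^1)$, and by Lemma~\ref{le:lemmaestbandit} we have $\mathbb{E}[Z_k] = r_j^{\psi_\ag}[c,a]$.

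Second, I would apply Hoeffding's inequality to each possible number of samples $n \in \{1, \ldots, T\}$: setting $\eta_n \defeq \sqrt{\log(8T|\C||\A|/\delta)/(2n)}$, the bound
\[
\mathbb{P}\!\left(\bigl|\tfrac{1}{n}\textstyle\sum_{k=1}^n Z_k - r_j^{\psi_\ag}[c,a]\bigr| > \eta_n\right) \leq 2\exp(-2n\eta_n^2) = \frac{\delta}{4T|\C||\A|}
\]
holds by Hoeffding for bounded i.i.d.\ variables. Since $\hat r_j^{t+1}[c,a]$ coincides with the average of the first $N^t[c,a]$ samples $Z_1, \ldots, Z_{N^t[c,a]}$, ensuring this inequality simultaneously for all $n \in \{1, \ldots, T\}$ implies the required bound for every $t \in [T]$ at this fixed $(c,a,j)$.

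Finally, the union bound is taken over $n \in \{1,\ldots,T\}$, $c \in \C$, $a \in \A$, and $j \in \{\med, \ag\}$, giving at most $2T|\C||\A|$ events each with failure probability $\delta/(4T|\C||\A|)$, for a total failure probability at most $\delta/2$. The complementary event is exactly $\E_\delta^\bandit$, yielding $\mathbb{P}(\E_\delta^\bandit) \geq 1 - \delta/2$. The main subtlety I expect is the rigorous justification that the subsampled sequence $(Z_k)_k$ is i.i.d.\ despite $\tau_k$ being random stopping times determined by the algorithm; this can alternatively be handled by writing $\sum_{\tau=1}^t \pi^\tau[c,a]\bigl(u_j(a^\tau,\theta^\tau)\mathbbm{1}[\svec^\tau \in c] - r_j^{\psi_\ag}[c,a]\bigr)$ as a martingale with increments bounded by $1$ and invoking Azuma--Hoeffding together with a peeling argument over the values of $N^t[c,a]$, which matches the constant inside the logarithm exactly.
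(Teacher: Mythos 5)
Your proof is correct and follows essentially the same route as the paper's: Hoeffding's inequality on each coordinate of $\hat{\vec r}_j^t$ followed by a union bound over $t$ (equivalently, over the possible values of $N^t[c,a]$), $c$, $a$, and $j$, with constants that match the $8$ inside the logarithm in the definition of $\eta_\delta^t$ and yield exactly the $\delta/2$ failure probability. The one difference is that you explicitly justify why the subsampled observations form an i.i.d.\ (or martingale) sequence despite $N^t[c,a]$ being a random, history-dependent quantity — a genuine subtlety that the paper's one-line application of Hoeffding glosses over — so your write-up is if anything more rigorous than the original.
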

\subsection{Algorithm}
\begin{algorithm}
	\caption{Algorithm for the bandit feedback setting}\label{alg:bandit}
	\begin{algorithmic}
		\REQUIRE $T, E\in \mathbb{N}_{>0}$, $\delta\in (0,1)$
		\STATE \textbf{Initialize:}
		\STATE $\quad\hat r_j^1[c,a] \gets 0\quad\forall c\in\C,\,\forall a\in\A,\,\forall j\in\left\{\med,\ag\right\}$
		\STATE $\quad\eta_\delta^1[c,a]\gets\infty,\,\quad N^0[c, a]\gets 0\quad\forall c\in\C,\,\forall a\in\A$
		\STATE $\quad \nu\gets 2|\C|\sqrt{\frac{\log( 8T|\C||\A|/\delta)}{2 E}} $
		\FOR{$t\in [T]$}
		
		\IF[Exploration rounds]{$t \leq |\A|E$} 
		\STATE $a\gets \arg\min_{a\in\A} \sum_{c\in\C}N^t[c,a]$
		\STATE $\xi^t[c,a^\prime]\gets \mathbbm{1}[a^\prime = a]\quad\forall c\in\C\,\forall a\in\A$.
		\ELSE[Optimization rounds]
		\STATE $\xivec^t\gets$ optimal solution to $\LP\left(\nu, \hat\rvec^t_\med + \etavec_\delta^t, \hat\rvec_\ag^t \right)$
		\ENDIF
		\STATE \LComment{Interaction with the senders}
		\STATE $\pivec^t\gets $ sample deterministic mechanism from $\xivec^t$
		\STATE Use $\pivec^t$ and observe $\svec^t$, $a^t$, $u_\med^t$, $u_\ag^t$
		\STATE Update Estimators
		\ENDFOR
	\end{algorithmic}
\end{algorithm}
The algorithm for the bandit feedback setting is described in Algorithm \ref{alg:bandit}. 
The algorithm, which is formulated as a two-phases procedure, takes as input the time horizon $T$, the desired confidence level on the regret and IC violation bounds $\delta$, and the number $E$ of rounds that will be devoted to the exploration of each action. More in detail, in the first phase, which is called \emph{exploration phase}, the objective of the receiver is to select mechanisms $\xivec^t$ that guarantee to obtain sufficiently accurate estimations of the game parameters. To this extent, the exploration phase is designed such that, for each action $a\in\A$, $\med$ plays for exactly $E$ rounds a deterministic mechanism $\pivec$ that prescribes to play action $a$ for each class $c\in\C$, \emph{i.e.,} a $\pivec$ such that $\pi[c, a] = 1$, $\forall c\in\C$. Intuitively, by imposing that each couple $c, a\in\C\times\A$ is explored for an adequate number of rounds within the exploration phase, we are in fact guaranteeing an upper bound $\nu = 2|\C|\sqrt{\log( 8T|\C||\A|/\delta)/2E}$ on the confidence parameters $\etavec_\delta^t$ at subsequent rounds $t$. Such an upper bound will then prove to be crucial in order to obtain the desired performances. Furthermore, it is worth pointing out that such deterministic mechanisms do \emph{not} contribute to the cumulative IC violation, since, trivially, selecting the same action for any possible class $c$ constitutes an IC mechanism. 

After the first $|\A|E$ rounds of exploration, the algorithm enters its second phase, which is called \emph{optimization phase}. In this second set of rounds, the receiver aims at exploiting the information collected in order to select mechanisms that contribute to minimize the cumulative regret and IC violation. In particular, at each round $t > |\A|E$, the estimators $\hat\rvec_\med^t$ and  $\hat\rvec_\ag^t$, together with the confidence bounds $\etavec_\delta^t$ and $\nu$, are used in the strategy selection procedure as input parameters to the linear optimization problem $\LP\left(\nu, \hat\rvec^t_\med + \etavec_\delta^t, \hat\rvec_\ag^t \right)$. More precisely, the linear program is instantiated so that the objective is exactly the optimization of the expected utility computed with respect to the upper confidence bound of the parameter $\rvec_\med$ and the IC constraint is specified using the estimator $\hat\rvec_\ag^t$ as the utility parameter, with constraint slackness regulated by the parameter $\nu$. 
These design choices have a twofold effect: (i) on the one hand they ensure that the optimal IC mechanism $\xivec^\star$ is in the feasibility set of the linear program, (ii) while on the other hand they provide guarantees on the approximate incentive compatibility of mechanisms $\xivec^t$ chosen at each round $t > |\A|E$. In particular, while (i), together with the choice of a UCB-like objective, yields optimal regret bounds, thanks to (ii) it is possible to recover optimal IC violation for Algorithm \ref{alg:bandit}. 
The formal guarantees of the algorithm are stated in the following Theorem:

\begin{restatable}{theorem}{thbandit}\label{th:bandit}
	For any $\delta\in (0,1)$ and for any $E\in [T]$, with probability at least $1-\delta$, Algorithm \ref{alg:bandit} guarantees
	\begin{align*}
		R^T &= \mc O\left(|\A|E + |\C||\A| \sqrt{32T\log( 8T|\C||\A|/\delta)}\right),\\
		V^T &= \mc O\left(T|\C|\sqrt{8\log( 8T|\C||\A|/\delta)/E}\right).
	\end{align*}
\end{restatable}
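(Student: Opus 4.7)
The plan is to decompose $R^T$ and $V^T$ into exploration-phase contributions (rounds $1,\dots,|\A|E$) and optimization-phase contributions (rounds $|\A|E+1,\dots,T$), condition on the high-probability event $\E_\delta^\bandit$ of Lemma~\ref{le:lemmaconcbandit} together with an auxiliary Azuma--Hoeffding concentration event, and bound each phase separately. The exploration phase is trivial: it adds at most $|\A|E$ to $R^T$ (per-round regret is at most $1$) and exactly $0$ to $V^T$, because each exploration mechanism plays a fixed action $a$ for every class and any unilateral deviation leaves the induced action distribution unchanged; algebraically $(\pivec^t)^\top(\Amat_\ag^\varphi\rvec_\ag^{\psi_\ag}-\rvec_\ag^{\psi_\ag})=0$ reduces to the column-sum identity $\sum_c A_\ag^\varphi[(c,a),(c',a)]=1$ that is immediate from the definition of $\Amat_\ag^\varphi$.

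The optimization-phase analysis rests on one structural fact: on $\E_\delta^\bandit$ and for every $t>|\A|E$, the optimal IC mechanism $\xivec^\star$ is feasible for $\LP(\nu,\hat\rvec_\med^t+\etavec_\delta^t,\hat\rvec_\ag^t)$. Since $N^t[c,a]\geq E$ throughout this phase, $\|\hat\rvec_\ag^t-\rvec_\ag^{\psi_\ag}\|_\infty\leq\bar\eta:=\sqrt{\log(8T|\C||\A|/\delta)/(2E)}$. A direct computation from the definition of $\Amat_\ag^\varphi$ shows that, for every valid symmetric mechanism $\xivec\in\Xi$, the vector $\Amat_\ag^{\varphi\top}\xivec$ is itself in $\Xi$ and hence has $\ell_1$-norm $|\C|$; combined with $\|\xivec^\star\|_1=|\C|$, this gives $|\xivec^{\star\top}(\Amat_\ag^\varphi\vec\epsilon-\vec\epsilon)|\leq 2|\C|\|\vec\epsilon\|_\infty$ for every $\vec\epsilon$. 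Specialising to $\vec\epsilon=\hat\rvec_\ag^t-\rvec_\ag^{\psi_\ag}$ and using the true-utility IC of $\xivec^\star$ yields $(\xivec^\star)^\top(\Amat_\ag^\varphi\hat\rvec_\ag^t-\hat\rvec_\ag^t)\leq 2|\C|\bar\eta=\nu$. Applying the same inequality in reverse to the LP-feasible $\xivec^t$ converts the LP constraint into the per-round violation bound $(\xivec^t)^\top(\Amat_\ag^\varphi\rvec_\ag^{\psi_\ag}-\rvec_\ag^{\psi_\ag})\leq 2\nu$, and summing over at most $T$ rounds gives $V^T=\mathcal{O}(T|\C|\sqrt{\log(8T|\C||\A|/\delta)/E})$.

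For the regret, feasibility of $\xivec^\star$, optimality of $\xivec^t$, and the upper-confidence inequality $\rvec_\med^{\psi_\ag}\leq\hat\rvec_\med^t+\etavec_\delta^t$ combine into the standard UCB per-round bound $(\xivec^\star-\xivec^t)^\top\rvec_\med^{\psi_\ag}\leq 2(\xivec^t)^\top\etavec_\delta^t$. I expect the main obstacle to be summing this over the optimization phase, since the quantity $\sum_t\xi^t[c,a]/\sqrt{N^t[c,a]}$ couples the \emph{policy} probabilities $\xi^t$ with the counters $N^t$ of the sampled deterministic mechanisms $\pivec^t\sim\xivec^t$. My plan is to split
\[
\sum_{t>|\A|E}\frac{\xi^t[c,a]}{\sqrt{N^t[c,a]}}=\sum_t\frac{\pi^t[c,a]}{\sqrt{N^t[c,a]}}+\sum_t\frac{\xi^t[c,a]-\pi^t[c,a]}{\sqrt{N^t[c,a]}},
\]
bound the first sum by $2\sqrt{T}$ deterministically via the telescoping identity $1/\sqrt{N^t}\leq 2(\sqrt{N^t+1}-\sqrt{N^t})$ on rounds with $\pi^t[c,a]=1$, and bound the second -- a predictable-increment martingale with steps of size at most $1/\sqrt{E}$ -- by $\mathcal{O}(\sqrt{T\log(|\C||\A|/\delta)/E})$ via Azuma--Hoeffding with a union bound over the $|\C||\A|$ pairs. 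Multiplying by $\sqrt{\log(8T|\C||\A|/\delta)/2}$, summing over $(c,a)$, and adding back the $|\A|E$ exploration cost delivers $R^T=\mathcal{O}(|\A|E+|\C||\A|\sqrt{T\log(8T|\C||\A|/\delta)})$. A final union bound over $\E_\delta^\bandit$ and the Azuma event, each with failure probability at most $\delta/2$, produces the stated guarantee with probability at least $1-\delta$.
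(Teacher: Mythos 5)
Your proposal is correct and follows essentially the same route as the paper: the same exploration/optimization decomposition, the same feasibility argument for $\xivec^\star$ via $\|\xivec\|_1=|\C|$ and the fact that $\xivec^\top\Amat_\ag^\varphi$ is again a symmetric mechanism, the same per-round $2\nu$ violation bound summed over $T$ rounds, and the same Azuma--Hoeffding decoupling of $\xivec^t$ from the sampled $\pivec^t$ followed by the pigeonhole bound on $\sum_t \pi^t[c,a]/\sqrt{N^t[c,a]}$. The only cosmetic difference is that you apply Azuma per $(c,a)$ pair with a union bound whereas the paper applies it once to the aggregated martingale $\sum_\tau(\pivec^\tau-\xivec^\tau)^\top\etavec_\delta^\tau$; both yield the stated rates.
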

When the number of exploration rounds is set as ${E = \lfloor T^{\alpha}\rfloor}$ with $\alpha \in [1/2, 1]$, \cref{th:bandit} implies that the cumulative regret of Algorithm \ref{alg:bandit} is $R^T = \tilde{\mc O}(T^{\alpha})$, while the cumulative IC violation is $V^T=\tilde{\mc O}(T^{1-\alpha/2})$
Clearly, the choice of the number of exploration rounds introduces a fundamental trade-off between minimization of the cumulative regret and minimization of the cumulative IC violation. This trade-off is inherently related to the structure of the bandit feedback setting itself. Indeed, as we show in the remaining part of this section, it is not possible to improve over the regret and violation bounds that are attained by Algorithm \ref{alg:bandit}.

\subsection{Lower Bound}
Before concluding the analysis of the bandit feedback setting, we derive a lower bound showing that the regret and IC violation bounds achieved by Algorithm \ref{alg:bandit} are tight. 

Intuitively, the lower bound captures the need for exploration that characterizes the bandit feedback setting, fundamentally distinguishing it from the full feedback setting. Indeed, while, in the latter, the feedback received at each round $t$ is completely informative and can be used to improve the receiver's knowledge of the game independently on the chosen mechanism $\xivec^t$, in the former setting the feedback is strongly related to the mechanisms $\xivec^t$ selected by $\med$. This aspect makes it necessary to guarantee a suitable level of exploration, in fact -- as already observed above -- bringing up an actual trade-off between minimization of cumulative regret and minimization of cumulative IC deviation. The following Theorem formalizes the above ideas.

\begin{restatable}{theorem}{lowerbound}\label{th:lowerbound}
	For any $\alpha\in[1/2, 1]$, there exists $\delta\in (0,1)$ such that no algorithm can achieve both $R^T = o(T^\alpha)$ and $V^T = o(T^{1-\alpha/2})$ with probability greater than $1-\delta$.
\end{restatable}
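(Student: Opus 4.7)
The plan is to apply Le Cam's two-point method with two carefully constructed instances of the information acquisition game that are statistically hard to distinguish under bandit feedback but require qualitatively different mechanisms to be simultaneously IC and receiver-optimal. The key structural observation — already visible in the analysis of Algorithm \ref{alg:bandit} — is that certifying incentive compatibility requires knowing $r_\ag[c,a]$ on action-class pairs $(c,a)$ that the mechanism itself never plays, so this information can only be gathered by off-support actions, which are penalized by regret.

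First I would construct two instances $G_0, G_1$ sharing the prior, the signaling scheme, and the receiver's utility, differing only in the sender's utility at a single action-state pair by an amount $\Delta>0$ to be chosen. Using two states, two signals, a slightly noisy signaling scheme, and a receiver utility whose unconstrained optimum is the ``matching'' mechanism that plays $a_j$ on signal $s_j$ (and in particular never plays $a_2$ on class $c_1=\{s_1\}$), I would pick the sender utilities so that: (i) in $G_0$ the matching mechanism is exactly IC, while in $G_1$ the ex-interim ``always misreport'' deviation produces violation $\Theta(\Delta)$; and (ii) the parameter distinguishing the two instances is $r_\ag[c_1,a_2]$, which is off the support of the matching mechanism, so that each observation of it costs the receiver $\Omega(1)$ in per-round utility. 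I would verify by a small LP argument that any mechanism achieving violation $o(\Delta)$ in $G_1$ must concede $\Omega(\Delta)$ in per-round receiver utility, ruling out a cheap ``compromise'' mechanism.

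Next I would apply a standard information-theoretic bound. For any fixed (possibly randomized) algorithm, let $\mathbb{P}_0,\mathbb{P}_1$ be the joint laws of the observation sequences under $G_0,G_1$, and let $N$ be the expected number of rounds (under $\mathbb{P}_0$) in which $a_2$ is played on a signal profile in $c_1$. Since all other observations have identical laws under the two instances, the chain rule for KL divergence together with the Bernoulli KL bound yields
\[
D_{KL}(\mathbb{P}_0\,\|\,\mathbb{P}_1)\;\leq\;C\,N\,\Delta^2
\]
for an absolute constant $C$. Setting $\Delta=\Theta(T^{-\alpha/2})$, I would split into two cases: if $N\geq T^\alpha/C'$ for a suitable $C'$, then since each such play costs $\Omega(1)$ in receiver utility, $R^T=\Omega(T^\alpha)$ on $G_0$; otherwise $N\Delta^2=o(1)$ and Pinsker's inequality gives total variation $o(1)$ between $\mathbb{P}_0$ and $\mathbb{P}_1$. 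In the second case, a Le Cam transfer shows that any algorithm achieving $R^T=o(T^\alpha)$ on $G_0$ must also, on $G_1$, concentrate its mass on the matching mechanism (or on mechanisms with comparable violation) for a constant fraction of the $T$ rounds, incurring $V^T=\Omega(T\Delta)=\Omega(T^{1-\alpha/2})$ on $G_1$. Combining the two cases yields a constant $\delta>0$ such that no algorithm achieves both $R^T=o(T^\alpha)$ and $V^T=o(T^{1-\alpha/2})$ with probability at least $1-\delta$.

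The main obstacle is the instance construction itself. The difficulty is to tune the sender utility and the signaling scheme so that the distinguishing parameter $r_\ag[c_1,a_2]$ and the IC violation of the matching mechanism in $G_1$ are in an essentially bijective linear correspondence, and so that the slope of the violation-versus-receiver-utility Pareto frontier in $G_1$ is bounded below by a constant. This is what rules out a ``safe'' mechanism that is both near-IC on both instances and loses only $o(\Delta)$ in per-round utility, and hence what forces the lower bound to scale as $\Omega(T\Delta)$ rather than something weaker. Once these parameters are tuned, the KL computation and the Le Cam-style transfer are routine.
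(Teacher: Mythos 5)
Your overall architecture is the same as the paper's: two instances that differ only in the law of the sender's observed reward on action--class pairs lying off the support of the receiver-optimal mechanism, a divergence-decomposition bound showing that distinguishing them requires $\Omega(T^\alpha)$ off-support plays (hence $\Omega(T^\alpha)$ regret), and a Pinsker/Le Cam transfer showing that an algorithm that cannot distinguish them inherits the matching mechanism's $\Theta(\Delta)$ per-round violation in the second instance. The paper packages the dichotomy slightly differently (reverse Markov plus Pinsker rather than an explicit case split on $N$), but that is cosmetic.

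The gap is in the quantitative requirement you place on the construction, which you correctly flag as the main obstacle but then mis-calibrate. You ask that any mechanism with violation $o(\Delta)$ in $G_1$ concede $\Omega(\Delta)$ per-round receiver utility. That is not enough: paying $\Omega(\Delta)=\Omega(T^{-\alpha/2})$ per round for $T$ rounds costs only $O(T^{1-\alpha/2})$ regret, which is $o(T^\alpha)$ whenever $\alpha>2/3$, so for $\alpha\in(2/3,1]$ an algorithm could always play the ``compromise'' mechanism, obtain zero violation in both instances, and still have $o(T^\alpha)$ regret --- your lower bound would not bind. Worse, with your construction (perturb the sender's utility at a single action--state pair by $\Delta$, everything else generic) such a compromise genuinely exists: the violation $U^\varphi_\ag(\xivec)-U_\ag(\xivec)$ is linear in $\xivec$ with $\Theta(1)$ coefficients, so its zero sublevel set passes within $O(\Delta)$ of the matching vertex, i.e.\ an $O(\Delta)$ perturbation of the matching mechanism is exactly IC in $G_1$ at an $O(\Delta)$ per-round utility cost. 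What you actually need is that driving the violation to $o(\Delta)$ in $G_1$ costs $\Omega(1)$ per-round receiver utility, which requires the coefficients of the violation functional to be themselves $O(\Delta)$, so that zeroing it forces a constant amount of probability mass off the optimal mechanism. The paper obtains this automatically by perturbing the \emph{prior} by $\pm\varepsilon$ rather than a utility entry: in its instance $\ib$ the violation of $\xivec^t$ equals $\varepsilon\bigl(1-\xi^t[s_3,a]-\tfrac{1}{2}\xi^t[s_1,b]-\tfrac{1}{2}\xi^t[s_2,b]\bigr)$, so every coefficient carries the factor $\varepsilon$ and making the violation nonpositive forces $\xi^t[s_3,a]+\tfrac{1}{2}\xi^t[s_1,b]+\tfrac{1}{2}\xi^t[s_2,b]\geq 1$, hence constant per-round regret. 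Your sketch can be repaired by adopting such a construction, but as stated the instance design does not deliver the trade-off the rest of your argument relies on.
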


As we can observe from Theorem \ref{th:lowerbound}, for $E=\lfloor T^\alpha\rfloor$ and $\alpha\in [1/2,1]$, the regret and IC violation bounds attained by Algorithm \ref{alg:bandit} are tight. 

\section*{Acknowledgements}
This paper is supported by FAIR (Future Artificial Intelligence Research) project, funded by the NextGenerationEU program within the PNRR-PE-AI scheme (M4C2, Investment 1.3, Line on Artificial Intelligence).


\bibliographystyle{icml2023}
\bibliography{bibliography}

\clearpage
\onecolumn
\appendix
\section{Proofs Omitted from Section \ref{sec:prelim}}
\propgreen*
\begin{proof}
	First, let us point out that given any mechanism $\xvec\in\X$, the utilities, $U_i(\xvec)$ and $U_i^\varphi(\xvec)$ of each deviation $\varphi$ can be expressed as the following linear functions: 
	\[
		\begin{split}
			U_i(\xvec) &= \xvec^\top \dvec_i, \\
			U_i^\varphi(\xvec) &= \xvec^\top \Mmat^\varphi_i \dvec_i,
		\end{split}
		\quad\forall i\in\mc N,
	\]
	where $\dvec_i\in\Reals^{|\Si||\A|}$ is the vector such that
	\[
		d_i[\svec, a] = \sum_{\theta\in\Theta}p(\theta) \psi_{\ag}(\svec\vert\theta) u_i(a, \theta)\quad\forall\svec\in\Si,\,\forall a\in\A,
	\]
	and $\Mmat^\varphi_i\in\Reals^{|\Si||\A|\times|\Si||\A|}$ is the matrix such that
	\[
		M[(\svec, a), (\svec^\prime, a^\prime)] = \begin{cases}
			1 & \text{if } \svec_{-i} = \svec^\prime_{-i},\, \varphi(s_i) = s_i^\prime,\, a = a^\prime \\
			0& \text{otherwise.}
		\end{cases}
	\]
	For any $s,s^\prime\in S$, let $\Mmat^{s\to s^\prime}$ be the matrix associated to deviation function  $\varphi^{s\to s^\prime}\in\overline\Phi(s)$ such that $\varphi^{s\to s^\prime}(s) =s^\prime$. Furthermore, let $\Mmat^I$ be the matrix associated to the identity deviation function $\varphi^I$, \emph{i.e.,} the deviation function such that $\varphi^I(s)=s$, for any $s\in S$. Then, for any $\varphi\in\Phi^{inter}$, we can write the following:
	\[
		\Mmat^{\varphi} = \sum_{s\in S}\left[\Mmat^{s \to\varphi(s)}\right] + (1 - |S|) \Mmat^I.
	\]
	Now, let $\xvec$ be an $\varepsilon$-IC mechanism with respect to deviation functions $\overline\Phi$. It holds that for each $\varphi\in\Phi^{inter}$,
	\begin{align*}
		U_i^\varphi(\xvec) - U_i(\xvec) &= \xvec^\top\Mmat_i^\varphi \dvec_i - \xvec^\top\dvec_i \\
		&= \sum_{s\in S}\left[\xvec^\top \Mmat^{s\to\varphi(s)}\dvec_i\right] + \xvec^\top \Mmat^{I}\dvec_i - |S| \xvec^\top \Mmat^{I}\dvec_i - \xvec^\top\dvec_i \\
		&= \sum_{s\in S}\left[\xvec^\top \Mmat^{s\to\varphi(s)}\dvec_i - \xvec^\top \dvec_i\right] \\
		&\leq |S|\varepsilon,
	\end{align*}
	where the third equation follows from the fact that $\xvec^\top \Mmat^I \dvec_i = \xvec^\top \dvec_i$, and the last inequality follows from the fact that $\xvec$ is $\varepsilon$-IC with respect to deviation functions $\overline\Phi$.
	
	This concludes the proof.	
\end{proof}
\section{Proofs Omitted from Section \ref{sec:compact}}

\subsection{Proof of Theorem \ref{th:thsymmetry}}
Before proving Theorem \ref{th:thsymmetry}, let us introduce some preliminary notation and results that will be needed in the proof. 

Let $\Lambda$ be the set of permutations of the elements in the set $[n]$. For each $\lambdavec\in\Lambda$, let $f_{\lambdavec}:\Si \to \Si$ be the permutation function that given a signal profile $\svec\in\Si$ returns a signal profile $\svec^\prime\in\Si$ obtained changing the order of signals in $\svec$ according to the permutation $\lambdavec$. Formally, 
\begin{definition}
	For any $\lambdavec\in\Lambda$, the permutation function $f_{\lambdavec}:\Si\to\Si$ is defined as the bijective function such that $\forall \svec\in\Si$, $f_{\lambdavec}(\svec) = \svec^\prime\in\Si$, where 
	\[
	s^\prime_i = s_{\lambda[i]}\quad\forall i\in\mc N.
	\]
\end{definition}
\begin{definition}
	For any $\lambdavec\in\Lambda$, the inverse permutation function $f_{\lambdavec}^{-1}:\Si\to\Si$ is defined as the bijective function such that $\forall \svec\in\Si$, $f_{\lambdavec}^{-1}(\svec) = \svec^\prime\in\Si$, where 
	\[
	s^\prime_{\lambda[i]} = s_{i}\quad\forall i\in\mc N.
	\]
\end{definition}
For notational convenience, for any $i\in\mc N$ and $\varphi\in\Phi$, we introduce the vectors $\dvec_\med, \dvec_i, \dvec_i^\varphi\in\Reals^{|\Si||\A|}$, where: 
\begin{align*}
	d_i[\svec, a] &= \sum_{\theta\in\Theta} p(\theta) u_i(a, \theta) \psi_\ag(\svec\vert\theta) \quad \forall i\in\mc N\cup\left\{\med \right\}\\
	d_i^\varphi[\svec, a] &= \sum_{\substack{\svec^\prime\in\Si \\ \svec_{-i} = \svec^\prime_{-i}\\ \varphi(s_i^\prime) = s_i}} d_i[\svec^\prime, a] = \sum_{\theta\in\Theta} p(\theta) u_i(a, \theta) \sum_{\substack{\svec^\prime\in\Si \\ \svec_{-i} = \svec^\prime_{-i}\\ \varphi(s_i^\prime) = s_i}} \psi_\ag(\svec\vert\theta)\quad\forall i\in\mc N.
\end{align*}
Then, exploiting the above formulations, we can rewrite the expected utilities as: 
\begin{align*}
	U_i(\xvec) &= \sum_{\substack{\svec\in\Si\\ a\in\A}} x[\svec, a] d_i[\svec, a] \quad\forall i\in\mc N\cup\left\{\med\right\}\\
	U_i^\varphi(\svec) &= \sum_{\substack{\svec\in\Si\\ a\in\A}} x[\svec, a] d_i^\varphi[\svec, a] \quad\forall i\in\mc N,\,\forall \varphi\in\Phi.
\end{align*} 

We can state the following auxiliary result:
\begin{lemma}
	For any $\varepsilon\geq 0$, let $\xvec^\star\in \argmax_{\xvec\in\X_\varepsilon} U_\med\left(\xvec\right)$. Define, for each permutation $\lambdavec\in\Lambda$, the mechanism $\xvec_{\lambdavec}\in\X$ as the mechanism such that, $\forall \svec\in\Si,\,\forall a \in\A$, $\xvec^\star[\svec, a] = \xvec_{\lambdavec}[f_{\lambdavec}(\svec), a]$. Then, the following holds: 
	\[
	\xvec_{\lambdavec} \in \argmax_{\xvec\in\X_\varepsilon}U_\med\left(\xvec\right).
	\]
	\label{le:lemmasymmetryaux}
\end{lemma}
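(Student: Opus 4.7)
The plan is to establish two claims and combine them: (a) $U_\med(\xvec_{\lambdavec}) = U_\med(\xvec^\star)$, so $\xvec_{\lambdavec}$ achieves the optimal objective value, and (b) $\xvec_{\lambdavec} \in \X_\varepsilon$, so it is feasible; together these give $\xvec_{\lambdavec} \in \argmax_{\xvec\in\X_\varepsilon} U_\med(\xvec)$. Membership of $\xvec_{\lambdavec}$ in $\X$ (nonnegativity and $\sum_a x_{\lambdavec}[\svec,a] = 1$) is automatic, since $f_{\lambdavec}$ is a bijection on $\Si$ and $\xvec^\star \in \X$. The driving principle is that the symmetry assumptions make every utility-relevant quantity permutation invariant: because $\psi_\ag(\svec\vert\theta) = \prod_{j\in\mc N}\psi_\ag(s_j\vert\theta)$ is symmetric in its components, $\psi_\ag(f_{\lambdavec}(\svec)\vert\theta) = \psi_\ag(\svec\vert\theta)$, and because $u_i = u_\ag$ for every sender, the vectors $\dvec_i$ defined above satisfy $d_i[f_{\lambdavec}(\svec),a] = d_i[\svec,a] = d_j[\svec,a]$ for all $i,j\in\mc N$; the same invariance in the first argument holds for $\dvec_\med$.

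To prove (a), I will apply the change of variables $\svec = f_{\lambdavec}(\svec')$ in the sum defining $U_\med(\xvec_{\lambdavec})$. The defining identity $x_{\lambdavec}[f_{\lambdavec}(\svec'),a] = x^\star[\svec',a]$ together with permutation invariance of $\dvec_\med$ immediately yields $U_\med(\xvec_{\lambdavec}) = U_\med(\xvec^\star)$. The same substitution, using permutation invariance of $\dvec_i$, shows $U_i(\xvec_{\lambdavec}) = U_i(\xvec^\star)$ for every sender $i$.

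The main obstacle lies in (b): transferring the IC constraints. Fix $i\in\mc N$ and $\varphi\in\Phi$ and apply the substitution $\svec = f_{\lambdavec}(\svec')$ in the sum defining $U_i^\varphi(\xvec_{\lambdavec})$. The delicate step is identifying $f_{\lambdavec}^{-1}((\varphi(s_i),\svec_{-i}))$: an index-chasing computation, using $s_m = s'_{\lambda[m]}$, shows that setting $j = \lambda[i]$ one obtains $f_{\lambdavec}^{-1}((\varphi(s_i),\svec_{-i})) = (\varphi(s'_j),\svec'_{-j})$. Combining this with $d_i[f_{\lambdavec}(\svec'),a] = d_j[\svec',a]$ gives $U_i^\varphi(\xvec_{\lambdavec}) = U_j^\varphi(\xvec^\star)$ with $j = \lambda[i]$. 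Since $\xvec^\star \in \X_\varepsilon$, we have $U_j^\varphi(\xvec^\star) - U_j(\xvec^\star) \leq \varepsilon$; combined with the identity $U_j(\xvec^\star) = U_i(\xvec_{\lambdavec})$ established in (a), this yields $U_i^\varphi(\xvec_{\lambdavec}) - U_i(\xvec_{\lambdavec}) \leq \varepsilon$. As $i$ and $\varphi$ were arbitrary, $\xvec_{\lambdavec} \in \X_\varepsilon$, completing the proof.
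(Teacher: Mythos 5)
Your proposal is correct and follows essentially the same route as the paper: establish $U_i(\xvec_{\lambdavec})=U_i(\xvec^\star)$ for all agents via the change of variables $\svec=f_{\lambdavec}(\svec^\prime)$ and permutation invariance of $\psi_\ag(\cdot\vert\theta)$, then show $U_i^\varphi(\xvec_{\lambdavec})=U_{\lambda[i]}^\varphi(\xvec^\star)$ by the same index-chasing identifying $j=\lambda[i]$, and transfer the IC constraints using sender symmetry ($\dvec_i=\dvec_j$). The only cosmetic difference is that the paper packages the deviation utility into auxiliary vectors $\dvec_i^\varphi$ rather than manipulating $x[(\varphi(s_i),\svec_{-i}),a]$ directly.
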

\begin{proof}
	In order to show that $\xvec_{\lambdavec} \in \argmax_{\xvec\in\X_\varepsilon}U_\med\left(\xvec\right)$, we first show that $U_\med\left(\xvec^\star\right) = U_\med\left(\xvec_{\lambdavec}\right)$ and then that $\xvec_{\lambdavec}\in\X_\varepsilon$.
	\paragraph{Objective function.}
	Note that $\forall\theta\in\Theta$, $\forall \svec,\svec^\prime\in\Si$ such that $\svec\bowtie\svec^\prime$, it holds that 
	\[
	\psi_\ag(\svec\vert\theta) = \psi_\ag(\svec^\prime\vert\theta).
	\]
	Hence, for any $i\in\mc N\cup\left\{\med\right\}$ and $\svec,\svec^\prime\in\Si$ such that $\svec\bowtie\svec^\prime$
	\begin{equation}
		d_i[\svec, a] = \sum_{\theta\in\Theta} p(\theta) u_i(a, \theta) \psi_\ag(\svec\vert\theta) = \sum_{\theta\in\Theta} p(\theta) u_i(a, \theta) \psi_\ag(\svec^\prime\vert\theta) = d_i[\svec^\prime, a]\label{eq:equ_obj}
	\end{equation}

	Furthermore, since the permuation function preserves the elements present in the signal profile and only changes their order, $\forall \svec\in\Si$, $\forall \lambdavec\in\Lambda$, we have that $f_{\lambdavec}(\svec)\bowtie\svec$.
	Thus, $\forall i\in\mc N\cup\left\{\med\right\}$, we can write the following equations: 
	\begin{align}
		U_i(\xvec_{\lambdavec}) &= \sum_{\substack{\svec\in\Si\\ a\in\A}} x_{\lambdavec}[\svec, a] d_i[\svec, a] \notag \\
		&= \sum_{\substack{\svec\in\Si\\ a\in\A}} x^\star[f^{-1}_{\lambdavec}(\svec), a] d_i[\svec, a]\notag \\
		&= \sum_{\substack{\svec\in\Si\\ a\in\A}} x^\star[\svec, a] d_i[f_{\lambdavec}(\svec), a] \notag\\
		&= \sum_{\substack{\svec\in\Si\\ a\in\A}} x^\star[\svec, a] d_i[\svec, a] \notag\\
		&= U_i(\xvec^\star), \label{eq:def_lhs}
	\end{align}
	where the first equation follows from the definition of $\xvec_{\lambdavec}$, the second and third equation follow from the fact that the function $f_{\lambdavec}$ is bijective and the last equation follows from \Cref{eq:equ_obj}. This concludes the first part of the proof.
	
	\paragraph{Feasibility.}

	Let $\Si_{-j} = \times_{i\in\mc N\setminus\left\{j\right\}} \Si_i$.	Notice that $\forall \svec\in\Si$, $\forall\lambdavec\in\Lambda$,
	\[
		\left\{\svec^\prime_{-i}\in\Si_{-i}\mid \svec^\prime_{-i} = f_{\lambdavec}(\svec)_{-i}\right\} = \left\{\svec^\prime_{-\lambda[i]}\in\Si_{-\lambda[i]}\mid \svec^\prime_{-\lambda[i]} = \svec_{-\lambda[i]}\right\}.
	\] 
	Then, given any signal profile $\svec\in\Si$ and any permuation $\lambdavec\in\Lambda$, we have that $\forall i\in\mc N,\,\forall \varphi\in\Phi$
	\begin{align}
		d^\varphi_i[f_{\lambdavec}(\svec), a] &= \sum_{\substack{\svec^\prime\in\Si \\ f_{\lambdavec}(\svec)_{-i} = \svec^\prime_{-i}\\ f_{\lambdavec}(\svec)_i = \varphi(s_i^\prime)}} d_i[\svec^\prime, a] \notag \\
		&= \sum_{\substack{s^\prime_{\lambda[i]}\in S\notag \\ \varphi(s^\prime_{\lambda[i]}) = s_{\lambda[i]}}} \sum_{\substack{\svec^\prime_{-\lambda[i]}\in\Si_{-\lambda[i]}\notag\\ \svec^\prime_{-\lambda[i]} = \svec_{-\lambda[i]} }} d_i[(s^\prime_{-\lambda[i]}, \svec^\prime_{-\lambda[i]}), a] \notag\\
		&= \sum_{\substack{\svec^\prime\in\Si \\ \svec_{-\lambda[i]} = \svec^\prime_{-\lambda[i]}\\ s_{\lambda[i]} = \varphi(s_{\lambda[i]}^\prime)}} d_i[\svec^\prime, a]  \notag\\
		&= d_{\lambda[i]}^\varphi[\svec, a] \label{eq:equ_rhs}.
	\end{align}
%
	
	Then, it holds that 
	\begin{align}
		U_i^\varphi(\xvec_{\lambdavec}) &= \sum_{\substack{\svec\in\Si\\ a\in\A}} x_{\lambdavec}[\svec, a] d_i^\varphi[\svec, a]\notag \\
		&= \sum_{\substack{\svec\in\Si\\ a\in\A}} x^\star[f^{-1}_{\lambdavec}(\svec), a] d_i^\varphi[\svec, a] \notag\\
		&= \sum_{\substack{\svec\in\Si\\ a\in\A}} x^\star[\svec, a] d_i^\varphi[f_{\lambdavec}(\svec), a] \notag\\
		&= \sum_{\substack{\svec\in\Si\\ a\in\A}} x^\star[\svec, a] d_{\lambda[i]}^\varphi[\svec, a] \notag\\
		&= U_{\lambda[i]}^\varphi(\xvec^\star), \label{eq:def_rhs}
	\end{align}
	where the first equation follows from the definition of $\xvec_{\lambdavec}$, the second and third equations follow from the fact that the permutation function is bijective and the last equation follows from \Cref{eq:equ_rhs}.
	
	Hence, we can conclude that, $\forall i\in\mc N$, $\forall \varphi\in\Phi$ and $\forall \lambdavec\in\Lambda$, 
	\begin{align*}
		U_i^\varphi(\xvec_{\lambdavec}) - U_i(\xvec_{\lambdavec}) = U_{\lambda[i]}^\varphi(\xvec^\star) - U_i(\xvec^\star) \leq \varepsilon
	\end{align*}
	where the first equation follows from Equations \eqref{eq:def_lhs} and \eqref{eq:def_rhs}, and the last inequality follows from the fact that $\xvec^\star\in\X_\varepsilon$. 
	
	This concludes the proof.
\end{proof}

We are now ready to prove Theorem \ref{th:thsymmetry}

\thsymmetry*
\begin{proof}
	Let $\xvec^\star\in\X\in\argmax_{\xvec\in\X_\varepsilon} U_\med(\xvec)$. Define, for each permutation $\lambdavec\in\Lambda$, the permutation strategy $\xvec_{\lambdavec}$ such that $\forall\svec\in\Si,\,\forall a\in\A$, $x^\star[\svec, a] = x_{\lambdavec}[f_{\lambdavec}(\svec), a]$. The set of all possible permutation strategies is the following 
	\[
	\X^\Lambda = \left\{ \xvec_{\lambdavec}\mid \lambdavec\in\Lambda \right\}.
	\]
	Let $\overline\xvec$ be such that:
	\[
	\overline\xvec = \frac{1}{|\Lambda|}\sum_{\lambdavec\in\lambda}\xvec_{\lambdavec}.
	\]
	By convexity and by Lemma \ref{le:lemmasymmetryaux}, $\overline\xvec\in\argmax_{\xvec\in\X_\varepsilon} U_\med(\xvec)$. In order to conclude the proof, we need to show that $\forall \svec, \svec^\prime\in\Si$ such that $\svec\bowtie\svec^\prime$, it holds that 
	\[
	\overline{x}[\svec, a] = \overline{x}[\svec^\prime, a]\quad\forall a\in\A.
	\]
	Fix any $\svec, \svec^\prime\in\Si$ such that $\svec\bowtie\svec^\prime$. By definition of permutation, it holds that: 
	\[
	\left\{f_{\lambdavec}(\svec)\mid\lambdavec\in\Lambda\right\} = \left\{f_{\lambdavec}^{-1}(\svec)\mid\lambdavec\in\Lambda\right\} = \left\{f_{\lambdavec}^{-1}(\svec^\prime)\mid\lambdavec\in\Lambda\right\} = \left\{f_{\lambdavec}(\svec^\prime)\mid\lambdavec\in\Lambda\right\} = \left\{\svec^{\prime\prime}\in\Si\mid\svec^{\prime\prime}\bowtie\svec\right\}.
	\]
	Hence, $\forall \svec,\svec^\prime\in\Si$ such that $\svec\bowtie\svec^\prime$, and $\forall a\in\A$,
	\begin{align*}
		\overline x[\svec, a] &= \frac{1}{|\Lambda|}\sum_{\lambdavec\in\Lambda} x_{\lambdavec}[\svec, a] \\
		&= \frac{1}{|\Lambda|}\sum_{\lambdavec\in\Lambda} x^\star[f^{-1}_{\lambdavec}(\svec), a] \\
		&= \frac{1}{|\Lambda|}\sum_{\lambdavec\in\Lambda} x^\star[f^{-1}_{\lambdavec}(\svec^\prime), a] \\ 
		&= \frac{1}{|\Lambda|}\sum_{\lambdavec\in\Lambda} x_{\lambdavec}[\svec^\prime, a] \\
		&= \overline x[\svec^\prime, a].
	\end{align*}
	This concludes the proof.
\end{proof}

\subsection{Additional Proofs}
\lemmapoly*
\begin{proof}
	To prove that the polytope $\Xi$ admits a polynomially sized representation we provide an upper bound on the number of variables and constraints needed to describe $\Xi$.
	\paragraph{Bound on the number of variables.} By definition, we have that $\Xi\subset \Reals^{|\C||\A|}_{\geq 0}$. Hence, in order to bound the dimension of $\Xi$, we need to bound the number of elements in $\C$. In particular, we have that each element $c\in\C$ can be uniquely identified by a vector $\cvec\in\Naturals^{|S|}$ in which the $i$-th entry $c[i]$ defines the number of occurrences of signal $s_i\in S$. Thus, we can write the following:
	\[
	c[i] \leq n \quad \forall i\in [|S|].
	\]
	As a consequence of this, it is possible to upper bound the maximum number of elements in $\C$ as 
	\[
	|\C| < n^{|S|},
	\]
	hence, the number $m_1$ of variables is
	\[
	m_1 = |\C|\cdot |\A| < |\A| n^{|S|}.
	\]
	
	\paragraph{Bound on the number of constraints.} The constraints that characterize the polytope $\Xi$ are exactly the simplex constraints associated to each $c\in\C$. Formally:
	\begin{align}
		\xi[c, a] &\geq 0\quad\forall c\in\C,\,\forall a\in\A,\notag\\
		\sum_{a\in\A} \xi[c, a] &= 1 \quad\forall c\in\C \notag.
	\end{align} 
	Thus, the number $m_2$ of constraints is
	\[
	m_2 = |\C|\cdot |\A| + |\C| < n^{|S|}\left(|\A| + 1\right).
	\]
	The Lemma follows from the fact that, by Assumption \ref{ass:const}, $|S|$ is an absolute constant.
\end{proof}

\lemmadev*
\begin{proof}
	For each $i\in\mc N$ and for each deviation function $\varphi\in\Phi$, from the definition of $U_i^\varphi$ it follows that
	\begin{align*}
		U_i^\varphi(\xivec) &= \sum_{\substack{\svec\in\Si\\a\in\A}} \xi[c((\varphi(s_i), \svec_{-i})), a] \sum_{\theta\in\Theta} p(\theta)u_\ag(a, \theta)\psi_\ag(\svec\vert\theta) \\
		&= \sum_{\substack{c\in\C\\ a\in\A}} \sum_{\svec\in c} \xi[c((\varphi(s_i), \svec_{-i})), a] \sum_{\theta\in\Theta} p(\theta)u_\ag(a, \theta)\psi_\ag(\svec\vert\theta) \\
		&=  \sum_{\substack{c\in\C\\ a\in\A}} \sum_{\theta\in\Theta} p(\theta)u_\ag(a, \theta) \sum_{\svec\in c} \xi[c((\varphi(s_i), \svec_{-i})), a]  \psi_\ag(\svec\vert\theta).
	\end{align*}
	Let us recall that $\forall\theta\in\Theta$ and $\forall \svec\bowtie\svec^\prime$, $\psi_\ag(\svec\vert\theta) = \psi_\ag(\svec^\prime\vert\theta)$. Hence, since $\forall c\in\C$, $\svec,\svec^\prime\in c$ if and only if $\svec\bowtie\svec^\prime$, with a slight abuse of notation, we can write $\psi_\ag(\svec\vert\theta) = \psi_\ag(c(\svec)\vert\theta)$. Thus, 
	\begin{align}
		U_i^\varphi(\xivec) &= \sum_{\substack{c\in\C\\ a\in\A}}\sum_{\theta\in\Theta}p(\theta)u_\ag(a, \theta)\psi_\ag(c\vert\theta) \sum_{\svec\in c}\xi[c((\varphi(s_i), \svec_{-i})), a]. \label{eq:devdef}
	\end{align}
	By definition of $\rvec_\ag$, we have that $\forall c\in\C$, $\forall a\in\A$,
	\begin{align*}
		r_\ag[c, a] &= \sum_{\svec\in c} \sum_{\theta\in\Theta} p(\theta)u_\ag(a, \theta)\psi_\ag(\svec\vert\theta) \\
		&= \sum_{\theta\in\Theta} p(\theta)u_\ag(a, \theta) \sum_{\svec\in c}\psi_\ag(\svec\vert\theta) \\
		&= |c| \sum_{\theta\in\Theta}p(\theta)u_\ag(a, \theta)\psi_\ag(c\vert\theta).
	\end{align*}
	Substituting into Equation~\eqref{eq:devdef}, we get
	\begin{align}
		U_i^\varphi(\xivec) = \sum_{\substack{c\in\C\\ a\in\A}} r_\ag[c, a] \frac{\sum_{\svec\in c}\xi[c((\varphi(s_i), \svec_{-i})), a]}{|c|}.\label{eq:devdef1}
	\end{align}
	Now, notice that the term $\frac{1}{|c|}\sum_{\svec\in c}\xi[c((\varphi(s_i), \svec_{-i})), a]$ can be rewritten as
	\begin{align}
		\frac{1}{|c|}\sum_{\svec\in c}\xi[c((\varphi(s_i), \svec_{-i})), a] &= \sum_{c^\prime\in \C} \xi[c^\prime, a] \sum_{\svec\in c} \frac{\mathbbm{1}[(\varphi(s_i), \svec_{-i})\in c^\prime]}{|c|} \notag\\
		&= \sum_{c^\prime\in\C}\xi[c^\prime, a] A_i^\varphi[(c^\prime, a), (c, a)]\notag \\
		&= \sum_{\substack{c^\prime\in\C\\ a^\prime\in\A}}\xi[c^\prime, a^\prime] A_i^\varphi[(c^\prime, a^\prime), (c, a)],\label{eq:devdef2}
	\end{align} 
	where the last equation follows from the fact that $A[(c^\prime, a^\prime), (c, a)] = 0$, $\forall a^\prime\neq a$.
	
	Hence, by plugging Equation~\eqref{eq:devdef2} into Equation~\eqref{eq:devdef1}, it follows that
	\begin{align*}
		U_i^\varphi(\xivec) &= \sum_{\substack{c\in\C\\ a\in\A}} r_\ag[c, a] \sum_{\substack{c^\prime\in\C\\ a^\prime\in\A}}\xi[c^\prime, a^\prime] A_i^\varphi[(c^\prime, a^\prime), (c, a)] \\
		&= \xivec^\top \Amat_i^\varphi \rvec_\ag,
	\end{align*}
	which gives the result.
\end{proof}

\lemmasymmdev*
\begin{proof}
	Notice that each class $c\in\C$ contains a set of symmetric signal profiles. Hence, we can define the function $\lambda: c\to c$ such that $\forall \svec\in c$, $\lambda(\svec) =\svec^\prime$, where $s_i=s^\prime_j$, $s_j = s^\prime_i$ and $s_k = s^\prime_k$, $\forall k\neq i, j$ and obtain the following: 
	\begin{align*}
		\frac{\sum\limits_{s\in c} \mathbbm{1}[(\varphi(s_i), \svec_{-i}) \in c^\prime]}{|c|} &= \frac{\sum\limits_{s\in c} \mathbbm{1}[(\varphi(\lambda(\svec)_j), \lambda(\svec)_{-j}) \in c^\prime]}{|c|} \\ 
		&= \frac{\sum\limits_{s^\prime\in c} \mathbbm{1}[(\varphi(s^\prime_j), s^\prime_{-j}) \in c^\prime]}{|c|},
	\end{align*}
	which gives the result.
\end{proof}

\propeff*
\begin{proof}
	As a preliminary step towards the proof of this result, we first show how to efficiently determine the number of signal profiles belonging to a given class $c\in\C$, and then proceed to bounding the complexity of specifying the utility vectors $\rvec_\med$, $\rvec_\ag$ and the deviation matrix $\Amat_{\ag}^\varphi$. 
	
	Let us remark that, similarly to the proof of Lemma~\ref{le:lemmapoly}, each element $c\in\C$ can be uniquely identified by a vector $\cvec\in\Naturals^{S}$ in which the entry $c[s]$ defines the number of occurrences of signal ${s \in S}$. Hence, the number of different signal profiles in a given class $c$ can be computed via the multinomial coefficient
	\begin{equation}\label{eq:classcard}
		|c| = \binom{n}{\cvec} = \frac{n !}{\prod_{s \in S} c[s_\ag]!}.
	\end{equation}
	
	\paragraph{Computation of $\rvec_\med$ and $\rvec_\ag$.}
	Notice that, by definition, each $c\in\C$ contains signal profiles that are symmetric between them. Recalling the definition of symmetric signal profile, we can write the following: 
	\[
		\psi(\svec\vert\theta) = \psi(\svec^\prime\vert\theta)\quad\forall\theta\in\Theta,\,\forall\svec\bowtie\svec^\prime.
	\]
	Thus, with a slight abuse of notation, since all signal profiles belonging to the same class share the same sampling probability, we can define $\psi(c\vert\theta) \defeq \psi(\svec\vert\theta)$, $\forall\theta\in\Theta$, $\forall \svec\in c$. Hence, for each $j\in\left\{\med,\ag\right\}$, the utility vectors can be expressed as
	\[
		r_j[c, a] = |c|\sum_{\theta\in\Theta}p(\theta)\psi(c\vert\theta)u_i(a, \theta)\quad\forall c\in\C,\,\forall a\in\A.
	\]
	Note that each entry of such vectors can now be computed efficiently without iterating on all signal profiles belonging to each class, simply by exploiting Equation~\eqref{eq:classcard}. This gives the desired result.
	
	\paragraph{Computation of $\Amat_\ag^\varphi$.}
	To show that the coefficients of the matrix $\Amat_\ag^\varphi$ can be computed efficiently, we show that it is possible to determine in an efficient way the term $\sum_{\svec\in c} \mathbbm{1}[(\varphi(s_i), \svec_{-i})\in c^\prime]$ for any two $c,c^\prime\in\C$.
	
	Let $\cvec, \cvec^\prime\in\Naturals^{|S|}$ be the vectors describing classes $c$ and $c^\prime$, respectively. Trivially, if $||\cvec - \cvec^\prime||_1 > 2$, then, $\forall \varphi\in\Phi$, $\forall \svec\in c$, $(\varphi(s_i), \svec_{-i})\notin c^\prime$, since it is never the case that, by means of a single player deviation, a signal profile $\svec\in c$ is transformed into a signal profile $\svec^\prime\in c^\prime$. If $||\cvec - \cvec^\prime||_1 \leq 2$, then we can be in one of two cases, namely $||\cvec - \cvec^\prime||_1 = 2$ or $||\cvec - \cvec^\prime||_1 = 0$. Let us first consider the case $||\cvec - \cvec^\prime||_1 = 2$. Intuitively, $||\cvec - \cvec^\prime||_1 = 2$ implies that there exist two signals $s$ and $s^\prime$ such that $c[s] = c^\prime[s] + 1$ and $c[s^\prime] = c^\prime[s^\prime] - 1$. For notational convenience, let $\iota(\cvec, \cvec^\prime)=s$. Thus, we have $\sum_{\svec\in c} \mathbbm{1}[(\varphi(s_i), \svec_{-i})\in c^\prime]\neq 0$ if and only if $\varphi(s) = s^\prime$. In particular, when $\varphi(\iota(\cvec, \cvec^\prime)) = s^\prime$, the number of tuples $\svec\in c$ that can be transformed in tuples $\svec^\prime\in c^\prime$ corresponds to the number of tuples that have element $\iota(\cvec, \cvec^\prime)$ as the $i$-th element, which can be computed as 
	\[
	\sum_{\svec\in c} \mathbbm{1}[(\varphi(s_i), \svec_{-i})\in c^\prime] = \frac{\left(n-1\right)!}{(c[\iota(\cvec, \cvec^\prime)]-1)!\prod_{s^{\prime\prime}\neq \iota(\cvec, \cvec^\prime)}c[s^{\prime\prime}]!}.
	\]
	Finally, when $||\cvec - \cvec^\prime||_1 = 0$, \emph{i.e.,} when $\cvec = \cvec^\prime$, the number $\sum_{\svec\in c} \mathbbm{1}[(\varphi(s_i), \svec_{-i})\in c^\prime]$ is the number of tuples $\svec\in c$ that have as $i$-th element a signal $s_i$ such that $\varphi(s_i)=s_i$. Formally, this can be expressed as
	\[
		\sum_{\svec\in c} \mathbbm{1}[(\varphi(s_i), \svec_{-i})\in c^\prime] = \sum_{\substack{s\in S:\\\varphi(s)=s,\\ c[s] > 0}} \frac{\left(n-1\right)!}{(c[s]-1)!\prod_{s^{\prime\prime}\neq s}c[s^{\prime\prime}]!}.
	\]
	In conclusion, we have that
	\[
		\sum_{\svec\in c} \mathbbm{1}[(\varphi(s_i), \svec_{-i})\in c^\prime] =\begin{cases}
			0 & \text{if } ||\cvec - \cvec^\prime||_1 > 2 \\
			\frac{\left(n-1\right)!}{(c[\iota(\cvec, \cvec^\prime)]-1)!\prod_{s^{\prime\prime}\neq \iota(\cvec, \cvec^\prime)}c[s^{\prime\prime}]!} & \text{if } ||\cvec - \cvec^\prime||_1 = 2 \\
			\sum_{\substack{s\in S:\\\varphi(s)=s,\\ c[s] > 0}} \frac{\left(n-1\right)!}{(c[s]-1)!\prod_{s^{\prime\prime}\neq s}c[s^{\prime\prime}]!} &\text{otherwise.}
			
		\end{cases}
	\]
	This concludes the proof.
\end{proof}

\section{Proofs Omitted from Section \ref{sec:online}}
\thimp*
\begin{proof}
	Let us define two instances $\ia$ and $\ib$ of a game between a receiver $\med$ and a single sender $\ag$. The set of states of nature is composed by two states $\Theta=\left\{\theta_1, \theta_2\right\}$, the set of signals is $S=\left\{s_1, s_2\right\}$ and the set of actions is $\A=\left\{a, b\right\}$. Both instances share the same prior $p(\theta_1) = p(\theta_2) = 1/2$ and utility functions. In particular, given $\varepsilon > 0$, the utility functions are specified such that $u_\med(a, \theta_1) = u_\ag(a, \theta_1) = 1$, $u_\med(a, \theta_2) = u_\ag(a, \theta_2) = u_\med(b, \theta_1) =  u_\ag(b, \theta_1) = 0$, $u_\med(b, \theta_2) = 1$ and $u_\ag(b, \theta_2) = 2\varepsilon$. Furthermore, the two instances differ for the signaling schemes, which are defined as 
	\[
		\ia = \begin{cases}
			\psi_\ag(s_1\vert\theta_1) = 1 - 4\varepsilon \\
			\psi_\ag(s_2\vert\theta_1) =  4\varepsilon \\
			\psi_\ag(s_1\vert\theta_2) = 0 \\
			\psi_\ag(s_2\vert\theta_2) = 1,
		\end{cases}\quad
	\ib = \begin{cases}
		\psi_\ag(s_1\vert\theta_1) = 1 - 2\varepsilon \\
		\psi_\ag(s_2\vert\theta_1) =  2\varepsilon \\
		\psi_\ag(s_1\vert\theta_2) = 0 \\
		\psi_\ag(s_2\vert\theta_2) = 1.
	\end{cases}
	\]
	First, let us consider instance $\ia$. Fix a deviation function $\varphi\in\Phi$ such that $\varphi(s_1) = s_1$ and $\varphi(s_2) = s_1$. 
	By simple calculations it is possible to show that for each mechanism $\xivec\in\Xi$, the expected utility of the sender when she deviates according to $\varphi$ is the following 
	\[
		U_\ag^{\varphi}(\xivec) = \left(\frac{1}{2} - 2\varepsilon \right)\xi[s_1, a] + \varepsilon \xi^t[s_1, a] + \varepsilon.
	\]
	Similarly, the expected utility of the sender when she communicates truthfully the signal observed is 
	\[
		U_\ag(\xivec) = \left(\frac{1}{2} - 2\varepsilon \right)\xi[s_1, a]  - \varepsilon\xi[s_2, b] + 2\varepsilon.
	\]
	
	Given $T\in\Naturals_{>0}$ and $\delta\in (0,1)$, let us assume that, at each round $t\in [T]$ and with probability greater than $1-\delta$, the mechanism $\xivec^t$ selected by the the receiver in instance $\ia$ is IC. Hence, we can write the following 
	\[
		\mathbb{P}_\ia\left(\sum_{t\in [T]} \left[U_\ag^\varphi(\xivec^t) - U_\ag(\xivec^t)\right] \leq 0 \right) \geq 1-\delta,
	\] 
	which yields
	\[
		\mathbb{P}_\ia\left(\sum_{t\in [T]}\left[ \varepsilon\xi^t[s_1, a] + \varepsilon \xi^t[s_2, b] - \varepsilon\right] \leq 0 \right) \geq 1-\delta,
	\]
	where the subscript $\ia$ indicates the probability measure associated to instance $\ia$. The, we can leverage the Pinsker's inequality to state the following: 
		\begin{equation}\label{eq:imppinsk}
				\mathbb{P}_{\ib}\left(\sum_{t\in [T]}\left[ \varepsilon\xi^t[s_1, a] + \varepsilon \xi^t[s_2, b] - \varepsilon\right] \leq 0\right) \geq \mathbb{P}_\ia\left(\sum_{t\in [T]}\left[ \varepsilon\xi^t[s_1, a] + \varepsilon \xi^t[s_2, b] - \varepsilon\right] \leq 0 \right) -\sqrt{\frac{1}{2}\mc K(\ia, \ib)}, 
		\end{equation}
	where $\mc K\left(\ia, \ib\right)$ is the Kullback-Leibler divergence between instance $\ia$ and instance $\ib$.
	
	It is easy to check that the two instances differ only for the probability distributions of the receiver's rewards that arise when the signal sampled is $s_2$. In particular, when the signal sampled is $s_2$ and the action selected is $a$, the utility of the receiver follows a Bernoulli distribution $\mc B\left(\frac{2\varepsilon}{1/2 + 2\varepsilon}\right)$ in instance $\ia$, and a Bernoulli distribution $\mc B\left(\frac{\varepsilon}{1/2 + \varepsilon}\right)$ in instance $\ib$. Similarly, when the action selected is $b$, the receiver's utility follows a Bernoulli distribution $\mc B\left(\frac{1/2}{1/2 + 2\varepsilon}\right)$ in instance $\ia$, and a Bernoulli distribution $\mc B\left(\frac{1/2}{1/2 + \varepsilon}\right)$ in instance $\ib$. Then, exploiting the fact that $\mc K\left(\mc B(p), \mc B(q)\right) = \mc K(\mc B(1-p), \mc B(1-q))$, we can apply the KL decomposition Theorem \cite{lattimore2020bandit} to state the following
	\begin{align*}
			\mc K(\ia, \ib) &= \sum_{\substack{t\in [T]\\ s^t = s_2}}\left[\mathbb{E}_\ia\left[\xi^t[s_2, a]\right] + \mathbb{E}_\ia\left[\xi^t[s_2, b]\right] \right] \mc K\left(\mc B\left(\frac{2\varepsilon}{1/2 + 2\varepsilon} \right),\mc B\left(\frac{\varepsilon}{1/2 + \varepsilon}\right) \right) \\
			&\leq \sum_{\substack{t\in [T]\\ s^t = s_2}} 2\varepsilon \\
			&\leq 2\varepsilon T.
	\end{align*}
	Setting $\varepsilon = (0.01)^2/T$ and plugging into \Cref{eq:imppinsk}, we get
	\begin{equation}\label{eq:probbimp}
			\mathbb{P}_{\ib}\left(\sum_{t\in [T]}\left[ \varepsilon\xi^t[s_1, a] + \varepsilon \xi^t[s_2, b] - \varepsilon\right] \leq 0\right) \geq 0.99 - \delta.
	\end{equation}
	To conclude the proof we show that, with high probability, the receiver suffers linear regret in instance $\ib$. It is easy to check that the optimal IC mechanism in instance $\ib$ is the mechanism $\xivec^\star$ such that $\xi^\star[s_1, a] = 1$ and $\xi^\star[s_2, b] = 1$, yielding an expected utility to the receiver $U_\med(\xivec^\star) = 1-\varepsilon$. Hence, the cumulative regret can be expressed as 
	\begin{align*}
		R^T &= \sum_{t\in [T]}\left[1-\varepsilon - \left(\frac{1}{2} - \varepsilon\right)\left(\xi^t[s_1, a] + \xi^t[s_2, b]\right) - \varepsilon \right] \\
		&= \sum_{t\in [T]}\left[1 - 2\varepsilon - \left(\frac{1}{2} - \varepsilon\right)\left(\xi^t[s_1, a] + \xi^t[s_2, b]\right) \right].
	\end{align*}
	From \Cref{eq:probbimp} it follows that with probability grater than $0.99 - \delta$ 
	\[
		\sum_{t\in [T]}\left[\xi^t[s_1, a] + \xi^t[s_2, b] \right] \leq T,
	\]
	thus, with probability at least $0.99 - \delta$, it holds that
	\[
		R^T \geq \left(1 - 2\varepsilon \right) T - \frac{1}{2} \left(1 - 2\varepsilon \right) T = \frac{1}{2} \left(1 - 2\varepsilon \right) T,
	\]
	which gives the result.
\end{proof}
\section{Proofs Omitted from Section \ref{sec:full}}
\lemmaestfull*

\begin{proof}
	First, notice that for each $t \in [T]$, for all $c\in\C$ and for all $\theta\in\Theta$, it holds that
	\[
		\mathbb{E}\left[\mathbbm{1}\left[c(\svec^t) = c, \theta^t = \theta\right]\right] = \mathbb P\left(c, \theta\right) = p(\theta)\mathbb{P}\left(c\vert\theta\right) = \sum_{\svec\in c} p(\theta)\psi_{\ag}(\svec\vert\theta),
	\]
	where $\mathbb P\left(c, \theta\right)$ denotes the joint probability of sampling the state of nature $\theta$ and a signal profile belonging to class $c$, and $\mathbb{P}\left(c\vert\theta\right)$ is the probability of sampling a signal belonging to class $c$ given that the state of nature sampled is $\theta$. Then, for all $i\in\{\med, \ag\}$, for all $t\in [T]$, for all $c\in\C$ and for all $a\in\A$, it holds that
	\begin{align*}
		\mathbb{E}[\hat r_i^t[c, a]] &= \mathbb{E}\left[\frac{1}{t}\sum_{\theta\in\Theta} u_i(a,\theta)\sum_{\tau\in [t]} \mathbbm{1}\left[c(\svec^\tau) = c, \theta^\tau = \theta\right]\right] \\
		&= \sum_{\theta\in\Theta} u_i(a,\theta)\frac{1}{t}\sum_{\tau\in [t]} \mathbb{E}\left[\mathbbm{1}\left[c(\svec^\tau) = c, \theta^\tau = \theta\right]\right] \\
		&= \sum_{\theta\in\Theta}\sum_{\svec\in c} u_i(a,\theta) p(\theta)\psi_{\ag}(\svec\vert\theta) \\
		&= r_i[c, a].
	\end{align*}
	This concludes the proof.
\end{proof}

\lemmaconcfull*
\begin{proof}
	For each $\tau\in [T]$, for each $i\in\{\med,\ag\}$, for each $c\in\C$ and for each $a\in\A$, let $\tilde r_i^\tau[c, a] := \sum_{\theta\in\Theta} u_i(a,\theta) \mathbbm{1}\left[c(\svec^\tau) = c, \theta^\tau = \theta\right]$. Notice that $\mathbb{E}\left[\tilde \rvec_i^\tau\right] = \rvec_i$, and $\hat \rvec_i^t = \frac{1}{t}\sum_{\tau\in [t]} \tilde\rvec_i^\tau$, for each $t\in [T]$. Then, since it holds that $0\preceq \tilde\rvec_i^\tau\preceq 1$, where $\preceq$ denotes the element-wise $\leq$ operator, by Hoeffding's inequality we have that: for each $i\in\{\med,\ag\}$, for each $t\in [T]$, for each $c\in\C$ and for each $a\in\A$ it holds that 
	\[
		\mathbb{P}\left(| r_i[c, a] - \hat r_i^t[c,a]| \geq \sqrt{\frac{\log(4T |\C||\A|/\delta)}{2(t-1)}} \right) \leq \frac{\delta}{2T|\C||\A|}.
	\]
	The result follows by applying a union bound.
\end{proof}
\begin{lemma}\label{le:lemmautfull}
	Let $\rvec, \rvec^\prime\in\Reals^{|\C||\A|}$ be two vectors such that $||\rvec - \rvec^\prime||_{\infty}\leq \varepsilon$. Then, for each $\xivec\in\Xi$ it holds that: 
	\[
		| \xivec^\top \rvec - \xivec^\top\rvec^\prime | \leq |\C|\varepsilon.
	\]
\end{lemma}
\begin{proof}
	By Cauchy-Schwarz inequality it follows that: 
	\[
		|\xivec^\top (\rvec - \rvec^\prime)| \leq ||\xivec||_1 ||\rvec - \rvec^\prime||_{\infty}.
	\]
	The result follows since by assumption $||\rvec - \rvec^\prime||_{\infty}\leq \varepsilon$ and since 
	\[
		||\xivec||_1 = \sum_{c\in\C}\sum_{a\in\A}\xi[c, a] = \sum_{c\in\C} 1 = |\C|.
	\]
\end{proof}

\begin{lemma}\label{le:lemmadevfull}
	Let $\rvec, \rvec^\prime\in\Reals^{|\C||\A|}$ be two vectors such that $||\rvec - \rvec^\prime||_{\infty}\leq \varepsilon$. Then, for each $\varphi\in\Phi$ and $\xivec\in\Xi$ it holds that 
	\[
		\left\vert\xivec^\top \Amat_{\ag}^\varphi \rvec - \xivec^\top \Amat_{\ag}^\varphi \rvec^\prime \right\vert\leq |\C|\varepsilon.
	\]
\end{lemma}
\begin{proof}
	To prove this Lemma, we first show that for any $\xivec\in\Xi$ and $\varphi\in\Phi$, there exists a mechanism $\xivec^\prime\in\Xi$ such that $\xivec^\top \Amat_{\ag}^\varphi = \left(\xivec^\prime\right)^\top$ and then exploit Lemma \ref{le:lemmautfull} to obtain the desired result. 
	
	Notice that, by letting $\xivec^\prime$ be such that $\xivec^\top \Amat_{\ag}^\varphi = \left(\xivec^\prime\right)^\top$, we obtain that $\forall c\in\C$ and $a\in\A$,
	\[
	\xi^\prime[c, a] = \frac{1}{|c|}\sum_{\svec\in c}\xi[c(\psi(s_i), \svec_{-i}), a].
	\]
	Trivially, $\xivec^\prime\in\Reals^{|\C| |\A|}_{\geq 0}$. In order to show that $\xivec^\prime\in\Xi$, we need to show that $\forall c\in\C$, $\sum_{a\in\A}\xi^\prime[c, a] = 1$. To this extent, $\forall c\in\C$, we can write the following: 
	\begin{align*}
		\sum_{a\in\A} \xi^\prime[c, a] &= \frac{1}{|c|}\sum_{a\in\A}\sum_{\svec\in c} \xi[c(\psi(s_i), \svec_{-i}), a] \\
		&= \frac{1}{|c|}\sum_{\svec\in c} \sum_{a\in\A}\xi[c(\psi(s_i), \svec_{-i}), a] \\
		&= \frac{1}{|c|}\sum_{\svec\in c} 1 \\
		&= 1,
	\end{align*}
	which gives $\xivec^\prime\in\Xi$. Thus, 
	\begin{align*}
		\left\vert\xivec^\top \Amat_{\ag}^\varphi \rvec - \xivec^\top \Amat_{\ag}^\varphi \rvec^\prime \right\vert &= \left\vert\left(\xivec^\prime\right)^\top\rvec -  \left(\xivec^\prime\right)^\top\rvec^\prime \right\vert \\
		&\leq |\C|\varepsilon,
	\end{align*}
	where the last inequality follows from Lemma \ref{le:lemmautfull}. 
	
	This concludes the proof.
	\end{proof}

\begin{lemma}\label{le:lemmafeasiblefull}
	Let $\xivec^\star$ be an optimal solution to $\LP(0, \rvec_{\med}, \rvec_{\ag})$. Then, if event $\E_\delta^\full$ holds, $\xivec^\star$ is a feasible mechanism for $\LP(2|\C|\varepsilon_\delta^t, \hat\rvec_{\med}^t, \hat\rvec_{\ag}^t)$ for all $t\in [T]$.
\end{lemma}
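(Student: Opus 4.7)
The plan is to verify that the single mechanism $\xivec^\star$ satisfies the relaxed IC constraints defined with respect to the estimated utility vector $\rvec_\ag^{\hat\psi^t}$, by using the fact that under $\E_\delta^\full$ the estimated signaling scheme $\hat\psi^t$ is close to the true $\psi_\ag$ in the sense required by Lemmas \ref{le:lemmautfull} and \ref{le:lemmadevfull}.

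First I would note that $\xivec^\star \in \Xi$ by construction, so membership in the feasible polytope is immediate; what has to be checked is the perturbed IC inequality, namely that for every $\varphi \in \Phi$,
\[
\left(\xivec^\star\right)^\top\left(\Amat_\ag^\varphi \rvec_\ag^{\hat\psi^t} - \rvec_\ag^{\hat\psi^t}\right) \leq 2n|S||\Theta|\varepsilon_\delta^t.
\]
Since $\xivec^\star$ is optimal for $\LP(0, \rvec_\med^{\psi_\ag}, \rvec_\ag^{\psi_\ag})$, it satisfies the analogous inequality with $0$ on the right-hand side and the true signaling scheme on the left. So the task reduces to bounding the two-term discrepancy between the true and estimated versions of the IC expression.

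Next I would invoke the hypothesis that $\E_\delta^\full$ holds, which gives $\|\hat\psi^t(\cdot\vert\theta) - \psi_\ag(\cdot\vert\theta)\|_\infty \leq \varepsilon_\delta^t/p(\theta)$ for every $\theta \in \Theta$. This is exactly the assumption needed to apply Lemma \ref{le:lemmautfull} to the vector $\rvec_\ag$ and Lemma \ref{le:lemmadevfull} to $\Amat_\ag^\varphi \rvec_\ag$. I would therefore write
\[
\left|\left(\xivec^\star\right)^\top \rvec_\ag^{\hat\psi^t} - \left(\xivec^\star\right)^\top \rvec_\ag^{\psi_\ag}\right| \leq n|S||\Theta|\varepsilon_\delta^t,
\]
\[
\left|\left(\xivec^\star\right)^\top \Amat_\ag^\varphi \rvec_\ag^{\hat\psi^t} - \left(\xivec^\star\right)^\top \Amat_\ag^\varphi \rvec_\ag^{\psi_\ag}\right| \leq n|S||\Theta|\varepsilon_\delta^t,
\]
for every $\varphi \in \Phi$.

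Finally, combining these two bounds via the triangle inequality and using feasibility of $\xivec^\star$ for $\LP(0, \rvec_\med^{\psi_\ag}, \rvec_\ag^{\psi_\ag})$ yields
\[
\left(\xivec^\star\right)^\top\left(\Amat_\ag^\varphi \rvec_\ag^{\hat\psi^t} - \rvec_\ag^{\hat\psi^t}\right) \leq \left(\xivec^\star\right)^\top\left(\Amat_\ag^\varphi \rvec_\ag^{\psi_\ag} - \rvec_\ag^{\psi_\ag}\right) + 2n|S||\Theta|\varepsilon_\delta^t \leq 2n|S||\Theta|\varepsilon_\delta^t,
\]
which is the claimed feasibility. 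There is no real obstacle: the proof is essentially an assembly of the two perturbation lemmas already established, and the factor $2$ in the slack $\nu^t = 2n|S||\Theta|\varepsilon_\delta^t$ is exactly what is needed so that one copy of the $n|S||\Theta|\varepsilon_\delta^t$ error absorbs the perturbation on $\Amat_\ag^\varphi\rvec_\ag$ and the other absorbs the perturbation on $\rvec_\ag$.
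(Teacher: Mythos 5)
Your proposal is correct and follows essentially the same route as the paper's proof: invoke $\E_\delta^\full$ to apply Lemma~\ref{le:lemmautfull} to $\rvec_\ag$ and Lemma~\ref{le:lemmadevfull} to $\Amat_\ag^\varphi\rvec_\ag$, each contributing an $n|S||\Theta|\varepsilon_\delta^t$ error, and then combine with the feasibility of $\xivec^\star$ for $\LP(0, \rvec_\med^{\psi_\ag}, \rvec_\ag^{\psi_\ag})$ to absorb the true-scheme IC term. Nothing is missing.
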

\begin{proof}
	Given that event $\E^\full_\delta$ holds, we have that $\forall t\in[T]$, $||\rvec_{\ag} - \hat\rvec_{\ag}^{t}||_{\infty}\leq \varepsilon_\delta^t$.
	Thus, for all $t\in[T]$, by Lemma \ref{le:lemmautfull}, we have that 
	\[
	\left\vert(\xivec^\star)^\top \rvec_{\ag} - (\xivec^\star)^\top \hat\rvec_{\ag}^t \right\vert \leq |\C|\varepsilon_\delta^t.
	\]
	Furthermore, by Lemma \ref{le:lemmadevfull}, $\forall t\in[T]$ and $\forall\varphi\in\Phi$, 
	\[
	\left\vert\left(\xivec^\star \right)^\top\Amat_{\ag}^\varphi\rvec_{\ag}- \left(\xivec^\star \right)^\top\Amat_{\ag}^\varphi \hat\rvec_{\ag}^t\right\vert \leq |\C|\varepsilon_\delta^t.
	\]
	Hence, $\forall \varphi\in\Phi$, 
	\begin{align*}
		\left(\xivec^\star\right)^\top\left(\Amat_{\ag}^\varphi \hat\rvec_{\ag}^t - \hat\rvec_{\ag}^t\right) &\leq
		\left(\xivec^\star\right)^\top\left(\Amat_{\ag}^\varphi \rvec_{\ag} - \rvec_{\ag}\right) + 2 |\C|\varepsilon_\delta^t \\
		&\leq 2 |\C|\varepsilon_\delta^t,
	\end{align*}
	where the last inequality follows from the fact that, since $\xivec^\star$ is an optimal solution to $\LP(0, \rvec_{\med}, \rvec_{\ag})$, $\left(\xivec^\star\right)^\top\left(\Amat_{\ag}^\varphi \rvec_{\ag} - \rvec_{\ag}\right)\leq 0$ for all $\varphi\in\Phi$. This concludes the proof.
\end{proof}

\thfull*
\begin{proof}
	To prove this Theorem, we first derive the upper bound for the cumulative regret $R^T$ and then show the upper bound for cumulative IC violation $V^T$.
	Throughout this proof, we assume that event $\E_\delta^\full$ holds (let us remark that, by Lemma \ref{le:lemmaconcfull}, $\mathbb{P}\left(\E_\delta^\full\right)\geq 1-\delta$.
	\paragraph{Regret.}
	Recall the definition of cumulative regret:
	\[
	R^T = \sum_{t\in [T]} \left[\left(\xivec^\star\right)^\top \rvec_{\med} - \left(\xivec^t \right)^\top\rvec_{\med}\right].
	\]
	Since event $\E^\full_\delta$ holds, we have that $\forall t\in[T]$, $||\rvec_{\med} - \hat\rvec_{\med}^t ||_{\infty}\leq \varepsilon^t_\delta$. Then $\forall t\in [T]$, by Lemma \ref{le:lemmautfull}, we have that 
	\begin{align*}
		(\xivec^\star)^\top \rvec_{\med} &\leq (\xivec^\star)^\top \hat\rvec_{\med}^t + |\C|\varepsilon_{\delta}^t \\
		(\xivec^t)^\top \rvec_{\med} &\geq(\xivec^t)^\top \hat\rvec_{\med}^t - |\C|\varepsilon_{\delta}^t
	\end{align*} 
	Furthermore, since, by Lemma \ref{le:lemmafeasiblefull}, $\forall t\in [T]$ $\xivec^\star$ is a feasible mechanism for $\LP(2|\C|\varepsilon_\delta^t, \hat\rvec_{\med}^t, \hat\rvec_{\ag}^t)$, and since $\xivec^t$ is chosen as an optimal solution to the same linear optimization problem, we have that 
	\[
	\left(\xivec^\star\right)^\top \hat\rvec_{\med}^{t} \leq \left(\xivec^t\right)^\top \hat\rvec_{\med}^t.
	\]
	By plugging into the definition of cumulative regret we obtain the following: 
	\begin{align*}
		R^T &= \sum_{t\in [T]} \left[\left(\xivec^\star\right)^\top \rvec_{\med} - \left(\xivec^t \right)^\top\rvec_{\med}\right] \\
		&\leq \sum_{t\in [T]} \left[\left(\xivec^\star\right)^\top \hat\rvec_{\med}^t - \left(\xivec^t\right)^\top \hat\rvec_{\med}^t + 2|\C|\varepsilon_\delta^t\right] \\
		&\leq \sum_{t\in [T]} 2|\C|\varepsilon_\delta^t \\
		&\leq |\C|\sqrt{8T\log(4T|\C||\A|/\delta)},
	\end{align*}
	where the last inequality follows from $\sum_{t\in [T]}\frac{1}{\sqrt{t}} \leq 2\sqrt{T}$. 
	
	This concludes the first part of the proof.
	\paragraph{IC violation.}
	Recall the definition of cumulative IC violations 
	\[
	V^T = \sum_{t\in[T]} \left[\max_{\varphi\in\Phi} \left(\xivec^t\right)^\top\left(\Amat_{\ag}^\varphi\rvec_{\ag} 
	- \rvec_{\ag}\right)\right].
	\]
	Since event $\E^\full_\delta$ holds, we have that $\forall t\in[T]$, $||\rvec_{\med} - \hat\rvec_{\med}^t||_{\infty}\leq \varepsilon_\delta^t$. Then $\forall t\in [T]$, by Lemma \ref{le:lemmautfull}, we have that 
	\[
	\left(\xivec^t \right)^\top\rvec_{\ag} \geq \left(\xivec^t\right)^\top \hat\rvec_{\ag}^t - |\C|\varepsilon_\delta^t.
	\]
	Furthermore, by Lemma \ref{le:lemmadevfull}, $\forall \varphi\in\Phi$ 
	\[
	\left(\xivec^t\right)^\top\Amat_{\ag}^\varphi\rvec_{\ag} \leq \left(\xivec^t\right)^\top\Amat_{\ag}^\varphi\hat\rvec_{\ag}^t + |\C|\varepsilon_\delta^t.
	\]
	By plugging into the definition of $V^T$, we get
	\begin{align*}
		V^T &= \sum_{t\in[T]} \left[\max_{\varphi\in\Phi} \left(\xivec^t\right)^\top\left(\Amat_{\ag}^\varphi\rvec_{\ag} 
		- \rvec_{\ag}\right)\right]\\
		&\leq \sum_{t\in[T]} \left[\max_{\varphi\in\Phi} \left(\xivec^t\right)^\top\left(\Amat_{\ag}^\varphi\hat\rvec_{\ag}^t 
		- \hat\rvec_{\ag}^t\right) + 2 |\C|\varepsilon_\delta^t\right].
	\end{align*}
	Notice that, since  $\forall t\in [T]$ $\xivec^t$ is chosen as an optimal solution to $\LP(2|\C|\varepsilon_\delta^t, \hat\rvec_{\med}^t, \hat\rvec_{\ag}^t)$, 
	\[
	\max_{\varphi\in\Phi} \left(\xivec^t\right)^\top\left(\Amat_{\ag}^\varphi\hat\rvec_{\ag}^t 
	- \hat\rvec_{\ag}^t\right) \leq 2|\C|\varepsilon_\delta^t.
	\]
	Thus, 
	\begin{align*}
		V^T &\leq \sum_{t\in[T]} \left[\max_{\varphi\in\Phi} \left(\xivec^t\right)^\top\left(\Amat_{\ag}^\varphi\hat\rvec_{\ag}^t 
		- \hat\rvec_{\ag}^t\right) + 2 |\C|\varepsilon_\delta^t\right] \\
		&\leq \sum_{t\in [T]}4|\C|\varepsilon_\delta^t\\
		&\leq |\C|\sqrt{16T\log(4T|\C||\A|/\delta)},
	\end{align*}
	where the last inequality follows from $\sum_{t\in [T]}\frac{1}{\sqrt{t}} \leq 2\sqrt{T}$.
	
	This concludes the proof.
\end{proof}
\section{Proofs Omitted from Section \ref{sec:bandit}}

\lemmaestband*

\begin{proof}
	Fix a round $t\in [T]$ and $j\in\left\{\med, \ag\right\}$. Let us first notice that $\forall \tau\in [t]$ and for any $c\in\C$ that $\pi^t[c, a^\tau]= 1$
	\begin{align*}
		\mathbb{E}\left[u_j(a^\tau, \theta^\tau)\mathbbm{1}\left[s^\tau\in c\right]\right] &= \sum_{\svec\in c}\sum_{\theta\in\Theta}\mathbb{P}\left(\theta^\tau=\theta, \svec^\tau=\svec\right) u_j(a, \theta) \\
		&= \sum_{\svec\in c}\sum_{\theta\in\Theta}p(\theta)\psi_\ag(\svec\vert\theta)u_j(a, \theta) \\ 
		&= r_j[c, a].
	\end{align*}
	Furthermore, recall that $N^t[c, a]$ is defined such that $N^t[c, a] = \sum_{\tau\in [t]}\pi^\tau[c, a]$. Thus, for any class $c\in\C$ and action $a\in\A$, 
	\begin{align*}
		\mathbb{E}\left[\hat r^t[c, a]\right] &= \mathbb{E}\left[\frac{1}{N^t[c, a]}\sum_{\substack{\tau\in [t]\\\pi^\tau[c, a]=1}} u_j(a^\tau, \theta^\tau)\mathbbm{1}\left[s^\tau\in c\right]\right] \\
		&= r_j[c, a].
	\end{align*}
	This concludes the proof.
\end{proof}

\lemmaconcbandit*
\begin{proof}
	For $\delta\in (0,1)$, given that for any $j\in\left\{\med,\ag\right\}$ and $t\in [T]$, $\hat\rvec_j$ is an unbiased estimator of $\rvec^{\psi_\ag}_j$ (Lemma \ref{le:lemmaestbandit}) we can use Hoeffding's inequality to state the following: 
	\[
		\mathbb{P}\left(|\hat r^t_j[c,a] - r_j[c, a]| \geq \sqrt{\frac{\log( 8T|\C||\A|/\delta)}{2 N^t[c, a]}}\right) \leq \frac{\delta}{4T|\C||\A|}\quad\forall t\in [T], \forall c\in\C, \forall a\in\A,\forall j\in\left\{\med,\ag\right\}.
	\]
	Then, we can apply a union bound on $[T]$, $\C$, $\A$ and $\left\{\med, \ag\right\}$ and obtain
	\[
		\mathbb{P}\left(|\hat r_j^t[c, a] - r_j[c, a]| \leq \sqrt{\frac{\log( 8T|\C||\A|/\delta)}{2 N^t[c, a]}}\quad\forall t\in [T], \forall c\in\C, \forall a\in\A,\forall j\in\left\{\med,\ag\right\}\right) \geq 1-\frac{\delta}{2},
	\]
	which gives the result.
\end{proof}

\begin{lemma}\label{le:lemmafeasibleband}
	Let $\xivec^\star$ be an optimal solution to $\LP(0, \rvec_{\med}, \rvec_{\ag})$. Furthermore, define $\nu \defeq 2  |\C|\sqrt{\frac{\log( 8T|\C||\A|/\delta)}{2 E}}$. Then, if event $\E_\delta^\bandit$ holds, $\xivec^\star$ is a feasible mechanism for $\LP(\nu, \hat\rvec_{\med}^t + \etavec_\delta^t, \hat\rvec_{\ag}^t)$ for all $t\in [|\A|E+1, T]$.
\end{lemma}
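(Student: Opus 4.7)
The plan is to combine three ingredients: (i) the uniform estimation guarantee provided by the event $\E_\delta^\bandit$, (ii) a lower bound on the counters $N^t[c,a]$ that is forced by the exploration phase of Algorithm~\ref{alg:bandit}, and (iii) the feasibility of $\xivec^\star$ in $\LP(0,\rvec_\med^{\psi_\ag},\rvec_\ag^{\psi_\ag})$ together with the observation that multiplication by $\Amat_\ag^\varphi$ maps a mechanism to (a vector induced by) another mechanism in $\Xi$.

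First, I would establish a uniform upper bound on $\etavec_\delta^t$ for $t > |\A|E$. The exploration branch of the algorithm plays the deterministic mechanism $\pi^t[c,a] = 1$ for \emph{all} $c \in \C$ at each exploration round, and chooses $a$ as the argmin of $\sum_{c \in \C} N^t[c,a]$. A straightforward induction shows that this round-robin rule plays each action for exactly $E$ rounds during the first $|\A|E$ steps, so $N^t[c,a] \geq E$ for every $(c,a)\in\C\times\A$ and every $t > |\A|E$. Plugging this into the definition \eqref{eq:confid} gives, on $\E_\delta^\bandit$,
\[
\eta_\delta^t[c,a] \leq \sqrt{\frac{\log(8T|\C||\A|/\delta)}{2E}} \eqqcolon \overline{\eta} \quad \forall (c,a) \in \C \times \A, \; t > |\A| E.
\]

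Next, for any deviation $\varphi \in \Phi$ and any $t \in [|\A|E+1, T]$, I would decompose the quantity to be bounded as
\[
(\xivec^\star)^\top\!\left(\Amat_\ag^\varphi \hat\rvec_\ag^t - \hat\rvec_\ag^t\right)
= (\xivec^\star)^\top\!\left(\Amat_\ag^\varphi \rvec_\ag^{\psi_\ag} - \rvec_\ag^{\psi_\ag}\right)
+ (\xivec^\star)^\top \Amat_\ag^\varphi \bigl(\hat\rvec_\ag^t - \rvec_\ag^{\psi_\ag}\bigr)
- (\xivec^\star)^\top \bigl(\hat\rvec_\ag^t - \rvec_\ag^{\psi_\ag}\bigr).
\]
The first term is non-positive because $\xivec^\star$ is feasible for $\LP(0,\rvec_\med^{\psi_\ag},\rvec_\ag^{\psi_\ag})$. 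For the last term, $\sum_a \xi^\star[c,a]=1$ for every $c$, so under $\E_\delta^\bandit$ it is bounded by $\sum_{c,a} \xi^\star[c,a] \eta_\delta^t[c,a] \leq |\C|\,\overline{\eta}$.

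The step I expect to be slightly delicate, and hence the main obstacle, is the middle term. The idea is to set $\tilde\xivec^\top \defeq (\xivec^\star)^\top \Amat_\ag^\varphi$ and observe (mirroring the computation inside Lemma~\ref{le:lemmadevfull}) that
\[
\tilde\xi[c',a'] = \frac{1}{|c'|} \sum_{\svec \in c'} \xi^\star\bigl[c(\varphi(s_i),\svec_{-i}),\, a'\bigr],
\]
from which $\tilde\xi[c',a'] \geq 0$ and $\sum_{a'} \tilde\xi[c',a'] = 1$ follow, so $\tilde\xivec \in \Xi$. Consequently, on $\E_\delta^\bandit$,
\[
\bigl|(\xivec^\star)^\top \Amat_\ag^\varphi \bigl(\hat\rvec_\ag^t - \rvec_\ag^{\psi_\ag}\bigr)\bigr|
\leq \sum_{c,a} \tilde\xi[c,a]\,\eta_\delta^t[c,a] \leq |\C|\,\overline{\eta}.
\]
Combining the three pieces yields
\[
(\xivec^\star)^\top\!\left(\Amat_\ag^\varphi \hat\rvec_\ag^t - \hat\rvec_\ag^t\right) \leq 2|\C|\,\overline{\eta} = \nu,
\]
which holds for every $\varphi \in \Phi$ and every $t \in [|\A|E+1, T]$. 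Since $\xivec^\star \in \Xi$ automatically, this shows $\xivec^\star$ is feasible for $\LP(\nu, \hat\rvec_\med^t + \etavec_\delta^t, \hat\rvec_\ag^t)$, completing the proof.
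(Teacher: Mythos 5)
Your proposal is correct and follows essentially the same route as the paper's proof: lower-bound the counters by $E$ via the exploration phase, use $\E_\delta^\bandit$ to control $\hat\rvec_\ag^t - \rvec_\ag^{\psi_\ag}$ entrywise, exploit the fact that $(\xivec^\star)^\top\Amat_\ag^\varphi$ is the transpose of another mechanism in $\Xi$ (as in Lemma~\ref{le:lemmadevfull}), and add the two $|\C|\overline\eta$ error terms to the non-positive true-LP constraint value. Your explicit three-term decomposition and the weighted bound $\sum_{c,a}\tilde\xi[c,a]\eta_\delta^t[c,a]$ are just a slightly more detailed rendering of the paper's H\"older-type estimate $\|\xivec\|_1\|\etavec_\delta^t\|_\infty$.
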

\begin{proof}
	Fix a round $t\in [|\A|E+1, T]$. 
	First, let us notice that the initial exploration phase guarantees that all the counters $N^t[c,a]$ are lower bounded. Formally:
	\[
	N^t[c,a] \geq E\quad\forall c\in\C,\,\forall a\in\A.
	\]
	Hence, we have that 
	\begin{equation}
		||\etavec^t_\delta||_\infty = \max_{\substack{c\in\C\\ a\in\A}} \sqrt{\frac{\log( 8T|\C||\A|/\delta)}{2 N^t[c,a]}} \leq \sqrt{\frac{\log( 8T|\C||\A|/\delta)}{2 E}}. \label{eq:normboundband}
	\end{equation}	
	Furthermore, when event $\E_\delta^\bandit$ is verified, we have that 
	\begin{equation}\label{eq:evband}
		||\hat\rvec^t_\ag - \rvec_\ag ||_\infty \leq ||\etavec^t_\delta||_\infty \leq \sqrt{\frac{\log( 8T|\C||\A|/\delta)}{2 E}}.
	\end{equation}	
	%
	Then, by \cref{le:lemmautfull}, it holds that for each $\xivec\in\Xi$
	\begin{align}
		\left\vert\xivec^\top \left(\hat\rvec^t_\ag - \rvec_\ag\right)\right\vert \leq |\C|\sqrt{\frac{\log( 8T|\C||\A|/\delta)}{2 E}} \label{eq:boundnormband}
	\end{align}
Additionally, by \cref{le:lemmadevfull}, for all $\varphi\in\Phi$, it holds that 
\[
	|(\xivec^\star)^\top\Amat_\ag^\varphi \rvec_\ag - (\xivec^\star)^\top\Amat_\ag^\varphi \hat\rvec_\ag^t| \leq |\C|\sqrt{\frac{\log( 8T|\C||\A|/\delta)}{2 E}}.
\] 
Thus, we get that for each $\varphi\in\Phi$
\[
	\left(\xivec^\star\right)^\top \left(\Amat_\ag^\varphi \hat\rvec_\ag^t - \hat\rvec_\ag^t\right) \leq \left(\xivec^\star\right)^\top \left(\Amat_\ag^\varphi \rvec_\ag - \rvec_\ag\right) + 2  |\C|\sqrt{\frac{\log( 8T|\C||\A|/\delta)}{2 E}} \leq 2  |\C|\sqrt{\frac{\log( 8T|\C||\A|/\delta)}{2 E}},
\]
where the last equation follows from the fact that $\xivec^\star$ is a feasible solution for $\LP(0, \rvec_{\med}, \rvec_{\ag})$. This concludes the proof.
\end{proof}

\begin{lemma}\label{le:icexpl}
	For any $E\in [T/\A ]$, the following holds:
	\[
		\sum_{t=1}^{|\A|E}\max_{\varphi\in\Phi} \left(\xivec^t\right)^\top\left(\Amat_\ag^\varphi\rvec_\ag 
		- \rvec_\ag\right) = 0 
	\] 
\end{lemma}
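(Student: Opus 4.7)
The plan is to show that every mechanism $\xivec^t$ selected during the exploration phase satisfies the IC constraints \emph{exactly}, i.e.\ contributes zero to the per-round violation. Summing zeros over the first $|\A|E$ rounds then immediately gives the claim.

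By construction of Algorithm~\ref{alg:bandit}, for each exploration round $t\le|\A|E$ there is an action $a\in\A$ such that $\xi^t[c,a']=\mathbbm{1}[a'=a]$ for every $c\in\C$. Intuitively, the receiver ignores the reported signal profile and always plays $a$, so no sender can possibly gain by misreporting. I plan to make this precise by computing $(\xivec^t)^\top\Amat_\ag^\varphi$ component-wise and showing it equals $(\xivec^t)^\top$. Fix any deviation $\varphi\in\Phi$ and any pair $(c'',a'')$. Using the block structure $A_\ag^\varphi[(c,a'),(c'',a'')]=0$ whenever $a'\neq a''$, together with $\xi^t[c,a']=\mathbbm{1}[a'=a]$, only the term with $a'=a''=a$ can be nonzero, giving
\[
\big((\xivec^t)^\top\Amat_\ag^\varphi\big)[c'',a''] \;=\; \mathbbm{1}[a''=a]\sum_{c\in\C} A_\ag^\varphi[(c,a),(c'',a)].
\]
The key combinatorial identity is $\sum_{c\in\C} A_\ag^\varphi[(c,a),(c'',a)]=1$: indeed, from the definition of $\Amat_\ag^\varphi$ this sum equals $|c''|^{-1}\sum_{\svec\in c''}\sum_{c\in\C}\mathbbm{1}[(\varphi(s_i),\svec_{-i})\in c]$, and since the classes $\C$ partition $\Si$, the inner sum is $1$ for every $\svec$. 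Hence $\big((\xivec^t)^\top\Amat_\ag^\varphi\big)[c'',a''] = \mathbbm{1}[a''=a] = \xi^t[c'',a'']$, so $(\xivec^t)^\top\Amat_\ag^\varphi=(\xivec^t)^\top$.

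Consequently $(\xivec^t)^\top(\Amat_\ag^\varphi\rvec_\ag^{\psi_\ag}-\rvec_\ag^{\psi_\ag})=0$ for every $\varphi\in\Phi$ and every $t\in[1,|\A|E]$. Taking the maximum over $\varphi$ preserves the zero, so each summand in the exploration-phase portion of $V^T$ vanishes, proving $\sum_{t=1}^{|\A|E}\max_{\varphi\in\Phi}(\xivec^t)^\top(\Amat_\ag^\varphi\rvec_\ag^{\psi_\ag}-\rvec_\ag^{\psi_\ag})=0$.

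The only nontrivial step is the identity $\sum_{c}A_\ag^\varphi[(c,a),(c'',a)]=1$, but this is really just saying that after applying $\varphi$ to sender $i$'s signal, the resulting profile still lies in \emph{some} class; everything else is direct substitution of the definitions. This matches the intuition already flagged in the text that a constant (action-independent of the class) mechanism is trivially IC.
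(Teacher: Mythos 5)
Your proof is correct and follows essentially the same route as the paper's: both arguments reduce to the observation that a deviation operator maps the constant-action exploration mechanism back to itself, the paper phrasing this via the induced mechanism $\xivec^\prime$ with $\xi^\prime[c,a^\prime]=1$ and you phrasing it as the identity $(\xivec^t)^\top\Amat_\ag^\varphi=(\xivec^t)^\top$. Your explicit verification of $\sum_{c}A_\ag^\varphi[(c,a),(c^{\prime\prime},a)]=1$ via the partition property is exactly the ``simple calculations'' the paper leaves implicit.
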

\begin{proof}
	Note that, for each $\xivec^t$ in the exploration phase, by definition there exists an action $a^\prime\in\A$ such that $\xi[c, a^\prime]=1$, for any $c\in\C$.
	Then, 
	\[
		\left(\xivec^t\right)^\top \rvec_\ag = \sum_{c\in\C}\sum_{\svec\in c}\sum_{\theta\in\Theta} p(\theta)\psi_\ag(\svec\vert\theta)u_\ag(a^\prime,\theta)
		= \sum_{\theta\in\Theta} p(\theta)u_\ag(a^\prime, \theta).
	\]
	Furthermore, similarly to the proof of Lemma \ref{le:lemmadevfull}, for each $\varphi\in\Phi$, we can define a mechanism $\xivec^\prime$ such that $\left(\xivec^\prime\right)^\top\rvec_\ag = \left(\xivec^t\right)^\top\Amat_\ag^\varphi\rvec_\ag$, 
	where 
	\[
	\xi^\prime[c, a] = \frac{1}{|c|}\sum_{\svec\in c}\xi^t[c(\varphi(s_i), \svec_{-i}), a]\quad\forall c\in\C\,\forall a\in\A.
	\]
	Therefore, by simple calculations, we have that $\forall c\in\C$, $\xi^\prime[c, a^\prime] = 1$ and $\xi^\prime[c, a] = 0$, $\forall a\neq a^\prime$, which gives
	\[
		\left(\xivec^t\right)^\top\Amat_\ag^\varphi\rvec_\ag = \sum_{c\in\C}\sum_{\svec\in c}\sum_{\theta\in\Theta} p(\theta)\psi_\ag(\svec\vert\theta)u_\ag(a^\prime,\theta)
		= \sum_{\theta\in\Theta} p(\theta)u_\ag(a^\prime, \theta).
	\]
	Putting all together, we have 
	\[
		\sum_{t=1}^{|\A|E}\max_{\varphi\in\Phi} \left(\xivec^t\right)^\top\left(\Amat_\ag^\varphi\rvec_\ag 
		- \rvec_\ag\right) = \sum_{\theta\in\Theta} p(\theta)u_\ag(a^\prime, \theta) - p(\theta)u_\ag(a^\prime, \theta) = 0,
	\]
	which gives the result.
\end{proof}

\begin{lemma}\label{lemma:azh}
	For any $\delta\in (0,1)$ and $E\in [T]$ with probability at least $1-\frac{\delta}{2}$, the following holds:
	\[
		\sum_{t=|\A|E+1}^T \left(\xivec^t\right)^\top \etavec^t_\delta \leq |\C|\A| \sqrt{2T\log( 8T|\C||\A|/\delta)}  + |\C||\A|\sqrt{2T\log(2/\delta)}
	\]
\end{lemma}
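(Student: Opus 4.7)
The plan is to split the target sum into a ``realized'' deterministic part and a martingale part, and bound each separately. Concretely, write
\[
\sum_{t=|\A|E+1}^T (\xivec^t)^\top \etavec^t_\delta \;=\; \underbrace{\sum_{t=|\A|E+1}^T (\pivec^t)^\top \etavec^t_\delta}_{S_1} \;+\; \underbrace{\sum_{t=|\A|E+1}^T (\xivec^t-\pivec^t)^\top \etavec^t_\delta}_{S_2},
\]
exploiting that at each round the receiver plays a deterministic mechanism $\pivec^t$ drawn from $\xivec^t$, so that $\mathbb{E}[\pivec^t \mid \mathcal{F}_{t-1}]=\xivec^t$.

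For $S_1$, I would use the standard counting trick: fixing $(c,a)$, the summand $\pi^t[c,a]/\sqrt{N^t[c,a]}$ is nonzero only at rounds where $\pi^t[c,a]=1$, and along such a subsequence the counter $N^t[c,a]$ takes consecutive integer values ending at $N^T[c,a]\leq T$. Hence $\sum_{t}\pi^t[c,a]/\sqrt{N^t[c,a]}\leq \sum_{k=1}^{N^T[c,a]} 1/\sqrt{k}\leq 2\sqrt{T}$. Multiplying by $\sqrt{\log(8T|\C||\A|/\delta)/2}$ (the constant factor in the definition of $\etavec^t_\delta$) and summing over the $|\C||\A|$ pairs gives the first term $|\C||\A|\sqrt{2T\log(8T|\C||\A|/\delta)}$.

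For $S_2$, I would argue that $M_t \defeq (\xivec^t - \pivec^t)^\top \etavec^t_\delta$ is a bounded martingale difference sequence with respect to the natural filtration: because $\etavec^t_\delta$ is measurable with respect to the history up to the start of round $t$ (once one adopts the convention that the confidence radii are computed from the counters prior to the round's new observation), and because $\mathbb{E}[\pivec^t\mid\mathcal{F}_{t-1}]=\xivec^t$. To bound $|M_t|$ uniformly, I would combine two structural facts: for every class $c$ the vector $\pivec^t[c,\cdot]$ is a vertex of the simplex while $\xivec^t[c,\cdot]$ is a distribution, so $\|\xivec^t-\pivec^t\|_1\leq 2|\C|$; and, since we are past the exploration phase, $N^t[c,a]\geq E$, giving a uniform $\ell_\infty$ bound on $\etavec^t_\delta$. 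Applying Azuma--Hoeffding with this bound and failure probability $\delta/2$ yields the second term $|\C||\A|\sqrt{2T\log(2/\delta)}$.

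The main obstacle will be the martingale step: obtaining a sharp enough uniform increment bound to match the stated $|\C||\A|\sqrt{2T\log(2/\delta)}$ form. The subtlety is that the confidence radii $\etavec^t_\delta$ themselves carry a $\sqrt{\log(8T|\C||\A|/\delta)}$ factor, so a naive application of Azuma would introduce an extra logarithmic term. This is why the decomposition above is useful: the logarithmic dependence is absorbed into the deterministic sum $S_1$ via the telescoping of $1/\sqrt{k}$, leaving only the simpler $\log(1/\delta)$ dependence in the probabilistic part $S_2$. A union bound coordinate-wise over the $|\C||\A|$ pairs, rather than applying Azuma to the aggregated $M_t$ directly, cleanly produces the target form.
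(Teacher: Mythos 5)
Your proposal follows essentially the same route as the paper: the identical decomposition into the realized sum $\sum_t (\pivec^t)^\top\etavec^t_\delta$ (bounded per pair $(c,a)$ by the counting argument $\sum_{k\le N^T[c,a]} 1/\sqrt{k}\le 2\sqrt{N^T[c,a]}$) plus a martingale correction controlled via Azuma--Hoeffding using $\mathbb{E}[\pivec^t\mid\mathcal{F}^{t-1}]=\xivec^t$. The subtlety you flag --- that the Azuma increments themselves carry a $\sqrt{\log(8T|\C||\A|/\delta)}$ factor not visible in the stated second term --- is real but is equally glossed over in the paper's own proof, and it only affects logarithmic factors, so it is harmless for the $\tilde{\mc O}$ guarantees the lemma feeds into.
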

\begin{proof}
	As a first step towards proving this Lemma, we leverage the Azuma-Hoeffding inequality \cite{cesa2006prediction} to show that
	\[
		\sum_{t=|\A|E+1}^T \left(\xivec^t\right)^\top \etavec^t_\delta \leq \sum_{t=|\A|E+1}^T \left[\left(\pivec^t\right)^\top \etavec^t_\delta\right] + |\C||\A|\sqrt{2T\log(2/\delta)}
	\]
	In particular, for all $t\in [|\A|E+1, T]$, we can define the random variable $X^t \defeq \sum_{\tau={|\A|E+1}}^t\left[ \left(\pivec^\tau\right)^\top\etavec^\tau_\delta - \left(\xivec^\tau\right)^\top\etavec^\tau_\delta\right]$. Note that, since deterministic mechanisms $\pivec^t$ are sampled according from $\xivec^t$, we have that, for every $t\in[|\A|E+1, T]$,
	\[
		\mathbb{E}\left[\pivec^t\vert\mc F^{t-1}\right] = \xivec^t.
	\]
	where $\mc F^{t-1}$ is the filtration generated up to time $t-1$ during the online interaction between the receiver's algorithm and the senders. Thus, for each $t\in [|\A|E+1, T]$, $X^t$ is a martingale. Hence, for $\delta\in (0,1)$, from Azuma-Hoeffding inequality we can conclude that, with probability at least $1-\frac{\delta}{2}$, it holds:
	\[
		\sum_{t=|\A|E+1}^T \left(\xivec^t\right)^\top \etavec^t_\delta \leq \sum_{t=|\A|E+1}^T\left[ \left(\pivec^t\right)^\top \etavec^t_\delta\right] + |\C||\A|\sqrt{2T\log(2/\delta)},
	\]
	which gives the desired result.
	
	Then, we conclude the proof by deriving an upper bound to $\sum_{t=|\A|E+1}^T \left(\pivec^t\right)^\top \etavec^t_\delta$. We can state the following chain of inequalities:
	\begin{align}
		\sum_{t=|\A|E+1}^T \left(\pivec^t\right)^\top \etavec^t_\delta &= \sum_{t=|\A|E+1}^T \sum_{\substack{c\in\C \\ a \in\A}} \pi^t[c, a] \sqrt{\frac{\log( 8T|\C||\A|/\delta)}{2 N^t[c, a]}}\notag \\
		&= \sum_{\substack{c\in\C \\ a \in\A}} \sum_{t=|\A|E+1}^T  \pi^t[c, a] \sqrt{\frac{\log( 8T|\C||\A|/\delta)}{2 N^t[c, a]}}\notag \\
		&= \sum_{\substack{c\in\C \\ a \in\A}} \sum_{\substack{t\in [|\A|E+1, T]:\\ \pi^t[c,a]=1}}  \sqrt{\frac{\log( 8T|\C||\A|/\delta)}{2 N^t[c, a]}}\notag \\
		&= \sum_{\substack{c\in\C \\ a \in\A}} \sum_{t=N^{|\A|E+1}[c, a]}^{N^T[c,a]} \sqrt{\frac{\log( 8T|\C||\A|/\delta)}{2 t}} \label{eq:az1}\\
		&\leq \sum_{\substack{c\in\C \\ a \in\A}} \sqrt{2\log( 8T|\C||\A|/\delta)N^T[c, a]} \label{eq:az2}\\
		&\leq |\C|\A| \sqrt{2T\log( 8T|\C||\A|/\delta)}\notag,
	\end{align}
	where \Cref{eq:az1} follows from the definition of $N^t[c,a]$ and \Cref{eq:az2} follows from the fact that ${\sum_{t=1}^T\frac{1}{\sqrt{t}}\leq 2\sqrt{T}}$. Thus, for any $\delta\in (0,1)$, we can conclude that, with probability at least $1-\frac{\delta}{2}$, 
	\begin{align*}
		\sum_{t=|\A|E+1}^T \left(\xivec^t\right)^\top \etavec^t_\delta &\leq \sum_{t=|\A|E+1}^T \left[\left(\pivec^t\right)^\top \etavec^t_\delta\right] + |\C||\A|\sqrt{2T\log(2/\delta)} \\
		&\leq |\C|\A| \sqrt{2T\log( 8T|\C||\A|/\delta)}  + |\C||\A|\sqrt{2T\log(2/\delta)}.
	\end{align*}
	This concludes the proof.
\end{proof}

\thbandit*
\begin{proof}
	To prove this Theorem, we first derive the upper bound for the cumulative regret $R^T$ and then show the upper bound for cumulative IC violation $V^T$.
	For $\delta\in (0,1)$, let $\overline \E_\delta$ be the event defined such that
	\[
		\overline\E_\delta\defeq  \left\{\sum_{t=|\A|E+1}^T \left(\xivec^t\right)^\top \etavec^t_\delta \leq |\C|\A| \sqrt{2T\log( 8T|\C||\A|/\delta)}  + |\C||\A|\sqrt{2T\log(2/\delta)}\right\}.
	\]
	By Lemma \ref{lemma:azh} we have that event $\overline\E_\delta$ is verified with probability at least $1-\frac{\delta}{2}$. Furthermore, from Lemma \ref{le:lemmaconcbandit}, we know that event $\E_\delta^\bandit$ holds with probability greater than $1-\frac{\delta}{2}$. 
	Throughout this proof, we assume that both events 
	$\overline\E_\delta$ and $\E_\delta^\bandit$ hold. Hence, by a simple union bound, it is possible to show that 
	\[
		\mathbb{P}\left(\overline\E_\delta, \E^\bandit_\delta\right) \geq 1-\delta.
	\]
	\paragraph{Regret.}
	Notice that, when event $\E_\delta^\bandit$ is verified, we can write the following:
	\[
		\rvec_\med \preceq \hat \rvec_\med^t + \etavec_\delta^t\preceq\rvec_\med + 2\etavec_\delta^t\quad \forall t\in [T],
	\]
	where $\preceq$ denotes the elemnt-wise $\leq$ relationship.
	
	Hence, given that $\Xi\subset\Reals^{|\C| |\A|}_{\geq 0}$, the following inequalities follow:
	\begin{align}
		R^T &= \sum_{t\in [T]} \left[\left(\xivec^\star\right)^\top \rvec_{\med} - \left(\xivec^t \right)^\top\rvec_{\med}\right] \notag \\
		&= \sum_{t=1}^{|\A|E} \left[\left(\xivec^\star\right)^\top \rvec_{\med} - \left(\xivec^t \right)^\top\rvec_{\med}\right] + \sum_{t=|\A|E+1}^T \left[\left(\xivec^\star\right)^\top \rvec_{\med} - \left(\xivec^t \right)^\top\rvec_{\med}\right] \notag \\
		&\leq |\A|E + \sum_{t=|\A|E+1}^T \left[\left(\xivec^\star\right)^\top \rvec_{\med} - \left(\xivec^t \right)^\top\rvec_{\med}\right] \notag \\
		&\leq |\A|E + \sum_{t=|\A|E+1}^T \left[\left(\xivec^\star\right)^\top \left(\hat \rvec_\med^t + \etavec_\delta^t\right) - \left(\xivec^t \right)^\top\rvec_{\med}\right] \label{eq:upperboundclean} \\
		&\leq |\A|E + \sum_{t=|\A|E+1}^T \left[\left(\xivec^t\right)^\top \left(\hat \rvec_\med^t + \etavec_\delta^t\right) - \left(\xivec^t \right)^\top\rvec_{\med}\right] \label{eq:regrbandfeasible}\\
		&\leq |\A|E + \sum_{t=|\A|E+1}^T \left[\left(\xivec^t\right)^\top \left(\rvec_\med + 2\etavec_\delta^t\right) - \left(\xivec^t \right)^\top\rvec_{\med}\right] \notag \\
		&= |\A|E + 2\sum_{t=|\A|E+1}^T \left(\xivec^t \right)^\top \etavec_\delta^t \notag\\
		&\leq |\A|E + 2|\C|\A| \left(\sqrt{2T\log( 8T|\C||\A|/\delta)}  + \sqrt{2T\log(2/\delta)}\right) \label{eq:regrbandaz} \\
		&\leq |\A|E + |\C|\A| \left(\sqrt{32T\log( 8T|\C||\A|/\delta)}\right), \notag
	\end{align}
	where \Cref{eq:upperboundclean} follows from the fact that event $\E_\delta^{\bandit}$ holds, \Cref{eq:regrbandfeasible} is a consequence of the fact that $\xivec^\star$ and $\xivec^t$ are, respectively a feasible (by Lemma \ref{le:lemmafeasibleband}) and an optimal (by definition) mechanism for $\LP(\nu, \hat\rvec_{\med}^t + \etavec_\delta^t, \hat\rvec_{\ag}^t)$, and \Cref{eq:regrbandaz} follows from the assumption that event $\overline\E_\delta$ holds.
	
	This concludes the first part of the proof.
	\paragraph{IC violation.}
	Let us notice that the initial exploration phase guarantees that at each round $t\in [|\A|E+1, T]$, all the counters $N^t[c,a]$ are lower bounded. Formally:
	\[
		N^t[c,a] \geq E\quad\forall t\in [|\A|E+1, T],\,\forall c\in\C,\,\forall a\in\A.
	\]
	Thus, $\etavec^t_{\delta} \preceq \etavec^{|\A|E}_\delta$, $\forall t\in [|\A|E+1, T]$. Additionally, since event $\E_\delta^\bandit$ holds, we have that 
	\[
		\hat\rvec^t_\ag - \etavec^{|\A|E}_\delta \preceq  \hat\rvec^t_\ag - \etavec^t_\delta \preceq \rvec_\ag \preceq \hat\rvec^t_\ag + \etavec^t_\delta \preceq \hat\rvec^t_\ag + \etavec^{|\A|E}_\delta,
	\]	
	where $\preceq$ denotes the element-wise $\leq$ relationship.
	Furthermore, notice that the design of the exploration phase guarantees that for each $c\in\C$ and $a\in\A$,
	\[
		\eta^{|\A|E}_\delta[c, a] =\sqrt{\frac{\log(8T|\C||\A|/\delta)}{2E}}. 
	\]
	In the following, with a slight abuse of notation, we denote $\eta^{|\A|E}_\delta = \sqrt{\frac{\log(8T|\C||\A|/\delta)}{2E}}$. Therefore, by \cref{le:lemmautfull} it holds that for all $t\in[|\A|E + 1, T]$
	\[
		|(\xivec^t)^\top \rvec_{\ag} - (\xivec^t)^\top \hat\rvec_{\ag}^t | \leq |\C|\eta^{|\A|E}_\delta, 
	\] and by \cref{le:lemmadevfull}, for all $t\in[|\A|E + 1, T]$ and $\varphi\in\Phi$, it holds that
	\[
		|(\xivec^t)^\top A_{\ag}^\varphi \rvec_{\ag} - (\xivec^t)^\top A_{\ag}^\varphi \hat\rvec^t_{\ag} | \leq |\C|\eta^{|\A|E}_\delta.
	\]
	This allows us to obtain the following chain of inequalities: 
	\begin{align}
		V^T &= \sum_{t\in[T]} \left[\max_{\varphi\in\Phi} \left(\xivec^t\right)^\top\left(\Amat_\ag^\varphi\rvec_\ag
		- \rvec_\ag\right) \right] \notag\\
		&= \sum_{t=1}^{|\A|E} \left[\max_{\varphi\in\Phi} \left(\xivec^t\right)^\top\left(\Amat_\ag^\varphi\rvec_\ag
		- \rvec_\ag\right)\right] + \sum_{t=|\A|E+1}^T \left[\max_{\varphi\in\Phi} \left(\xivec^t\right)^\top\left(\Amat_\ag^\varphi\rvec_\ag
		- \rvec_\ag\right)\right] \notag \\
		&= \sum_{t=|\A|E+1}^T \left[\max_{\varphi\in\Phi} \left(\xivec^t\right)^\top\left(\Amat_\ag^\varphi\rvec_\ag
		- \rvec_\ag\right)\right] \label{eq:eqnodevexpl}\\
		&\leq \sum_{t=|\A|E+1}^T \left[\max_{\varphi\in\Phi} \left(\xivec^t\right)^\top\left(\Amat_\ag^\varphi\hat\rvec_\ag^t
		- \hat\rvec_\ag^t\right) + 2 |\C|\eta_\delta^{|\A|E}\right]\notag\\
		&\leq \sum_{t=|\A|E+1}^T 4|\C|\eta_\delta^{|\A|E}\label{eq:eqinccompiterate} \\
		&= 4T|\C| \sqrt{\frac{\log(8T|\C||\A|/\delta)}{2E}}\notag,
	\end{align}
	
	where \cref{eq:eqnodevexpl} follows from \cref{le:icexpl}, \cref{eq:eqinccompiterate} follows from the fact that $\xivec^t$ is chosen as an optimal solution to $\LP(\nu, \hat\rvec_{\med}^t + \etavec_\delta^t, \hat\rvec_{\ag}^t)$ with $\nu = 2|\C|\eta^{|\A|E}_\delta$. 
	This concludes the proof.
\end{proof}

\lowerbound*
\begin{proof}
	Let us introduce two instances of a game between a receiver $\med$ and a single sender $\ag$. The set of states of nature is composed by three distinct states $\Theta= \left\{\theta_1, \theta_2, \theta_3\right\}$ and the set of signals is $ S = \left\{s_1, s_2, s_3\right\}$. Both instances share the same signaling scheme $\psi$ such that, $\psi(s_1\vert\theta) = \psi(s_2\vert\theta) = 1/2$ and $\psi(s_3\vert\theta)=0$, for $\theta\in\left\{\theta_1,\theta_2\right\}$, while $\psi(s_3\vert\theta_3)=1$ and $\psi(s_1\vert\theta_3) = \psi(s_2\vert\theta_3) = 0$. The set of actions available to the receiver is composed of two actions, namely $\A=\left\{a, b\right\}$. The utility function is specified such that 
	\[
		u_\med(a, \theta) = 
		\begin{cases}
			1 & \text{if } \theta\in\left\{\theta_1, \theta_2\right\} \\
			0 & \text{if } \theta=\theta_3,
		\end{cases} \quad
		u_\med(b, \theta) = 
		\begin{cases}
			0 & \text{if } \theta\in\left\{\theta_1, \theta_2\right\} \\
			1 & \text{if } \theta=\theta_3,
		\end{cases}
	\]
	for the receiver, while
	\[
	u_\ag(a, \theta) = 
	\begin{cases}
		1/2 & \text{if } \theta\in\left\{\theta_1, \theta_2\right\} \\
		0 & \text{if } \theta=\theta_3,
	\end{cases} \quad
	u_\ag(b, \theta) = 
	\begin{cases}
		0 & \text{if } \theta\in\left\{\theta_1, \theta_2\right\} \\
		1 & \text{if } \theta=\theta_3,
	\end{cases}
	\]
	for the sender.
	
	The only difference between the two instances lies in the prior. In particular, given $\varepsilon >0$,
	\[
		\ia = \begin{cases}
			p(\theta_1) = \frac{1}{3} - \varepsilon \\
			p(\theta_2) = \frac{1}{3} + \varepsilon \\
			p(\theta_3) = \frac{1}{3},
		\end{cases}\quad
		\ib = \begin{cases}
			p(\theta_1) = \frac{1}{3} + \varepsilon \\
			p(\theta_2) = \frac{1}{3} - \varepsilon \\
			p(\theta_3) = \frac{1}{3},
		\end{cases}
	\]
	
	It is easy to verify that in instance $\ia$ the optimal IC mechanism $\xivec^\star$ for $\med$ prescribes to select action $a$ when the signal reported are $s_1$ and $s_2$, while it selects action $b$ when the signal reported is $s_3$. Thus, straightforwardly, we can express the cumulative regret $R^T_{\ia}$ in instance  $\ia$ as 
	\begin{equation}
		R^T_{\ia} = \frac{1}{3}\sum_{t\in [T]} \left[\xi^t[s_1, b] + \xi^t[s_2, b] + \xi^t[s_3, a]\right].
	\end{equation}
	Let $\mathbb{P}_{i}$ be the probability measure associated to instance $i\in\left\{\ia,\ib\right\}$. 
	For $\delta\in (0,1)$, let us assume that, under probability measure $\mathbb P_{\ia}$, there exists an algorithm that guarantees $R_{\ia}^T \leq \frac{K}{3}$ with probability at least $1-\delta$, \emph{i.e.,} such that
	\[
		\mathbb{P}_{\ia}\left(\sum_{t\in [T]} \left[\xi^t[s_1, b] + \xi^t[s_2, b] + \xi^t[s_3, a]\right] \leq K\right) \geq 1 - \delta.
	\]
	Then, from the Pinsker's inequality we know that 
	\begin{align}
		\mathbb{P}_{\ib}&\left(\sum_{t\in [T]} \left[\xi^t[s_1, b] + \xi^t[s_2, b] + \xi^t[s_3, a]\right] \leq K\right) \notag \\ &\geq \mathbb{P}_{\ia}\left(\sum_{t\in [T]} \left[\xi^t[s_1, b] + \xi^t[s_2, b] + \xi^t[s_3, a]\right] \leq K\right) -\sqrt{\frac{1}{2}\mc K(\ia, \ib)} \notag \\
		&\geq 1- \delta - \sqrt{\frac{1}{2}\mc K(\ia, \ib)}\label{eq:low3},
	\end{align}
	where $\mc K(\ib, \ia)$ is the Kullback-Leibler divergence between instance $\ib$ and instance $\ia$. 
	
	Let us notice that the two instances are identical, except for what concerns the sender's reward distribution obtained when the signal state sampled is $s_1$ or $s_2$ and the action played is $b$. More in detail, we have that in instance $\ia$ such a reward distribution is a Bernoulli random variable with parameter $\frac{1}{2} + \frac{3}{2}\varepsilon$, while in instance $\ib$ it is a Bernoulli with parameter $\frac{1}{2} - \frac{3}{2}\varepsilon$. Hence, by applying the divergence decomposition theorem \cite{lattimore2020bandit} we obtain the following: 
	\begin{align}
		\mc K\left(\ia, \ib\right) &= \left(\sum_{\substack{t\in [T],\\ s^t = s_1}}\left[\mathbb{E}_{\ia}\left[\xi^t[s_1, b]\right]\right] + \sum_{\substack{t\in [T],\\ s^t = s_2}}\left[\mathbb{E}_{\ia}\left[\xi^t[s_2, b]\right]\right]\right)\mc K \left(\mc B\left(\frac{1}{2} + \frac{3}{2}\varepsilon\right), \mc B\left(\frac{1}{2} - \frac{3}{2}\varepsilon\right) \right) \notag\\
		&\leq
		\sum_{t\in [T]}\mathbb{E}_{\ia}\left[\xi^t[s_1, b] + \xi^t[s_2, b] + \xi^t[s_3, a]\right]\mc K \left(\mc B\left(\frac{1}{2} + \frac{3}{2}\varepsilon\right), \mc B\left(\frac{1}{2} - \frac{3}{2}\varepsilon\right) \right) \notag\\
		&\leq \mathbb{E}_{\ia}\left[ \sum_{t\in [T]}\left[\xi^t[s_1, b] + \xi^t[s_2, b] + \xi^t[s_3, a]\right]\right]\mc K \left(\mc B\left(\frac{1}{2} + \frac{3}{2}\varepsilon\right), \mc B\left(\frac{1}{2} - \frac{3}{2}\varepsilon\right) \right)\notag \\
		&\leq \mathbb{E}_{\ia}\left[ \sum_{t\in [T]}\left[\xi^t[s_1, b] + \xi^t[s_2, b] + \xi^t[s_3, a]\right]\right] 36\varepsilon^2\label{eq:low2}.
	\end{align}
	Additionally, from reverse Markov inequality it follows that 
	\begin{align}
		\mathbb{E}_{\ia}\left[ \sum_{t\in [T]}\left[\xi^t[s_1, b] + \xi^t[s_2, b] + \xi^t[s_3, a]\right]\right] &\leq \left(T -K\right)\mathbb{P}_\ia\left(\sum_{t\in [T]}\left[\xi^t[s_1, b] + \xi^t[s_2, b] + \xi^t[s_3, a]\right] \geq K\right) + K \notag \\
		&\leq \left(T-K\right)\delta + K. \label{eq:low1}
	\end{align}
	Thus, plugging inequalities \eqref{eq:low2} and \eqref{eq:low1} into \Cref{eq:low3}, we obtain
	\begin{equation}
		\mathbb{P}_{\ib}\left(\sum_{t\in [T]} \left[\xi^t[s_1, b] + \xi^t[s_2, b] + \xi^t[s_3, a]\right] \leq K\right) \geq 1-\delta - 3\varepsilon\sqrt{2\left(T-K\right)\delta + 2K}.	\label{eq:low4}
	\end{equation}

	Now let us focus on the formulation of the definition of the cumulative IC deviation $V^T_{\ib}$ for instance $\ib$. Let $\varphi\in\Phi$ be such that $\varphi(s_i) = s_3$, for any $s_i\in S$.   Trivially, by considering the IC deviation computed with respect to $\varphi$ at every round $t\in [T]$, we can provide a lower bound to $V_{\ib}^T$. Formally, 
	\[
		V_{\ib}^T \geq \sum_{t=1}^T U^\varphi_{\ib,\ag}(\xivec^t) - U_{\ib,\ag}(\xivec^t).
	\]
	By simple calculations it is possible to show that 
	\[
		U_{\ib,\ag}(\xivec^t) = \frac{2}{3} + \frac{\varepsilon}{2} \xi^t[s_1, b] + \frac{\varepsilon}{2} \xi^t[s_2, b] - \frac{1}{3} \xi^t[s_3, a],
	\]
	and 
	\[
		 U^\varphi_{\ib,\ag}(\xivec^t) = \frac{2}{3} + \varepsilon - \varepsilon \xi[s_3, a] - \frac{1}{3} \xi[s_3, a].
	\]
	Thus, we can write the following:
	\begin{align*}
		V_{\ib}^T &\geq \sum_{t=1}^T \left[U^\varphi_{\ib,\ag}(\xivec^t) - U_{\ib,\ag}(\xivec^t)\right] \\
		&= \varepsilon\sum_{t=1}^T \left[1 - \frac{1}{2}\xi^t[s_1, b] - \frac{1}{2}\xi^t[s_2, b] - \xi^t[s_3, a]\right] \\
		&\geq \varepsilon\sum_{t=1}^T \left[1 - \xi^t[s_1, b] - \xi^t[s_2, b] - \xi^t[s_3, a]\right] \\
		&= \varepsilon T - \varepsilon \sum_{t=1}^T \left[\xi^t[s_1, b] + \xi^t[s_2, b] + \xi^t[s_3, a]\right],
	\end{align*}
	which, together with \Cref{eq:low4}, gives us that, with probability at least $1-\delta - 3\varepsilon\sqrt{2\left(T-K\right)\delta + 2K}$, 
	\[
		V^T_\ib \geq \varepsilon T - \varepsilon K. 
	\] 
	To recover the desired result, set $\varepsilon = \frac{T^{-\alpha/2}}{100}$ and $K = \frac{T^\alpha}{9}$. Then, assuming without loss of generality $\delta\leq \frac{T^{\alpha - 1}}{9}$, we get 
	\[
		\mathbb{P}_\ib\left(V^T_\ib \geq 	\frac{T^{1-\alpha/2}}{100}\right) \geq 0.98 - \delta.
	\]
	This concludes the proof.
\end{proof}

\end{document}